\keywords{Software Doping,
Conformance Testing,
Hybrid Systems}
\DeclareRobustCommand*\cal{\@fontswitch\relax\mathcal}
\def\plotholder{\colorbox{lightgray}{\parbox{\linewidth}{
\centering
\vspace{1cm}
\Huge Plot Removed

\vspace{2cm}

\Large Enable plotting in main.tex
\vspace{1cm}} }}
\newif\ifhideplots
\newcommand{\showplot}[1]{\ifhideplots \plotholder \else #1 \fi}
\definecolor{darkred}{rgb}{0.7 0 0}
\definecolor{lightred}{rgb}{1 0.2 0}
\definecolor{ired}{rgb}{0.8 0.36 0.36}
\definecolor{darkgreen}{rgb}{0 0.5 0}
\definecolor{darkblue}{rgb}{0 0 0.7}
\definecolor{lightblue}{rgb}{0.390 0.582 0.925}
\definecolor{darkgray}{rgb}{0.3 0.3 0.3}
\definecolor{gray}{rgb}{0.5 0.5 0.5}
\definecolor{lightgray}{rgb}{0.90 0.90 0.90}
\newcommand{\revised}[1]{\textcolor{red}{#1}}
\def\caly{{\cal Y}}
\def\cali{{\cal I}}
\def\calp{{\cal P}}
\def\calt{{\cal T}}
\def\calr{{\cal R}}
\def\cale{{\cal E}}
\def\hybrid{\mathit{H}}
\def\gtt{\mathit{GTT}}
\def\dom{\mathit{dom}}
\newcommand\subi{_\mathit{I}}
\newcommand\subo{_\mathit{O}}
\def\prefix{\ensuremath{\mathsf{Prefix}}}
\def\prefixconf{\ensuremath{\mathsf{PrefConf}}}
\def\conf{\ensuremath{\mathsf{Conf}}}
\def\wit{\ensuremath{\mathsf{Wit}}}
\def\skorc{\textsf{SkorConf}}
\def\tracec{\textsf{TraceConf}}
\def\hybc{\textsf{HybridConf}}
\def\hybpc{\textsf{HybridPermConf}}
\def\hybdc{\textsf{HybridDoubleConf}}
\def\hybconformant{hybrid conformant}
\def\hybcleanname{hybrid-conformance clean }
\def\skorcleanname{Skorokhod-conformance clean }
\def\hybclean{\textsf{HybridClean}}
\def\skorclean{\textsf{SkorClean}}
\def\robclean{\textsf{RobustClean}}
\def\ret{\ensuremath{\mathsf{Ret}}}
\def\reals{\mathbb{R}}
\def\nat{\mathbb{N}}
\def\realsnn{\mathbb{R}_{\geq 0}}
\def\rats{\mathbb{Q}}
\def\ratsnn{\mathbb{Q}_{\geq 0}}
\def\bools{\mathbb{B}}
\def\eqdef{\stackrel{\triangle}{=}}
\definecolor{cadmiumgreen}{rgb}{0.0, 0.42, 0.24}
\definecolor{eyecancerpink}{rgb}{1.0, 0.0, 1.0}
\newcommand{\myomit}[1]{}
\def\te{\tau,\epsilon}
\def\tei{\tau\subi,\epsilon\subi}
\def\dy{d_{\caly}}
\def\prefixwit{\prefix\wit}
\def\idret{(\mathsf{id},\mathsf{id})}
\def\retime{\calr \cale \calt}
\def\synch{\textsf{Sync}}
\newcommand{\rexIstA}{\ensuremath{i_{\mathit{st}}}}
\newcommand{\rexIstB}{\ensuremath{i_{\mathit{st}}}}
\newcommand{\rexIdevA}{\ensuremath{i_{\mathit{dev}}}}
\newcommand{\rexIdevB}{\ensuremath{i_{\mathit{ddev}}}}
\newcommand{\rexOstC}{\ensuremath{o(\rexIstB) }}
\newcommand{\rexOstD}{\ensuremath{o'(\rexIstB) }}
\newcommand{\rexOdevC}{\ensuremath{o(\rexIdevB) }}
\newcommand{\rexOdevD}{\ensuremath{o'(\rexIdevB) }}
\renewcommand{\emptyset}{\varnothing}
\newcommand{\NOgas}{\ensuremath{\mathrm{NO}}}
\newcommand{\NOtwo}{\ensuremath{\mathrm{NO}_2}}
\newcommand{\NOx}{\ensuremath{\mathrm{NO}_x}}
\newcommand{\abs}[1]{\lvert {#1} \rvert}
\newcommand{\NEDC}{{\sc{NEDC}}}
\newcommand{\UDC}{{\sc{UDC}}}
\newcommand{\EUDC}{{\sc{EUDC}}}
\newcommand{\PreCon}{{\sc{PreCon}}}
\newcommand{\SineNEDC}{{\sc{SineNEDC}}}
\newcommand{\PermNEDC}{{\sc{PermNEDC}}}
\newcommand{\DoubleNEDC}{{\sc{DoubleNEDC}}}
\newcommand{\AP}{\textit{AP}\xspace}
\newcommand{\stl}{\textsf{STL}\xspace}
\newcommand{\hyperltl}{\textsf{HyperLTL}\xspace}
\newcommand{\hyperstl}{\textsf{HyperSTL}}
\newcommand{\hyperstlstar}{\textsf{HyperSTL*}\xspace}
\newcommand{\stlstar}{\textsf{STL*}\xspace}
\newcommand{\mtl}{\textsf{MTL}\xspace}
\newcommand{\hypermtl}{\textsf{HyperMTL}\xspace}
\newcommand{\defeq}{\eqdef}
\newcommand{\valuet}{\mathsf{Value}}
\newcommand{\T}{\mathsf{T}}
\newcommand{\F}{\mathsf{F}}
\newcommand{\U}{\mathsf{U}}
\renewcommand{\revised}[1]{#1}
\begin{document}

\title[Conformance and Hyperproperties for Doping Detection in Time and Space]{Conformance Relations and Hyperproperties for \texorpdfstring{\\}{}Doping Detection in Time and Space}

\author[S. Biewer]{Sebastian Biewer\rsuper{a}}
\author[R. Dimitrova]{Rayna Dimitrova\rsuper{b}}
\author[M. Fries]{Michael Fries\rsuper{c}}
\author[M. Gazda]{Maciej Gazda\rsuper{d}}
\author[T. Heinze]{Thomas Heinze\rsuper{c}}
\author[H. Hermanns]{Holger Hermanns\rsuper{a,e}}
\author[M. R. Mousavi]{Mohammad Reza Mousavi\rsuper{f}}

\address{Saarland University --- Computer Science, Saarland Informatics Campus, Saarbr{\"u}cken, Germany}
\address{CISPA Helmholtz Center for Information Security, Saarbr{\"u}cken,  Germany}
\address{Automotive Powertrain Institute, htw saar --- University of Applied Sciences, Saarbrücken, Germany}
\address{Department of Computer Science, University of Sheffield, Sheffield, UK}
\address{Institute of Intelligent Software, Guangzhou, China}
\address{Department of Informatics, King's College London, London, UK}

%% required for running head on odd and even pages, use suitable
%% abbreviations in case of long titles and many authors:

%%%%%%%%%%%%%%%%%%%%%%%%%%%%%%%%%%%%%%%%%%%%%%%%%%%%%%%%%%%%%%%%%%%%%%%%%%%

%% the abstract has to PRECEDE the command \maketitle:
%% be sure not to issue the \maketitle command twice!

\begin{abstract}
We present a novel and generalised notion of doping cleanness for cyber-physical systems that
allows for  perturbing the inputs and observing the perturbed outputs both in the time---and value---domains.
We instantiate our definition using existing notions of conformance for cyber-physical systems.
\revised{As a formal basis for monitoring conformance-based cleanness, we develop the temporal logic \hyperstlstar, an extension of Signal Temporal Logics with trace quantifiers and a freeze operator.}
We show that our generalised definitions are essential in a data-driven method for doping detection and
apply our definitions to a case study concerning diesel emission tests.
\end{abstract}

\maketitle

% TODO: merge in single tex file for submission

\section{Introduction}%
\label{sec:into}
System doping, in our terminology, is an intentional intervention causing a change in the system's normal behaviour against the interests of the user or other stakeholders (such as the society at large). Examples of system doping are widespread and range from vendors' enforcing a monopoly on chargers and spare parts (by checking for and refusing third-party chargers and spare parts, respectively) to tampering with exhaust emission in order to detect and pass emission tests. Doping can be the result of embedding a piece of code or smuggling a piece of electronic circuit into the system and it can be caused by the original developers or by hackers. Software and system doping has been studied in the past couple of years and rigorous theories for it have been developed~\cite{BartheDFH16:isola,DBLP:conf/esop/DArgenioBBFH17,DBLP:conf/cpsweek/BiewerDH18}. These theories were subsequently adopted in order to detect doping, or formally, to check system cleanness~\cite{DBLP:conf/lpar/HermannsBDK18,DBLP:conf/qest/BiewerDH19} (corresponding to the absence of doping).

 In the present paper, we extend the theory of doping to the setting of cyber-physical systems (CPS) by exploiting the notions of conformance testing for CPS~\cite{abbas2014formal,Deshmukh17,khakpour2015notions}.  The existing theories of software doping define doping in terms of drastic deviations in output as a result of minor deviations in input, where the term ``deviation'' refers to differences in validity of propositions or values of variables. However, the current notions come short of properly dealing with the issues of retiming and delays, which are commonly present in the signals of CPS%, and measuring their doping effect
. We observe that this is an essential aspect of detecting doping for cyber-physical  systems: often the traces to be tested for doping have subtly different timing behaviour,
e.g., due to measurement and calibration errors or due to the slight deviations of human actors in acting upon the planned scenarios. The insufficient treatment of retiming and delays can both lead to false negatives, i.e., missing cases of doping, as well as false  positives, i.e., reporting spurious doping cases.

To address these issues, we exploit the notion of conformance to devise a general theory
of being clean from doping
and instantiate that theory with some existing notions of conformance for hybrid
systems. We show how these notions can account for retiming and lead to more precise
notions of cleanness.
Furthermore, we show how the retiming can be synchronized between input and output, leading to a refined notion of cleanness with a rigorous account of the relation between the retiming of input and output.

We illustrate the usefulness of our theory by empirical analysis of diesel engine exhaust emissions in the context of one of the official test cycles, the New European Driving Cycle (NEDC)~\cite{nedc}. In particular, we show that catering for retiming is essential in effectively exploiting the actual driving cycles for performing doping analysis. %due to human errors and hence,
We thus demonstrate that our new theory remedies a major shortcoming in the existing notions from the literature.
To facilitate the presentation, we use  throughout the remainder of this paper the following simple running example, which is inspired by our case study.

\begin{figure}[t]
\centering
\includegraphics[scale=.65, clip]{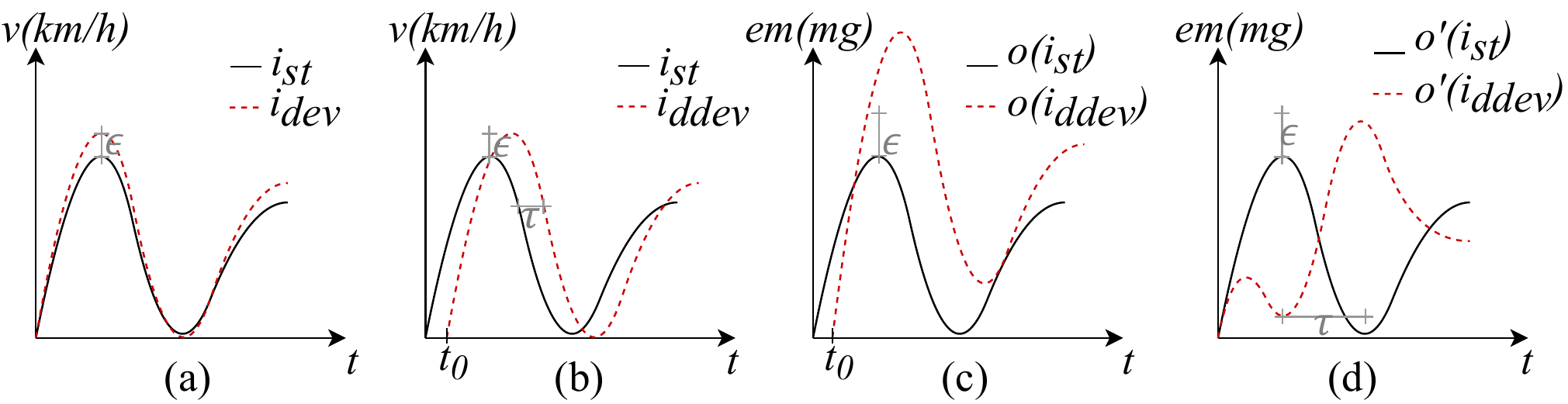}
\vspace{-.5cm}
\caption{Running Example: Specified (a) and Actual (b) Test Cycles and Emission Footprints obtained from Different (Fictitious) Vehicles (c) and (d).}%
\label{fig:running}
\end{figure}%
\begin{exa}\label{ex:intro}
Fig.~\ref{fig:running}.(a) shows two test cycles (evolution of speed over time), designed to detect whether the exhaust emission control of a particular vehicle is doped. The test cycle \rexIstA, depicted with a black solid line, is the standard one prescribed by the (fictitious) official regulation, while test cycle \rexIdevA, depicted by a red dotted line, is a slight deviation thereof. If the exhaust emissions measured during the test cycle \rexIdevA\ turn out to be significantly higher than the ones measured in test cycle \rexIstA, then we can conclude that the exhaust emission system is potentially doped, since it appears tailored to the standard test cycle.

Fig.~\ref{fig:running}.(b) addresses a notorious problem of testing cars: a human tester is supposed to drive the car as just described, however, she can do this only up to a certain imprecision. Assume her driving of \rexIdevA\ exhibits a slight time shift $\tau$ relative to the test cycle, as in \rexIdevB, while \rexIstB\ is being driven as intended.

The result of a test is the emission footprint measured at the exhaust pipe of the car. Fig.~\ref{fig:running}.(c) and Fig.~\ref{fig:running}.(d) show two different possible test results (obtained from different cars) for the scenario in Fig.~\ref{fig:running}.(b). Intuitively, the footprints in Fig.~\ref{fig:running}.(c) provide significant evidence for doping --- a slightly different test cycle has resulted in significantly larger footprint. However, due to the time shift on the input side in Fig.~\ref{fig:running}.(b) the point-wise difference of the two driven test-cycles has grown very large. As we show in the remainder of this paper, the existing theory of doping fails to detect such a clear evidence, due to the minor delay during the execution of the driving cycle. The emission footprint in Fig.~\ref{fig:running}.(d) is another (synthetic) example of a significant deviation which cannot be detected for the input in Fig.~\ref{fig:running}.(b) using existing theories; this latter footprint sheds some light on the intricate design decisions in the theory we develop in this paper.
\end{exa}

The contributions for this paper can be summarized as follows:
\begin{itemize}
\item We define a \emph{general notion of conformance} that can express different ways of comparing execution traces by allowing deviations both in value and in time;
\item We define a general \emph{notion of cleanness for hybrid systems}, and show that it subsumes the existing notion of robust cleanness~\cite{DBLP:conf/esop/DArgenioBBFH17};
\item We define the notion of synchronized retiming, which provides a rigorous tool for relating the retiming of input and output and use it to produce a refined notion of cleanness;
\item We provide a logical account of cleanness (based on the notion of hybrid conformance) by developing a temporal logic,  called \hyperstlstar, that extends Signal Temporal Logic (STL) with a freeze \revised{operator} and quantifiers over traces; and
\item We demonstrate the usefulness of the proposed generic framework by applying it to \emph{software doping tests} in the automotive domain, where we show that the new cleanness definition is able to flag a case of software doping that goes unnoticed when robust cleanness is used.
\end{itemize}

\noindent
This paper substantially extends the theoretical material and the experimental results published in the earlier conference publication~\cite{DimitrovaFORTE2020}. In particular, the following contributions of the present paper are new with respect to the earlier conference publication:
\begin{itemize}
\item the notion of cleanness with synchronised retiming in Section~\ref{sec:synchronized} and its application in Section~\ref{sec:case_study} for doping detection in our case study,
\item the logical approach to cleanness in Section~\ref{sec:timed-hyper}, the introduction of our logical formalism  \hyperstlstar, its monitoring, and its application to specify hybrid conformance,  and
%the application of the \hyperstlstar\ monitoring technique to our case study, and
\item the redesign of our experimental setup to comply with the NEDC test cycle (with preconditioning and control of ambient temperature), as well as several new experiments to make full use of our theories of retiming.  These new experiments led to much more substantial and decisive evidence for our case study.
\end{itemize}

\section{Related Work}%
\label{sec:rel}
The term ``software doping'' was coined around 2015~\cite{tagesschau} in media uncovering the diesel exhaust emissions scandal. An informal problem formulation~\cite{BartheDFH16:isola} pointed out the general phenomenon of intentionally added hidden software behaviour, which is not in the interest of the consumer.
Shortly after, this observation has been complemented by a set of formal \emph{cleanness} definitions~\cite{DBLP:conf/esop/DArgenioBBFH17} laying the theoretical foundations upon which formal methods to detect such software behaviour can be used.
It is possible to detect missing functionality and undesired existing functionality.
The definitions support both sequential programs and nondeterministic reactive programs.
To check satisfaction of the definitions, it is necessary to compare two (or more) execution traces of the same system.
Such properties are called \emph{hyperproperties}~\cite{ClarksonS08} (whereas classical properties are \emph{trace properties}).
Tool support for analysing hyperproperties typically requires high computational effort~\cite{ClarksonFKMRS14:post,FinkbeinerRS15:cav}.
There exist several temporal logics for analysing satisfaction of trace properties of various kinds of systems, one of them being \emph{Linear Temporal Logic} (LTL)~\cite{DBLP:conf/focs/Pnueli77} for systems producing outputs in discrete time steps and properties that do not consider the time passing between outputs.
LTL has been extended to the logic HyperLTL, which can express hyperproperties by allowing explicit quantification of execution traces in front of an LTL formula~\cite{ClarksonFKMRS14:post}.
Tools for model-checking boolean circuits, satisfiability and monitoring of HyperLTL specifications have been developed~\cite{agbo2016,brsibo2017,FinkbeinerRS15:cav,DBLP:conf/concur/FinkbeinerH16,DBLP:conf/cav/FinkbeinerHS17,DBLP:conf/rv/FinkbeinerHST17,DBLP:conf/tacas/FinkbeinerHST18,DBLP:conf/tacas/HahnST19}.

\emph{Signal Temporal Logic} (STL)~\cite{DBLP:conf/formats/MalerN04} is an extension of LTL that adds support for time constraints and real-valued signals.
Tools exist that automatically try to falsify STL formulas~\cite{DBLP:conf/cav/Donze10,DBLP:conf/tacas/AnnpureddyLFS11}.
There has been an extension of STL to HyperSTL in a similar fashion as it was done for HyperLTL~\cite{HyperSTL}.
The syntax of HyperSTL, however, is not able to express the cleanness definitions (for deterministic systems) in a way that allows (efficient) falsification. Freeze Temporal Logic~\cite{AlurH94} introduces a freeze quantifier to ``record'' the moment of time while evaluating a formula and later use the recorded time to measure time lapse. The concept of freeze quantifier  has been used to specify properties of data-dependent systems (using models such as register automata)~\cite{DemriL09} and dynamical systems~\cite{FreezeSTL,Deshmukh17}. In the present paper, we introduce the concept of freeze quantifier~\cite{FreezeSTL} into a logic that is inspired by HyperSTL~\cite{HyperSTL}. The combination turns out to be expressive enough to specify cleanness with respect to hybrid conformance. Moreover, we use a monitoring procedure inspired by an earlier extension of STL with freeze operator~\cite{RobustnessMonitoringFreezeSTL} to check for cleanness with respect to hybrid conformance.
\emph{Robust cleanness} is defined for distance functions on inputs and outputs~\cite{DBLP:conf/esop/DArgenioBBFH17}.
When used with temporal logics the distance functions are restricted to those compatible with the logics.
To be fully independent, robust cleanness analysis has been embedded into the theory of model-based testing~\cite{DBLP:conf/qest/BiewerDH19} with input-output conformance~\cite{Tretmans-thesis,DBLP:journals/cn/Tretmans96}.

Notions of conformance for discrete event systems have been discussed for almost a century. The earliest work on this topic dates back to 1960's when researchers studied model-based testing of digital circuits using  Finite State Machine models~\cite{hennie64,lee96}. Concurrency theory contributed ideas to this field, such as decoupling (i.e., removing the synchronised assumption between) inputs and outputs and observing failures to engage in a communication (and more specifically quiescence)~\cite{DeNicola84,Tretmans-thesis}. A theory of conformance testing for systems with continuous dynamics was developed by Michiel van Osch~\cite{vanOsch2006}; this theory did not gain much popularity in practice, partly because of its insufficient treatment of approximation (e.g., differences in values and retiming). Pappas and Girard~\cite{Girard08,Girard2011} proposed the use of Metric Bisimulation for conformance checking in dynamical systems and Pappas and Fainekos~\cite{Fainekos09} developed a falsification framework for the same purpose. This research led to two notions of conformance used in the present paper, namely hybrid conformance by Abbas and Fainekos~\cite{abbas2014formal} and Skorokhod conformance by Deshmukh, Majumdar, and Prabhu~\cite{Deshmukh17}.

%There, arbitrary distance functions, for the type of trajectories the underlying theory provides, can be used.
%If the distances are computable, real-world doping testing is possible with the provided testing algorithm.
%Variance in time can be introduced via the distance functions, for example by using hybrid closeness~\cite{DBLP:journals/corr/AbbasHFDKU14} or Skorokhod metric~\cite{DBLP:conf/hybrid/Davoren09}\Sebastian{Please check this reference}.
%Computational costs for these metrics become higher with increasing time tolerance\Sebastian{Must check}, which may make them a bottleneck of the testing procedure.
%For hybrid closeness, there can be a gain in efficiency by sticking to model-based testing theory that considers the time variance by design.
%This has been done in earlier work~\cite{DBLP:journals/scp/AraujoCMMS18}.
%The framework can, however, not directly be used for cleanness analysis, because it cannot check if the conformance of the inputs holds.
%Moreover, different conformances for input and output may be used.
%Those combinations must be studied specifically for the cleanness property, which is done in this work.

\section{Preliminaries}%
\label{sec:prelim}
\subsection{Semantic domain.}

In this section, we provide definitions regarding semantic domain, conformance, and robust cleanness.
We begin with the definition of our semantic domain, called generalised timed traces~\cite{Gazda2019}. This definition subsumes both discrete-time state sequences and continuous-time trajectories. A generalised timed trace is a function with a discrete or continuous domain (called time domain) and a co-domain which is a metric space. Intuitively, a generalized timed trace maps each element of its time domain to a state. We require  that the set of possible states is a metric space  since we study conformance notions that compare traces based on the distance between the states of the traces.

\begin{defi}\label{def:gtt}
	Let $(\caly,d_\caly)$ be a metric space. A ${\caly}$-valued \emph{generalised timed trace (GTT)} is a function
	$\mu: \calt \to \caly$ such that $\calt \subseteq \reals_{\geq 0}$.  We call $\calt$ the  \emph{time domain} of $\mu$, denoted  $\dom(\mu)$.
	$\mathit{GTT}(\caly)$ is the set of all ${\caly}$-valued generalised timed traces.
\end{defi}

For a GTT $\mu:\calt \to \caly$ and time $t_0 \in \calt$, by $\mu[\dots t_0]$ we denote the prefix of $\mu$ up to $t_0$, i.e., the restriction $\mu |_{t \in \calt: t\leq t_0}$; likewise, by  $\mu[t_s \dots t_e]$, we shall denote the restriction $\mu |_{t \in \mathcal{T} : t_s \leq t \leq t_e}$

A hybrid system is a mapping from generalised (input) timed traces to sets of generalised (output) \revised{timed} traces.

\begin{defi}%
	\label{def:hybridsystem}
 A ${\caly}$-valued hybrid system is a function $\hybrid : \mathit{GTT}(\caly)  \to$    $ \calp(\mathit{GTT}(\caly))$ such that for all $\mu \in  \mathit{GTT}(\caly)$  and all
	$\mu' \in \hybrid(\mu) $ it holds that $\dom(\mu') =\dom( \mu)$.
	We define  $\mathcal{H}(\caly)$ to be the set of all ${\caly}$-valued hybrid systems.

	In addition, we distinguish deterministic hybrid systems whose output values range over singleton sets only. In what follows, we  identify deterministic hybrid systems with functions of the type
	$ \mathit{GTT}(\caly)  \to \mathit{GTT}(\caly)$.
\end{defi}

For simplicity, we assume that the input and output domain are defined on the same metric spaces. The generalisation to different spaces is straightforward.

\subsection{Conformance relations.}
Recently, a number of notions of conformance for cyber-physical systems have been proposed~\cite{aerts2016mbt,khakpour2015notions}. It turns out that these notions, two of which are quoted below, can provide a rigorous basis for doping detection.

Note that throughout the paper, the variables $\tau$ and $\epsilon$ (with possible subscripts) always range over non-negative real numbers.

\begin{defi}%
	\label{def:confrel}
	 We say that $\caly$-valued GTTs $\mu_1:\calt_1 \to \caly$ and $\mu_2:\calt_2 \to \caly$ are:

	\begin{itemize}
		\item \emph{trace conformant} with tolerance threshold for signal value $\epsilon$, notation $\tracec_{\epsilon}(\mu_1,\mu_2)$, if $\calt_1=\calt_2$ and for all $t \in \calt_1$, $\dy(\mu_1(t),\mu_2(t)) \leq \epsilon$

		\item \emph{\hybconformant} with thresholds $\tau$ and $\epsilon$, denoted $\hybc_{\te}(\mu_1,\mu_2)$, if:
		\begin{itemize}
			\item $\forall t_1 \in \calt_1 \,\exists t_2 \in \calt_2:\,
		|t_2-t_1|\leq \tau \,\wedge\,
		\dy(\mu_2(t_2),\mu_1(t_1))\leq\epsilon$
		\item $\forall t_2 \in \calt_2 \,\exists t_1 \in \calt_1:\, |t_1-t_2|\leq\tau \,\wedge\, \dy(\mu_1(t_1),\mu_2(t_2))\leq\epsilon$
	\end{itemize}
\item \emph{Skorokhod conformant} with tolerance thresholds $\tau$ and $\epsilon$, notation $\skorc_{\te}(\mu_1,\mu_2)$, if $\calt_1$ and $\calt_2$ are intervals and there is a strictly increasing continuous bijection $r:\calt_1 \to \calt_2$ called retiming, such that:
\begin{itemize}
	\item for all $t\in \calt_1$, $|r(t)-t| \leq \tau$, and
	\item for all $t\in \calt_1$, $\dy(\mu_1(t),\mu_2(r(t))) \leq \epsilon$.
\end{itemize}
	\end{itemize}

\end{defi}

\noindent
We show in the proposition below and also in our generalisation results in Section~\ref{subsec:retiming}, that these notions are closely related. However, they also have some fundamental differences, that can be illustrated using the example in Fig.~\ref{fig:running}.

\begin{exa}\label{ex:conformance}
Consider again the
example shown in Fig.~\ref{fig:running}.

We can see that in Fig.~\ref{fig:running}.(a) \rexIstA\ and \rexIdevA\
are trace conformant with value threshold $\epsilon$,
as they only exhibit point-wise deviations by values less than $\epsilon$.
In contrast, \rexIstB\ and \rexIdevB\ in Fig.~\ref{fig:running}.(b)
are not  trace conformant, yet they are hybrid conformant with time and value margins $\tau$ and $\epsilon$, respectively. The key difference is that the
inputs
depicted in
Fig.~\ref{fig:running}.(b) are very different if compared point-wise, but if one allows for retiming, they are close enough in value after retiming.

The outputs \rexOstD\ and \rexOdevD\
in Fig.~\ref{fig:running}.(d) illustrate the fundamental difference between hybrid and Skorokhod conformance: although the order of rising and falling signals are reversed in the two  trajectories, they are still hybrid conformant, because hybrid conformance disregards the order. However, Skorokhod conformance requires an order-preserving retiming, and hence distinguishes these two trajectories. On the other hand, such retiming exists, e.g., for
\rexIstB\ and \rexIdevB\
in Fig.~\ref{fig:running}.(b), witnessing their Skorokhod conformance.
\end{exa}

We shall use the following notation. We write $\conf_1 \sqsubseteq \conf_2$ whenever for all $\mu_1:\calt_1 \to \caly$ and $\mu_2:\calt_2 \to \caly$, we have $
\conf_1(\mu_1,\mu_2)\implies
\conf_2(\mu_1,\mu_2)
$.
We write $\conf_1 \sqsubset \conf_2$ whenever $\conf_1 \sqsubseteq \conf_2$ and
$\neg \, \conf_2 \sqsubseteq \conf_1$.

\begin{prop}%
\label{prop:relationships}
For any $\tau,\epsilon \in \reals_{\geq 0}$, the following relations hold:
\begin{center}
\[
\tracec_{\epsilon}\sqsubset \skorc_{\te}\sqsubset
\hybc_{\te}
\]
\end{center}
\end{prop}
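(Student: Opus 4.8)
The plan is to prove the chain of strict inclusions $\tracec_{\epsilon}\sqsubset \skorc_{\te}\sqsubset \hybc_{\te}$ by establishing each of the four required facts separately: the two containments $\tracec_{\epsilon}\sqsubseteq \skorc_{\te}$ and $\skorc_{\te}\sqsubseteq \hybc_{\te}$, and then the two strictness witnesses showing the reverse containments fail.

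For $\tracec_{\epsilon}\sqsubseteq \skorc_{\te}$, suppose $\tracec_{\epsilon}(\mu_1,\mu_2)$ holds. Then $\calt_1=\calt_2$ and the value distances are bounded by $\epsilon$ pointwise. I would exhibit the identity retiming $r=\mathrm{id}:\calt_1\to\calt_2$ as the witness for Skorokhod conformance: it is a strictly increasing continuous bijection, it satisfies $|r(t)-t|=0\leq\tau$, and $\dy(\mu_1(t),\mu_2(r(t)))=\dy(\mu_1(t),\mu_2(t))\leq\epsilon$. One small caveat to address is that the definition of Skorokhod conformance requires $\calt_1$ and $\calt_2$ to be intervals; I would note that the interesting instances (and the examples in Fig.~\ref{fig:running}) are of this form, or alternatively observe that the identity is a valid retiming on any common domain, so the argument goes through on the relevant class of GTTs. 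For $\skorc_{\te}\sqsubseteq \hybc_{\te}$, suppose a retiming $r:\calt_1\to\calt_2$ witnesses Skorokhod conformance. For the first hybrid clause, given $t_1\in\calt_1$ I would take $t_2=r(t_1)$; then $|t_2-t_1|=|r(t_1)-t_1|\leq\tau$ and $\dy(\mu_2(r(t_1)),\mu_1(t_1))\leq\epsilon$ directly. For the second clause, given $t_2\in\calt_2$, since $r$ is a bijection I would take $t_1=r^{-1}(t_2)$, so that $r(t_1)=t_2$ and the same two bounds apply with the arguments of $\dy$ swapped (using symmetry of the metric). Thus $\hybc_{\te}(\mu_1,\mu_2)$ holds.

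For the strictness, I need two counterexamples, each most cleanly drawn from the running example. To see $\skorc_{\te}\not\sqsubseteq\tracec_{\epsilon}$, I would use the inputs \rexIstB\ and \rexIdevB\ in Fig.~\ref{fig:running}.(b): as noted in Example~\ref{ex:conformance} they admit an order-preserving retiming (hence are Skorokhod conformant) but differ too much pointwise to be trace conformant, since trace conformance forces $\calt_1=\calt_2$ and a genuine nonzero time shift $\tau$ destroys the pointwise bound. To see $\hybc_{\te}\not\sqsubseteq\skorc_{\te}$, I would use the outputs \rexOstD\ and \rexOdevD\ in Fig.~\ref{fig:running}.(d): as Example~\ref{ex:conformance} explains, their rising and falling order is reversed, so no order-preserving (strictly increasing) retiming can match them within the thresholds, yet they are hybrid conformant because the hybrid clauses quantify existentially and independently over time points and thus disregard order.

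I expect the main obstacle to be not the constructions themselves but making the two strictness arguments rigorous rather than merely pictorial. The containment directions are essentially bookkeeping with the identity and the inverse retiming. The delicate part is verifying that the Fig.~\ref{fig:running}.(d) outputs truly cannot be Skorokhod-conformant: one must argue that \emph{every} strictly increasing continuous bijection $r$ with $|r(t)-t|\leq\tau$ fails the value bound somewhere, which amounts to showing that monotonicity of $r$ together with the reversed signal shapes forces a point where $\dy(\mu_1(t),\mu_2(r(t)))>\epsilon$. If a fully formal treatment is desired I would replace the figure with explicit piecewise-defined signals (e.g.\ a pulse that rises then falls versus one that falls then rises) and derive the contradiction from the intermediate-value behaviour of $r$; otherwise I would appeal directly to the discussion in Example~\ref{ex:conformance} as sufficient justification for the separating instances.
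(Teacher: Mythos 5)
The paper states this proposition without any proof, so there is nothing to compare against on that side; reviewing your argument on its own merits: the two containment halves are correct and are the expected argument. For $\tracec_{\epsilon}\sqsubseteq\skorc_{\te}$ the identity retiming works exactly as you say, and you rightly flag the one real wrinkle, namely that Definition~\ref{def:confrel} requires $\calt_1,\calt_2$ to be intervals, so on a common non-interval domain (e.g.\ $\calt_1=\calt_2\subseteq\nat$) trace conformance holds while Skorokhod conformance is literally undefined/false; the proposition is implicitly read over interval-domain GTTs, and your caveat handles this appropriately. For $\skorc_{\te}\sqsubseteq\hybc_{\te}$, instantiating $t_2=r(t_1)$ and $t_1=r^{-1}(t_2)$ discharges both hybrid clauses (note the second clause needs no metric symmetry at all, since $\dy(\mu_1(r^{-1}(t_2)),\mu_2(t_2))$ is exactly the Skorokhod bound at $t=r^{-1}(t_2)$). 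Your plan for making the $\hybc\not\sqsubseteq\skorc$ witness rigorous — explicit reversed pulses within $\tau$ of each other, plus a monotonicity contradiction ($r(a)\geq$ the later pulse forces $r(a+\delta)>r(a)$ to miss the earlier one) — is the right mechanism, and you correctly identify it as the only genuinely delicate step.

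The one gap you do not flag: your strictness witnesses silently assume $\tau>0$, yet the proposition is quantified over \emph{all} $\tau\in\reals_{\geq 0}$. At $\tau=0$ both strict inclusions fail: a Skorokhod retiming with $|r(t)-t|\leq 0$ is forced to be the identity, so $\skorc_{0,\epsilon}$ collapses to trace conformance on interval domains, and likewise the hybrid clauses with $|t_2-t_1|\leq 0$ force $t_2=t_1$, so $\hybc_{0,\epsilon}$ also coincides with $\tracec_{\epsilon}$. This is in fact consistent with the paper's own later corollary that $\robclean(\epsilon_I,\epsilon_O)=\skorclean(0,\epsilon_I,0,\epsilon_O)=\hybclean(0,\epsilon_I,0,\epsilon_O)$. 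So the statement as literally quantified is too strong, and no proof can establish strictness at $\tau=0$; your write-up should either restrict the strictness claims to $\tau>0$ or note explicitly that the shift/disorder counterexamples require a nonzero time budget (both your separating examples use a genuine shift $\tau>0$, so they degenerate at $\tau=0$). With that restriction stated, your proof is complete.
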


\subsection{Robust cleanness.}
We shall now state the original definition of robust cleanness from~\cite{DBLP:conf/esop/DArgenioBBFH17}, adapted to our framework of hybrid systems. It is based on Definition 7 and Proposition 19 from~\cite{DBLP:conf/esop/DArgenioBBFH17}; the phrasing below abstracts from the so-called parameters of interest
and standard inputs. Moreover it is cast in the setting of generalised timed traces rather than discrete-step programs, and stated using trace conformance with different thresholds for inputs and outputs, $\kappa_I$ and $\kappa_O$.

Intuitively, a hybrid system is robustly clean if for every pair of input prefixes on which no difference in the inputs exceeding $\kappa_I$ has occurred so far (i.e., all sub-prefixes are trace conformant), the corresponding sets of output prefixes are also conformant with respect to $\kappa_O$. As we consider nondeterministic systems, Hausdorff distance is used to compare sets of outputs (see~\cite{DBLP:conf/esop/DArgenioBBFH17} for details).

\begin{defi}\label{def:robust_clean}
	A hybrid system $H$ is robustly clean, denoted\\ $\robclean(\kappa_I,\kappa_O)$, whenever:	\\
	$\begin{array}{l}
	\forall i_1, i_2 \in \gtt(\caly):
	\forall t\in\dom(i_1) \cup \dom(i_2):
	\\
	\big(\forall t' \leq t:\,
		\tracec_{\kappa_I}(i_1[\dots t'], i_2[\dots t'])
	\,\implies\,\\
%	\phantom{\forall t\in\dom(i_1) \cup \dom(i_2):  }
	\quad\ \big((\forall o_1\in\hybrid(i_1) \, \exists o_2 \in \hybrid(i_2):\,\,
		\tracec_{\kappa_O}(o_1[\dots t], o_2[\dots t]))\;\wedge \\
%	\phantom{\forall t\in\dom(i_1) \cup \dom(i_2):  }
	\quad\ \ (\forall o_2\in\hybrid(i_2) \, \exists o_1 \in \hybrid(i_1):\,\,
	\tracec_{\kappa_O}(o_1[\dots t], o_2[\dots t]))\big)
	\end{array}
	$

\end{defi}

Note that in the above definition we do not require that $dom(i_1) = \dom(i_2)$. In practice, robust cleanness is typically applied to pairs of traces that are both defined over $\mathbb N$. Here, however, for the sake of generality we impose no such restriction. In particular, when the time domains of two traces are different, for example disjoint,  the predicate $\robclean$ will trivially evaluate to $\mathit{true}$.

\begin{exa}\label{example:robust-clean}
Consider the traces depicted in Fig.~\ref{fig:running}. The input prefixes  \rexIstB\ and \rexIdevB\ are given in Fig.~\ref{fig:running}.(b), and the corresponding pair of outputs is shown in  Fig.~\ref{fig:running}.(c).
\revised{For $t \leq t_0$,  we have $\rexIdevB(t)=0$ and $\rexOdevC(t)=0$.}
The trace \rexIstB\ results in output \rexOstC\ and \rexIdevB\ results in \rexOdevC.
\revised{Suppose that  $\epsilon < |\rexIstB(t_0) - \rexIdevB(t_0)|$,  and for every $t < t_0$ it holds that $|\rexOstC(t) - \rexOdevC(t)| \leq |\rexIstB(t) - \rexIdevB(t)|$. % chktex 8
Furthermore,   $\epsilon < |\rexOstC(t_1) - \rexOdevC(t_1)|$ at some time $t_1 \geq t_0$. % chktex 8
Consider Def.~\ref{def:robust_clean} instantiated with $\kappa_I = \kappa_O = \epsilon$.
Clearly,  for every $t < t_0$ the implication in  Def.~\ref{def:robust_clean} is satisfied, since
$|\rexOstC(t') - \rexOdevC(t')| \leq |\rexIstB(t') - \rexIdevB(t')|$ for all $t' \leq t < t_0$,  and % chktex 8
 $\kappa_I = \kappa_O$.
For $t \geq t_0$ the implication is true as well:
for $t' < t_0$ the reasoning is as above,  and
for all $t' \geq t_0$ the left-hand side of the implication is false.
Hence, regardless of the difference in the output values at $t_1$, this pair of inputs satisfies the condition of $\robclean(\epsilon,\epsilon)$, and, if these are the only traces in a hybrid system $H$ then we can conclude that $H$ is $\robclean(\epsilon,\epsilon)$.
}
\end{exa}

\section{Conformance-Based Cleanness}%
\label{sec:definition}
We now define a general notion of conformance-based cleanness and provide two instantiations based on the conformance notions defined in the previous section.

\subsection{Motivation}%
\label{subsec:def-motivation}
The need for considering disturbance in time as well as in value is motivated by our running example from Fig.~\ref{fig:running}.
One of the challenges in performing doping tests for cyber-physical systems is that in such systems timing is rarely perfectly precise, due to imprecision in measurements, or caused by the interaction with the physical world. As illustrated in Example~\ref{ex:intro}, for instance, when checking for software doping in a car~\cite{DBLP:conf/qest/BiewerDH19}, the input to the system is the value of the car's speed over time, which is under the control of a driver, and can thus vary from one execution to the other, even if the driver is trying to execute the same input sequence. Clearly, those variations can be in value, as well as in time.

\begin{exa}\label{ex:input-delay}
  Consider the test setup sketched in Fig.~\ref{fig:running}. There, \rexIstB\ and \rexIdevB, depicted in Fig.~\ref{fig:running}.(b) define the speed of a car as a function of time. These two input sequences follow a trajectory of values differing by a small margin $\epsilon$ (the difference in value allowed by the standard defining the doping tests), but also shifted by a small unit of time $\tau$. Observe further that $|\rexIstB(t_0) - \rexIdevB(t_0)| \gg \epsilon$. Thus, without allowing for deviations in time when comparing these input sequences, they will be considered sufficiently different, and as a result their respective exhaust emission outputs will fall out of the comparison when checking for doping according to Def.~\ref{def:robust_clean}, even if the % chktex 8
  outputs $\hybrid(\rexIstB(t))$ and $\hybrid(\rexIdevB(t))$ are vastly different, as depicted in Fig.~\ref{fig:running}.(c). This results in a false negative, i.e., failing to detect a clearly doped system.
\end{exa}

In the above example, we demonstrated that not accounting for timing disturbances when relating input trajectories can result in false negatives in doping detection. Dually, using the traditional comparison for output traces can result in false positives by requiring overly strict matching of outputs.

The above  example motivates the need to
account for timing deviations in trajectories. Intuitively, for input trajectories this relaxation results in considering more traces as conforming, and thus enforcing more comparisons when checking if a system is clean. For output trajectories this means relaxing the conformance requirement by considering two output sequences as conforming even if their values are not perfectly aligned in time.
Furthermore, different types of timing deviations need to be considered in different scenarios, for example, depending on whether the order in which values occur is important or not.

\begin{exa}\label{ex:output-skorokhod}
Consider the testing workflow from Example~\ref{ex:intro} and Fig.~\ref{fig:running}, where inputs \rexIstB\ and \rexIdevB\ are passed to a car.
In the second experiment, depicted in Fig.~\ref{fig:running}.(d), \revised{the outputs of the car are} \rexOstD\ and \rexOdevD, which are hybrid conformant for $\epsilon$ and $\tau$. Hence,  this observation of the system is classified as clean under hybrid output conformance.
However, the output \rexOdevD\ is clearly suspicious, as the values in \rexOdevD\ and \rexOstD\  are reversed.
This motivates considering conformance notions that require retimings to be order-preserving. Indeed, using Skorokhod conformance we can detect that the system is doped.
\end{exa}

The above examples show that in order to be useful in a diverse set of applications, a software cleanness  theory should  allow for using a variety of conformance notions. To this end, we next take a more general view on conformance notions, in order to be able to develop a generic conformance-based cleanness framework.

\subsection{Retimings and a more generic view on conformance notions}%
\label{subsec:retiming}
So far, we have defined three specific notions of conformance which either coincide, or are closely inspired by ones that have appeared in the literature. In order to define a general framework for cleanness, we also wish to treat notions of conformance in a more generic manner. To this end, we propose an abstract definition of conformance predicates. As conformance predicates admit variations in time, as well as in value,  our definition is based on \emph{retimings}, a device that will play a key role in the context of this work.
In its general form a retiming is a pair of functions between two time domains. Intuitively, given two GTTs, a retiming will define a mapping from points in each of the traces to points in the other trace. Note that in general the mappings are not required to be injective; this way we can cater for notions of conformance allowing for the so-called local disorder phenomenon (in particular hybrid conformance --- see Proposition~\ref{prop:confviaret}).

\begin{defi}
	A \emph{retiming} is a pair of functions between two time domains, i.e., a pair of the form $(r_1,r_2)$, where $r_1: \calt_1 \to \calt_2$ and $r_2: \calt_2 \to \calt_1$,  with time domains  $ \calt_1,\calt_2 \subseteq \reals_{\geq 0}$. %,  where
Given two time domains $\calt_1$ and $\calt_2$, we denote the set of all retimings between
$\calt_1$ and $\calt_2$
with $\retime(\calt_1,\calt_2)$.
\end{defi}

Retiming is explicitly  present in the definition of Skorokhod conformance; there, each Skorokhod retiming is required to be a strictly increasing continuous bijection. We can express a Skorokhod retiming $r$ as an instance of  our definition as the pair $(r,r^{-1})$. In fact, one can also define hybrid conformance, as well as a whole class of conformance notions, using a suitable \emph{family} of retimings.

A family of retimings $\ret$ can be further constrained by $\tau$ to a subset $\ret_\tau$ of $\ret$ containing only functions that shift time by at most $\tau$ time units. In order to use a family of retimings for concrete sequences $\mu_1$ and $\mu_2$, it is necessary to consider only functions that match the domains of the sequences. This leads to a generic notion of conformance  associated with a given family of retimings $\ret$, a given time threshold $\tau$ and a given value threshold $\epsilon$.

\begin{defi}%
	\label{def:ret-induced-conf}
	Let $\ret$ be a family of retimings, and let \[
	\begin{array}{lll}
	\ret_{\tau} &\!\eqdef\!&
	\{(r_1,r_2) \in \ret \,\,|\,\,
	%	\\& &
	\forall t \in \dom(r_i):\, \,|r_i(t)-t| \leq \tau \,\,\,(i=1,2)
	\},	\\
	\ret_{\tau}(\calt_1,\calt_2) &\!\eqdef\!&
\ret_{\tau} \cap \retime(\calt_1,\calt_2).
\end{array}
\]
	A \emph{conformance notion} with time threshold $\tau$ and value threshold $\epsilon$ induced by $\ret$ is a predicate $\conf^{\,\ret}_{\tau,\epsilon}$ on pairs of GTTs such that, for
	$\mu_1:\calt_1 \to \caly$, 	$\mu_2:\calt_2 \to \caly$:
	\[
	\begin{array}{rl}

	\conf^{\,\ret}_{\tau,\epsilon}(\mu_1,\mu_2) \iff
	 \exists (r_1,r_2) \in \ret_{\tau}(\calt_1,\calt_2): &
	 	\forall t \in \calt_1:\, \dy(\mu_1(t),\mu_2 \circ r_1(t)) \leq \epsilon
	\\	\,\wedge\, &
	\forall t \in \calt_2:\,  \dy(\mu_2(t),\mu_1 \circ r_2(t)) \leq \epsilon.
	\end{array}
	\]

	A conformance notion with \emph{unbounded time deviation}  and value threshold $\epsilon$ induced by $\ret$ is a predicate $\conf^{\,\ret}_{\infty,\epsilon}$ on pairs of GTTs such that, for
	$\mu_1:\calt_1 \to \caly$, 	$\mu_2:\calt_2 \to \caly$:
	\[
	\begin{array}{rl}
	\conf^{\,\ret}_{\infty,\epsilon}(\mu_1,\mu_2) \iff
	 \exists (r_1,r_2) \in \ret(\calt_1,\calt_2): &
	 	\forall t \in \calt_1:\, \dy(\mu_1(t),\mu_2 \circ r_1(t)) \leq \epsilon
	\\	\,\wedge\, &
	\forall t \in \calt_2:\,  \dy(\mu_2(t),\mu_1 \circ r_2(t)) \leq \epsilon.
	\end{array}
	\]
\end{defi}

Unless we state explicitly otherwise, we consider conformance notions with finite time threshold $\tau \in \realsnn$.
Using the above definition, we can easily express the specific notions of conformance defined in the previous section by selecting a suitable family of retimings.

\begin{prop}%
	\label{prop:confviaret}
	The conformance predicates below coincide with the notions of conformance induced by the corresponding families of retimings:

	\begin{itemize}
		\item $\tracec_{\epsilon}$ is induced by the family of retimings containing only identity functions:
		$\ret_{\mathsf{id}} = \{\idret \mid \mathsf{id} : \calt \to \calt \text{is the identity on some }\calt \subseteq \mathbb{R}_{\geq 0}\}$.
		\item $\skorc_{\tau,\epsilon}$ is induced by the family of retimings\\
		$\ret = \{(r,r^{-1}) \mid r \textrm{ is a strictly increasing continuous bijection} \myomit{and \max((r,r^{-1})) \leq \tau}\}$.
		\item $\hybc_{\tau,\epsilon}$ is induced by pairs of arbitrary functions.
	\end{itemize}
\end{prop}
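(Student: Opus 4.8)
The goal is to prove Proposition~\ref{prop:confviaret}, which asserts that the three conformance predicates from Definition~\ref{def:confrel} coincide with the generic retiming-induced conformance notions of Definition~\ref{def:ret-induced-conf} when instantiated with the three stated families. For each of the three bullet points, the strategy is the same: unfold both definitions and show the two predicates are logically equivalent by exhibiting the appropriate retiming in each direction. Let me sketch each case.

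For $\tracec_\epsilon$, the family is $\ret_{\mathsf{id}}$, so the only admissible retiming between $\calt_1$ and $\calt_2$ is $(\mathsf{id},\mathsf{id})$, which exists in $\ret_{\mathsf{id}}(\calt_1,\calt_2)$ exactly when $\calt_1 = \calt_2$. When $\calt_1 = \calt_2 = \calt$, the identity trivially satisfies $|t - t| = 0 \le \tau$, so the $\tau$-bound is vacuous, and the two conjuncts $\forall t:\, \dy(\mu_1(t),\mu_2(t)) \le \epsilon$ collapse to the single condition defining $\tracec_\epsilon$. When $\calt_1 \ne \calt_2$, no identity retiming between them exists, matching the $\calt_1 = \calt_2$ requirement built into $\tracec_\epsilon$. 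So the equivalence is immediate once the domains are handled.

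For $\skorc_{\tau,\epsilon}$, given a Skorokhod retiming $r:\calt_1 \to \calt_2$ (strictly increasing continuous bijection) with $|r(t)-t|\le\tau$, I would set the induced retiming pair to $(r, r^{-1})$. The key observations are that $r$ being a bijection makes $r^{-1}$ well-defined and that $|r^{-1}(s) - s| \le \tau$ follows from $|r(t) - t| \le \tau$ via the substitution $s = r(t)$; hence $(r,r^{-1}) \in \ret_\tau(\calt_1,\calt_2)$. The first value condition $\dy(\mu_1(t),\mu_2(r(t)))\le\epsilon$ is exactly the Skorokhod value condition, and the second condition $\dy(\mu_2(s),\mu_1(r^{-1}(s)))\le\epsilon$ follows by applying the first at $t = r^{-1}(s)$. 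Conversely, any retiming pair from this family supplies a strictly increasing continuous bijection $r$ directly as a Skorokhod retiming, and the $\tau$- and $\epsilon$-conditions transfer back symmetrically. The one point requiring care is that Definition~\ref{def:confrel} requires $\calt_1,\calt_2$ to be \emph{intervals} for Skorokhod conformance to be defined at all; I would note that this interval assumption is the implicit domain restriction under which the stated family is considered, so the comparison is between predicates on the same class of inputs.

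For $\hybc_{\tau,\epsilon}$, the family consists of all pairs of arbitrary functions satisfying the $\tau$-bound, so a retiming $(r_1,r_2)$ is any pair with $r_1:\calt_1\to\calt_2$, $r_2:\calt_2\to\calt_1$, $|r_1(t)-t|\le\tau$, and $|r_2(t)-t|\le\tau$. Here the forward direction is the main obstacle: the hybrid conformance definition gives, for each $t_1$, the \emph{existence} of a witnessing $t_2$ with $|t_2-t_1|\le\tau$ and $\dy(\mu_2(t_2),\mu_1(t_1))\le\epsilon$, but to build a retiming function $r_1$ I must \emph{choose} one such $t_2$ for every $t_1$ simultaneously. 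This is precisely where the earlier remark about non-injectivity pays off — the choice function need not be injective or monotone, so I define $r_1(t_1)$ to be any such witness (invoking choice) and symmetrically $r_2$ from the second hybrid clause; no continuity or order constraint must be checked, which is exactly why arbitrary functions are the right family. The converse is routine: from a retiming pair one reads off witnesses $t_2 = r_1(t_1)$ and $t_1 = r_2(t_2)$ satisfying both hybrid clauses. Thus the only genuinely non-trivial step across the whole proposition is recognizing that the $\forall\exists$ structure of hybrid conformance corresponds exactly to the existence of an arbitrary (choice-function) retiming, and I expect to spend most of the argument making that correspondence precise while the other two cases reduce to unwinding definitions.
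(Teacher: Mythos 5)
Your proof is correct, and since the paper states Proposition~\ref{prop:confviaret} without an explicit proof (treating it as immediate from Definitions~\ref{def:confrel} and~\ref{def:ret-induced-conf}), your definition-unfolding argument is exactly the intended one. You also handle the two genuinely delicate points properly: the interval-domain caveat for $\skorc_{\tau,\epsilon}$ (the induced predicate and the Skorokhod predicate are compared on the class of traces where the latter is defined), and the use of a choice function to turn the $\forall\exists$ clauses of $\hybc_{\tau,\epsilon}$ into the pair of arbitrary, possibly non-injective retiming functions.
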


\noindent
Definition~\ref{def:ret-induced-conf} also enables us to define other notions  of conformance, such as, for instance a ``shift conformance'', which, intuitively, shifts all time points by a given constant $c \in\reals$, i.e.,
$\ret_c = \{(r,r^{-1}) \mid \myomit{r : \realsnn \to \realsnn}\, r(t)=t+c \}$.

Next, we define a generic notion of cleanness, parametrised by conformance predicates for the input and for the output traces.  Instantiating these predicates with existing or new conformance notions, yields different conformance-based notions of cleanness that can capture a variety of cleanness specifications.

\subsection{Definition of Conformance-based Cleanness}\label{subsec:defs}
We now extend the notion of robust cleanness~\cite{DBLP:conf/esop/DArgenioBBFH17} to allow for ``small'' variations in time, in addition to the variations in value. To this end, the new notion makes use of two conformance predicates, one that postulates when two input traces should be considered close enough, and another one that specifies when two output traces are close enough.

Our starting point, the notion of robust cleanness in Definition~\ref{def:robust_clean}, is based on comparison of matching prefixes of a pair of input traces and the corresponding prefixes of the associated output traces. As we now want to accommodate for distance in time, we (1) compare prefixes using a conformance relation, and (2) allow for variation in the length of the compared prefixes that is within the corresponding time-distance threshold. More precisely,  when comparing two prefixes, we allow for discarding start and end segments of length at most $\tau$.

This intuition is formalized by the predicate $\prefixconf$ for relaxed comparison of GTT prefixes using a notion of conformance $\conf$ with tolerance threshold $\tau$ for time disturbance. We use cascaded notation to define $\prefixconf$ as a higher-order function taking $\conf$ as its first argument.  The predicate $\prefixconf$ compares two prefixes $\mu_1$ and $\mu_2$ by requiring that there exist traces  $\mu_1[t_1^s\dots t_1^e]$ and $ \mu_2[t_2^s\dots t_2^e]$  obtained from them, that are conformant with respect to $\conf$. These traces are obtained by possibly removing a sub-prefix of length at most $\tau$, and/or removing extending with a suffix of length at most $\tau$.

\begin{defi}\label{def:prefixconf}
Let $\conf$ be a  notion of conformance on GTTs with tolerance threshold $\tau \in \realsnn$ for time disturbance. For any pair of GTTs $\mu_1:\calt_1 \to \caly$, $\mu_2:\calt_2 \to \caly$, and $t \in \calt=\calt_1 \cup \calt_2$,  the predicate $\prefixconf$ is defined as:
 \[
\begin{array}{lll}
 \prefixconf (\mu_1,\mu_2,t) &\!\!\!\!\!\ \iff\ \!\!\!\!\!\!&
%\prefix(\conf,\mu_1,\mu_2,t) &\,\iff\,&
\exists t_1^s \in\! [0,\tau] \cap \calt_1,
	\exists t_1^e \in\!
	[t-\tau,t+\tau] \cap \calt_1,\\
	& & \exists t_2^s  \in\! [0,\tau] \cap \calt_2, \exists t_2^e \in\!
	[t-\tau,t+\tau] \cap \calt_2\!\!: \\
	& &
	\conf(\mu_1[t_1^s\dots t_1^e], \mu_2[t_2^s\dots t_2^e]).
\end{array}
\]
\end{defi}

For conformance notions with unbounded timing deviation $\prefixconf$ coincides with $\conf$.

The predicate $\prefixconf$ provides a generic notion of prefix-con\-for\-mance. By instantiating it with conformance relations $\conf\subi$ and $\conf\subo$ for input and output traces respectively, we define the notion of $(\conf\subi,\conf\subo)$-cleanness.

For deterministic systems $(\conf\subi,\conf\subo)$-cleanness requires that for all pairs of input prefixes for which all sub-prefixes are prefix-conformant w.r.t.\ $\conf\subi$, the corresponding pair of output prefixes are prefix-conformant w.r.t.\ $\conf\subo$.

\begin{defi}%
	\label{def:confclean-det}
	A deterministic system $\hybrid$ is $(\conf\subi,\conf\subo)$-clean if
	\[\begin{array}{lll}
	\forall i_1, i_2 \in & \gtt(\caly) :\,
	\forall t\in\dom(i_1) \cup \dom(i_2): & \\
	&	(\forall t' \leq t:\,
	\prefixconf\subi(i_1, i_2,t')\big)
	\,\implies\,%\\&
\prefixconf\subo(\hybrid(i_1), \hybrid(i_2),t). &
	\end{array}
	\]
\end{defi}

\noindent The above definition naturally generalises to nondeterministic hybrid systems, by comparing  sets of possible output prefixes using Hausdorff distance as in~\cite{DBLP:conf/esop/DArgenioBBFH17}.

\begin{defi}%
\label{def:confclean-nondet}

	A system $\hybrid$ is $(\conf\subi,\conf\subo)$-clean if
	\[\begin{array}{l}
	\forall i_1, i_2 \in \gtt(\caly):
	\forall t\in\dom(i_1) \cup \dom(i_2): \\ \big(\forall t' \leq t:\,
	\prefixconf\subi(i_1, i_2,t')\big)
	\,\implies\,\\
%	\phantom{\forall t\in\dom(i_1) \cup \dom(i_2):  }
	\qquad\ \big((\forall o_1\in\hybrid(i_1) \, \exists o_2 \in \hybrid(i_2):\,\,
	\prefixconf\subo(o_1, o_2,t ))\;\wedge \\
	%\phantom{\forall t\in\dom(i_1) \cup \dom(i_2):  }
	\qquad\ \ (\forall o_2\in\hybrid(i_2) \, \exists o_1 \in \hybrid(i_1):\,\,
	\prefixconf\subo(o_1, o_2,t ))\big).
	\end{array}
	\]
\end{defi}

Robust cleanness~\cite{DBLP:conf/esop/DArgenioBBFH17} can be now formulated as conformance-based cleanness, which establishes that  $(\conf\subi,\conf\subo)$-cleanness is a generalisation. Using hybrid conformance, we define hybrid-conformance cleanness, and similarly, plugging in Skorokhod conformance, we define Skorokhod-conformance cleanness. Formally:
\begin{itemize}
\item A hybrid system $H$ is robustly clean, denoted  $\robclean(\kappa_I,\kappa_O)$, if and only if  $H$ is $(\tracec_{\kappa_I},\tracec_{\kappa_O})$-clean.
\item A  hybrid system $H$ is \emph{\hybcleanname with conformance thresholds $(\tau_I,\epsilon_I,$ $\tau_O,\epsilon_O)$}, which we denote by $\hybclean(\tau_I,\epsilon_I,\tau_O,\epsilon_O)$, if and only if $H$ is $(\hybc_{\tau_I,\epsilon_I},$  $\hybc_{\tau_O,\epsilon_O})$-clean. % chktex 40
\item A  hybrid system $H$ is \emph{\skorcleanname with conformance thresholds $(\tau_I,\epsilon_I,\tau_O,\epsilon_O)$}, denoted $\skorclean(\tau_I,\epsilon_I,\tau_O,\epsilon_O)$, if and only if $H$ is $(\skorc_{\tau_I,\epsilon_I},$ $\skorc_{\tau_O,\epsilon_O})$-clean. % chktex 40
\end{itemize}

\subsection{Properties}
We will now
establish some key relations between the cleanness notions defined previously. We begin by lifting the implication between conformance relations to implication between cleanness notions defined using those relations.
\begin{prop}
Suppose that $\conf^{\,1}_I \sqsupseteq \conf^{\,2}_I$ and $\conf^{\,1}_O \sqsubseteq \conf^{\,2}_O$. Then for any system $H$:
%\begin{center}
$H$ is $(\conf^{\,1}_I,\conf^{\,1}_O)$-clean $\,\implies\,$
$H$ is $(\conf^{\,2}_I,\conf^{\,2}_O)$-clean.
%\end{center}
\end{prop}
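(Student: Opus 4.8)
The plan is to reduce everything to one monotonicity property of the prefix-conformance operator and then chain it through the two halves of the cleanness definition. Writing $\prefixconf(\conf,\mu_1,\mu_2,t)$ to make explicit the conformance notion fed into the higher-order predicate of Definition~\ref{def:prefixconf}, the crux is the lemma that $\prefixconf$ is monotone in its first argument with respect to $\sqsubseteq$: if $\conf_a \sqsubseteq \conf_b$, then $\prefixconf(\conf_a,\mu_1,\mu_2,t) \implies \prefixconf(\conf_b,\mu_1,\mu_2,t)$ for all $\mu_1,\mu_2,t$. I would prove this straight from Definition~\ref{def:prefixconf}: any cut points $t_1^s,t_1^e,t_2^s,t_2^e$ witnessing $\prefixconf(\conf_a,\cdots)$ make the two sub-traces $\conf_a$-conformant, and $\conf_a \sqsubseteq \conf_b$ turns this into $\conf_b$-conformance of the very same sub-traces, so the identical cut points witness $\prefixconf(\conf_b,\cdots)$.

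Granting the lemma, fix arbitrary $i_1,i_2 \in \gtt(\caly)$ and $t \in \dom(i_1)\cup\dom(i_2)$, and assume the antecedent of $(\conf^{\,2}_I,\conf^{\,2}_O)$-cleanness, i.e.\ $\prefixconf(\conf^{\,2}_I,i_1,i_2,t')$ for every $t' \le t$. Since $\conf^{\,1}_I \sqsupseteq \conf^{\,2}_I$ is by definition $\conf^{\,2}_I \sqsubseteq \conf^{\,1}_I$, the lemma upgrades each of these atoms to $\prefixconf(\conf^{\,1}_I,i_1,i_2,t')$, which is exactly the antecedent of $(\conf^{\,1}_I,\conf^{\,1}_O)$-cleanness for the same $i_1,i_2,t$. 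The hypothesis that $H$ is $(\conf^{\,1}_I,\conf^{\,1}_O)$-clean then delivers the output conclusion phrased with $\conf^{\,1}_O$. Because $\conf^{\,1}_O \sqsubseteq \conf^{\,2}_O$, a second use of the lemma rewrites every output atom $\prefixconf(\conf^{\,1}_O,o_1,o_2,t)$ into $\prefixconf(\conf^{\,2}_O,o_1,o_2,t)$, giving the conclusion for $\conf^{\,2}_O$. In the nondeterministic Definition~\ref{def:confclean-nondet} this last rewriting happens underneath the two quantifier blocks $\forall o_1 \exists o_2$ and $\forall o_2 \exists o_1$: the witnessing output trace produced for $\conf^{\,1}_O$ is reused verbatim, only the conformance atom is weakened, so the quantifier structure is preserved, and the deterministic case (Definition~\ref{def:confclean-det}) is the special case of a single output per input.

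The step I expect to be delicate is the time threshold hidden inside $\prefixconf$. Definition~\ref{def:prefixconf} ties the admissible cut ranges $[0,\tau]$ and $[t-\tau,t+\tau]$ to the threshold of the conformance notion plugged in, so when $\conf_a$ and $\conf_b$ carry different thresholds $\tau_a$ and $\tau_b$ a witness for $\prefixconf(\conf_a,\cdots)$ may use cut points lying outside the ranges allowed for $\prefixconf(\conf_b,\cdots)$, and the verbatim transfer breaks. The monotonicity lemma therefore really needs the range inclusion $\tau_a \le \tau_b$ alongside $\conf_a \sqsubseteq \conf_b$. This is harmless for the intended instantiations: on the input side we apply it with $\conf^{\,2}_I \sqsubseteq \conf^{\,1}_I$, where we want $\tau_{\conf^{\,2}_I} \le \tau_{\conf^{\,1}_I}$, and dually $\tau_{\conf^{\,1}_O} \le \tau_{\conf^{\,2}_O}$ on the output side, and both orderings hold along the concrete chain $\tracec_{\epsilon} \sqsubset \skorc_{\te} \sqsubset \hybc_{\te}$ of Proposition~\ref{prop:relationships}, since trace conformance carries threshold $0$ while the Skorokhod and hybrid notions share the common $\tau$. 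When the two thresholds coincide the ranges are identical and the witness transfer is immediate, so I would either state the proposition for conformance notions with aligned thresholds or fold the inclusion $\tau_a \le \tau_b$ into the monotonicity lemma itself.
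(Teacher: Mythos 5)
Your argument is correct, and there is nothing in the paper to diverge from: the paper states this proposition \emph{without} a proof, so the two-step monotonicity argument you give is precisely the implicit intended one. Your witness transfer is exactly right --- the same cut points $t_1^s,t_1^e,t_2^s,t_2^e$ witness the upgraded $\prefixconf$ atom, and in the nondeterministic case of Definition~\ref{def:confclean-nondet} the same witnessing outputs $o_2$ (resp.\ $o_1$) are reused under the $\forall\exists$ blocks with only the conformance atom weakened, the deterministic Definition~\ref{def:confclean-det} being the single-output special case. The genuinely valuable part of your write-up is the final caveat, which is not pedantry but a real imprecision in the statement as given: $\sqsubseteq$ compares only the predicates on pairs of GTTs, whereas $\prefixconf$ in Definition~\ref{def:prefixconf} additionally consumes the tolerance threshold $\tau$ of the notion plugged in, which fixes the admissible cut windows $[0,\tau]$ and $[t-\tau,t+\tau]$. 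Since that threshold is extra data not recoverable from the predicate (the same predicate can be presented with two different declared thresholds), $\conf_a \sqsubseteq \conf_b$ alone does not license the transfer, and your strengthened hypothesis $\tau_a \leq \tau_b$ in the monotonicity lemma is exactly what is needed. Reassuringly, every application the paper makes of this proposition --- both corollaries immediately following it --- derives the $\sqsubseteq$ hypotheses from monotonicity of $\hybc$ and $\skorc$ in $(\tau,\epsilon)$, with $\tau'_I \leq \tau_I$ on the input side and $\tau_O \leq \tau'_O$ on the output side (and $\tracec$ carrying threshold $0$), so the alignment you require holds throughout; folding $\tau_a \leq \tau_b$ into the lemma, as you propose, is the right way to make the proposition airtight.
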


The proposition above has two important corollaries. The first one explains the relationships between the original robust cleanness, and notions of cleanness based on Skorokhod conformance and hybrid conformance, in particular stating the conservative generalisation property for the latter notions. The second corollary compares cleanness notions with different conformance thresholds.

\begin{cor}
For all $\tau_I,\tau_O,\epsilon_I,\epsilon_O \in \mathbb R_{\geq 0}$, the following implications hold:
{
	\begin{enumerate}
\item	$\robclean(\epsilon_I,\epsilon_O) \implies$
	$\skorclean(0,\epsilon_I,\tau_O,\epsilon_O) \implies$
	$\hybclean(0,\epsilon_I,\tau_O,\epsilon_O)$,%\\
	\item $\hybclean(\tau_I,\epsilon_I,0,\epsilon_O)\,\implies\,$
	$\skorclean(\tau_I,\epsilon_I,0,\epsilon_O)\,\implies\,$
	$\robclean(\epsilon_I,\epsilon_O)$.
	\end{enumerate}
}
\noindent	Also,
%		\begin{center}
			{%\small
		$\robclean(\epsilon_I,\epsilon_O)=\skorclean(0,\epsilon_I,0,\epsilon_O)=
		\hybclean(0,\epsilon_I,0,\epsilon_O)$}
%	\end{center}
and hence $\skorclean$ and $\hybclean$ are conservative extensions of robust cleanness.
\end{cor}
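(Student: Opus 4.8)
The plan is to obtain every statement as an instance of the preceding monotonicity proposition, so the work reduces to assembling two ingredients: the conformance inclusions $\tracec_\epsilon \sqsubseteq \skorc_{\te} \sqsubseteq \hybc_{\te}$ supplied by Proposition~\ref{prop:relationships} (for all $\tau,\epsilon \in \realsnn$), and the degenerate identities $\skorc_{0,\epsilon} = \tracec_\epsilon = \hybc_{0,\epsilon}$ valid when the time threshold is $0$. Once these are available, each implication is just a choice of $(\conf^{1}_{I},\conf^{1}_{O})$ and $(\conf^{2}_{I},\conf^{2}_{O})$ together with a check of the two hypotheses $\conf^{1}_{I} \sqsupseteq \conf^{2}_{I}$ and $\conf^{1}_{O} \sqsubseteq \conf^{2}_{O}$, using the definitions $\robclean(\epsilon_I,\epsilon_O) = (\tracec_{\epsilon_I},\tracec_{\epsilon_O})$-clean, $\skorclean(\dots) = (\skorc_{\tau_I,\epsilon_I},\skorc_{\tau_O,\epsilon_O})$-clean and $\hybclean(\dots) = (\hybc_{\tau_I,\epsilon_I},\hybc_{\tau_O,\epsilon_O})$-clean.

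I expect the degenerate identities to be the only real content, and hence the main obstacle. The cleanest route is via the retiming characterisation of Proposition~\ref{prop:confviaret}: at $\tau = 0$ the threshold constraint that defines $\ret_\tau$ forces every component $r_i$ to satisfy $|r_i(t) - t| \leq 0$, i.e.\ $r_i = \mathsf{id}$, so both the Skorokhod family and the family of arbitrary function pairs collapse to the identity-pair family $\ret_{\mathsf{id}}$, which by Proposition~\ref{prop:confviaret} induces precisely $\tracec_\epsilon$. (Directly: hybrid conformance at $\tau=0$ forces the witness $t_2$ of each $t_1$ to equal $t_1$ and vice versa, giving $\calt_1 = \calt_2$ and $\dy(\mu_1(t),\mu_2(t)) \leq \epsilon$ for all $t$; Skorokhod conformance at $\tau=0$ admits only $r = \mathsf{id}$, with the same effect.) The technical point to watch is the interval clause in Definition~\ref{def:confrel} for Skorokhod conformance; the retiming-induced formulation of Definition~\ref{def:ret-induced-conf} avoids it, since there the family has already been cut down to identities over arbitrary time domains. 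It is worth stressing that Proposition~\ref{prop:relationships} only yields the inclusions $\tracec_\epsilon \sqsubseteq \skorc_{0,\epsilon}$ and $\tracec_\epsilon \sqsubseteq \hybc_{0,\epsilon}$; the reverse inclusions $\skorc_{0,\epsilon} \sqsubseteq \tracec_\epsilon$ and $\hybc_{0,\epsilon} \sqsubseteq \tracec_\epsilon$ --- which turn those inclusions into equalities --- are exactly what this degenerate analysis provides and are indispensable for the statements involving a zero threshold.

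With both ingredients in hand the implications follow by routine applications of the monotonicity proposition, each combining one inclusion from Proposition~\ref{prop:relationships} with one degenerate identity. For the first chain, starting from $\robclean(\epsilon_I,\epsilon_O) = (\tracec_{\epsilon_I},\tracec_{\epsilon_O})$-clean, I take $(\conf^{2}_{I},\conf^{2}_{O}) = (\skorc_{0,\epsilon_I},\skorc_{\tau_O,\epsilon_O})$: the input hypothesis $\tracec_{\epsilon_I} \sqsupseteq \skorc_{0,\epsilon_I}$ holds with equality by the degenerate identity, while the output hypothesis $\tracec_{\epsilon_O} \sqsubseteq \skorc_{\tau_O,\epsilon_O}$ is Proposition~\ref{prop:relationships}, yielding $\skorclean(0,\epsilon_I,\tau_O,\epsilon_O)$; stepping on to $(\hybc_{0,\epsilon_I},\hybc_{\tau_O,\epsilon_O})$ uses $\skorc_{0,\epsilon_I} = \hybc_{0,\epsilon_I}$ on the input and $\skorc_{\tau_O,\epsilon_O} \sqsubseteq \hybc_{\tau_O,\epsilon_O}$ on the output. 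The second chain is the mirror image: the output threshold is pinned to $0$ and the input conformance is weakened along $\hybc \to \skorc \to \tracec$, so the roles of the generic inclusion and the degenerate identity are swapped between the input and output coordinates. In every one of these four applications the time windows that $\prefixconf$ opens are nested in the required direction --- the zero-threshold side always contributes a window of width $0$ --- so the conformance orderings genuinely lift to the level of $\prefixconf$ on which the monotonicity proposition operates.

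Finally, setting $\tau_I = \tau_O = 0$ specialises the three cleanness notions to $(\tracec_{\epsilon_I},\tracec_{\epsilon_O})$-clean, $(\skorc_{0,\epsilon_I},\skorc_{0,\epsilon_O})$-clean and $(\hybc_{0,\epsilon_I},\hybc_{0,\epsilon_O})$-clean respectively; by the degenerate identities all three carry the same input and output conformance predicates $\tracec_{\epsilon_I}$ and $\tracec_{\epsilon_O}$, so they denote one and the same predicate on systems. This is the claimed equality $\robclean(\epsilon_I,\epsilon_O) = \skorclean(0,\epsilon_I,0,\epsilon_O) = \hybclean(0,\epsilon_I,0,\epsilon_O)$, and since it displays robust cleanness as the $\tau_I = \tau_O = 0$ instance of both $\skorclean$ and $\hybclean$, it certifies that these two notions are conservative extensions of robust cleanness.
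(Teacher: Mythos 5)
Your proof is correct and follows exactly the route the paper intends: the corollary is an instantiation of the preceding monotonicity proposition, fed with the inclusions $\tracec_{\epsilon}\sqsubseteq\skorc_{\te}\sqsubseteq\hybc_{\te}$ from Proposition~\ref{prop:relationships} and the collapse $\skorc_{0,\epsilon}=\tracec_{\epsilon}=\hybc_{0,\epsilon}$ at time threshold $0$, which you verify directly. Your two side remarks --- that the interval clause in Definition~\ref{def:confrel} is best sidestepped via the retiming characterisation of Proposition~\ref{prop:confviaret}, and that the $\prefixconf$ windows must be (and are) nested in the right direction in each of the four applications --- are sound refinements of details the paper leaves implicit, not a departure from its argument.
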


\begin{cor}
For all $\epsilon\subi,\epsilon'\subi, \epsilon\subo,\epsilon'\subo,\tau\subi,\tau'\subi, \tau\subo,\tau'\subo$ that satisfy the inequalities\\
$\epsilon'\subi \leq \epsilon\subi,\quad
\tau'\subi \leq \tau\subi,\quad
\epsilon'\subo \geq \epsilon\subo,\quad
\tau'\subo \geq \tau\subo$
the following implications hold:

	\begin{enumerate}
\item	$\robclean(\epsilon\subi,\epsilon\subo) \,\implies\,$
	$\robclean(\epsilon'\subi,\epsilon'\subo)$;
\item	$\hybclean(\epsilon\subi,\tau\subi,\epsilon\subo,\tau\subo) \,\implies\,$
	$\hybclean(\epsilon'\subi,\tau'\subi,\epsilon'\subo,\tau'\subo)$;
\item	$\skorclean(\epsilon\subi,\tau\subi,\epsilon\subo,\tau\subo) \,\implies\,$
	$\skorclean(\epsilon'\subi,\tau'\subi,\epsilon'\subo,\tau'\subo)$.
	\end{enumerate}
\end{cor}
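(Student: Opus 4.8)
The plan is to reduce all three implications to the preceding proposition, which already lifts an ordering between conformance predicates to an ordering between the cleanness notions they induce. Since $\robclean$, $\hybclean$, and $\skorclean$ are by definition the instances of $(\conf\subi,\conf\subo)$-cleanness for trace, hybrid, and Skorokhod conformance, it suffices to establish, for each case, the two base-conformance orderings $\conf^{\,1}_I \sqsupseteq \conf^{\,2}_I$ and $\conf^{\,1}_O \sqsubseteq \conf^{\,2}_O$ demanded by that proposition, with the unprimed thresholds playing the role of $\conf^{\,1}$ and the primed thresholds the role of $\conf^{\,2}$.

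The single ingredient doing the work is a threshold-monotonicity lemma for retiming-induced conformance: for a fixed family $\ret$, if $\tau' \leq \tau$ and $\epsilon' \leq \epsilon$, then $\conf^{\,\ret}_{\tau',\epsilon'} \sqsubseteq \conf^{\,\ret}_{\tau,\epsilon}$. I would prove this directly from Definition~\ref{def:ret-induced-conf}: shrinking the time threshold only strengthens the constraint $|r_i(t)-t| \leq \tau$, so $\ret_{\tau'}(\calt_1,\calt_2) \subseteq \ret_{\tau}(\calt_1,\calt_2)$; and any retiming witnessing $\conf^{\,\ret}_{\tau',\epsilon'}(\mu_1,\mu_2)$ then lies in the larger set and, since its value deviations are bounded by $\epsilon' \leq \epsilon$, a fortiori witnesses $\conf^{\,\ret}_{\tau,\epsilon}(\mu_1,\mu_2)$.

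With this lemma, each case follows by invoking Proposition~\ref{prop:confviaret} to identify the relevant family and then reading off directions from the hypotheses $\epsilon'\subi \leq \epsilon\subi$, $\tau'\subi \leq \tau\subi$, $\epsilon'\subo \geq \epsilon\subo$, $\tau'\subo \geq \tau\subo$. On the input side the primed thresholds are the smaller ones, so the lemma gives $\conf^{\,2}_I = \conf_{\tau'\subi,\epsilon'\subi} \sqsubseteq \conf_{\tau\subi,\epsilon\subi} = \conf^{\,1}_I$, i.e.\ $\conf^{\,1}_I \sqsupseteq \conf^{\,2}_I$; on the output side the primed thresholds are the larger ones, giving $\conf^{\,1}_O = \conf_{\tau\subo,\epsilon\subo} \sqsubseteq \conf_{\tau'\subo,\epsilon'\subo} = \conf^{\,2}_O$. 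For $\hybclean$ one takes the family of arbitrary function pairs and for $\skorclean$ the family of increasing-bijection pairs $(r,r^{-1})$, and the proposition yields implications~(2) and~(3). For $\robclean$, trace conformance is induced by the identity family, where $|r_i(t)-t| = 0 \leq \tau$ holds vacuously; hence the time thresholds play no role and only the value inequalities are needed, producing $\tracec_{\epsilon'\subi} \sqsubseteq \tracec_{\epsilon\subi}$ and $\tracec_{\epsilon\subo} \sqsubseteq \tracec_{\epsilon'\subo}$, from which implication~(1) follows.

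The one place that requires genuine care is keeping the two orderings pointing in opposite directions and matching each to the correct side of the proposition: tightening the input tolerances makes input conformance \emph{more} discriminating (hence $\sqsupseteq$ for the source notion), whereas loosening the output tolerances makes output conformance \emph{more} permissive (hence $\sqsubseteq$). It is worth noting that, because the $\prefixconf$ construction discards start and end segments of length at most the same $\tau$, shrinking $\tau\subi$ and enlarging $\tau\subo$ move these cut windows consistently with the base-conformance refinement, so no tension arises between the two roles of $\tau$; the lifting from these base orderings to the cleanness notions is exactly what the preceding proposition supplies.
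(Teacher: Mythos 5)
Your proposal is correct and follows exactly the route the paper intends: the corollary is a direct instance of the preceding proposition, obtained via the evident threshold-monotonicity of retiming-induced conformance ($\tau' \leq \tau$ shrinks $\ret_{\tau'}$ into $\ret_{\tau}$, and $\epsilon' \leq \epsilon$ weakens the value bound), with the input and output orderings matched to the $\sqsupseteq$/$\sqsubseteq$ sides of that proposition. Your closing observation that the $\prefixconf$ truncation windows scale with the same $\tau$ as the base relation---so the two roles of the time threshold never pull in opposite directions---is precisely the implicit check that makes the appeal to the proposition sound, and it is correctly resolved.
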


\begin{exa}
Consider the testing workflow in Fig.~\ref{fig:running}. The inputs passed to a car are \rexIstB\ and \rexIdevB, depicted in Fig.~\ref{fig:running}.(b). One of the test results is presented in Fig.~\ref{fig:running}.(c), where \rexIstB\ reveals output \rexOstC\ and \rexIdevB\ reveals \rexOdevC.
We assume that  $\epsilon < |\rexIstB(t_0) - \rexIdevB(t_0)|$ and $\epsilon < |\rexOstC(t_1) - \rexOdevC(t_1)|$ at some time $t_1 \geq t_0$.
\begin{itemize}
\item \revised{As we saw in Example~\ref{example:robust-clean},  for inputs \rexIstB\ and \rexIdevB, the car that emits the outputs depicted in Fig.~\ref{fig:running}.(c) is deemed $\robclean(\epsilon,\epsilon)$.} Note, that in the presence of other inputs the car used for testing might not be $\robclean(\epsilon,\epsilon)$.
\item As explained in Example~\ref{ex:conformance}, \rexIstB\ and \rexIdevB\ are hybrid conformant for $\epsilon$ and $\tau$, i.e., the predicate $\prefixconf_I$  on the left-hand side of the implication in Def.~\ref{def:confclean-det} holds. $\prefixconf_O$, however, fails at time $t_1$ for signals \rexOstC\ and \rexOdevC.
Hence, the system tested in Fig.~\ref{fig:running}.(c) is not $\hybclean(\epsilon,\tau,\epsilon,\tau)$.
\end{itemize}

\end{exa}

\noindent
We now discuss  testing and falsification of conformance-based cleanness.
For systems with discrete time domains the existing methods for verifying~\cite{DBLP:conf/esop/DArgenioBBFH17} or testing~\cite{DBLP:conf/qest/BiewerDH19} robust cleanness can be readily applied.

In the case of hybrid cleanness, existing methods for testing hybrid conformance, such as~\cite{AbbasHFDKU14} and~\cite{AraujoCMMS18} can be extended to testing and falsification of hybrid cleanness of hybrid systems consisting of traces with finite time domains. Methods for checking Skorokhod conformance were presented in~\cite{Deshmukh17}. Due to the quantification over all time-points $t'$ in our Definition~\ref{def:confclean-det} and Definition~\ref{def:confclean-nondet}, it is not clear how to  directly extend them to testing Skorokhod cleanness.

\section{Cleanness with Synchronized Retimings}%
\label{sec:synchronized}
\subsection{Practical motivation}
An intuitive and useful notion of doping cleanness should capture precisely what we expect from a clean \revised{system} subject to disturbances in time and value. In this regard, one can observe that even the more discriminating $\skorclean$ predicate has certain drawbacks. The following example motivates why one may want to resort to the finer definition to be proposed in this section.

\begin{exa}\label{ex:synchronized}
Consider the scenario of particle emission cleanness presented in Example~\ref{ex:conformance} and the input (velocity)- and output trajectories depicted in
Fig.~\ref{fig:weneedsync}.  Assume that for some input trajectory $i_1$, the vehicle shows the output (emission) profile $o_1$; for a second input $i_2(t)=i_1(t-\tau)$, consider two possible \revised{output} trajectories: one output is $o_1(t-\tau)$, i.e., it is shifted in the same manner as input; this is assumed to be the best response to $i_2$. The other output is of the form $o_1(t-\tau-\delta)$, where $\delta>0$ can be arbitrarily small, i.e., it is the optimal output with an arbitrary small shift to the right.
Skorokhod-conformance cleanness with $\tau_I=\tau_O$ would accept the first output, but it would reject the second one.
A potential solution could be to  increase the value of $\tau_O$ so that it is significantly larger than $\tau_I$, but this increases the imprecision by  accepting too many trajectories shifted far to the left from $o_1(t-\tau)$.

Intuitively, when the input shifts by some $\tau$, we would like to compare the corresponding output trajectory with the one that is shifted accordingly. In the above-mentioned case, one would therefore ideally like to perform conformance check of output against $o_1(t-\tau)$, rather than $o_1(t)$.
\end{exa}

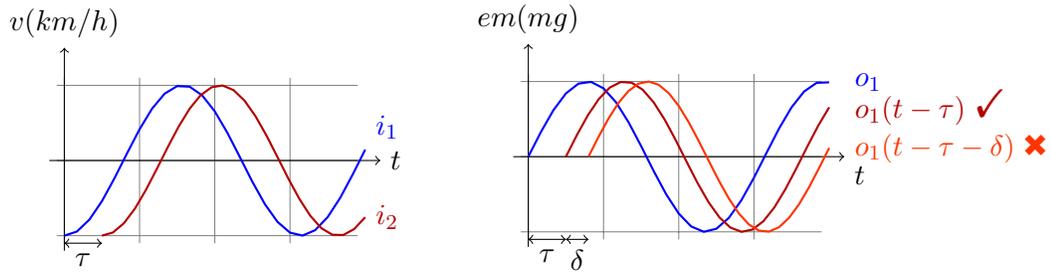
\begin{figure}[h!]
\begin{minipage}{0.4\textwidth}
\begin{tikzpicture}[domain=0:4]
\draw[very thin,color=gray] (-0.1,-1.1) grid (3.9,1.1);
\draw[->] (-0.2,0) -- (4.2,0) node[right] {$t$};
\draw[->] (0,-1.2) -- (0,1.5) node[above] {${v(km/h)}$};
\draw[color=blue,thick]   plot (\x,{-cos(2*\x r)})   node[above right] {$i_1$};
\draw[domain=0.5:4,color=darkred,thick]   plot (\x,{-cos(2*(\x-0.5) r)})   node[right\myomit{,yshift=-.6cm}] {$i_2$};

\draw[<->] (0,-1.1) -- (0.5,-1.1) node[below,xshift=-0.25cm] {$\tau$};
\end{tikzpicture}
\end{minipage}
\begin{minipage}{0.4\textwidth}
\begin{tikzpicture}[domain=0:4]
\draw[very thin,color=gray] (-0.1,-1.1) grid (3.9,1.1);
\draw[->] (-0.2,0) -- (4.2,0) node[below right] {$t$};
\draw[->] (0,-1.2) -- (0,1.5) node[above] {${em(mg)}$};
\draw[color=blue,thick]   plot (\x,{sin(2*\x r)})   node[right,xshift=0.2cm] { $o_1$};
\draw[domain=0.5:4,color=darkred,thick]   plot (\x,{sin(2*(\x-0.5) r)})   node[right,xshift=0.2cm\myomit{yshift=-.6cm}] { $o_1(t-\tau)$~~~~~~~\Checkmark};
\draw[domain=0.8:4,color=lightred,thick]   plot (\x,{sin(2*(\x-0.8) r)})   node[right,xshift=0.2cm\myomit{,yshift=-.6cm}] { $o_1(t-\tau-\delta)$~~{\scriptsize\XSolidBold}};
\draw[<->] (0,-1.1) -- (0.5,-1.1) node[below,xshift=-0.25cm] {$\tau$};
\draw[<->] (0.5,-1.1) -- (0.8,-1.1) node[below,xshift=-0.15cm] {$\delta$};
\end{tikzpicture}
\end{minipage}
\caption{Imprecision problem in Skorokhod-conformance cleanness without synchronisation of retimings. While $o_1(t-\tau)$ is the best expected response to $i_2$, no trajectory to the right of $o_1(t-\tau)$ is accepted when $\tau_I=\tau_O=\tau$.}%
\label{fig:weneedsync}
\end{figure}

\subsection{Formal theory of synchronized retiming.}
In order to alleviate this imprecision, we propose a definition of conformance-based cleanness with synchronised retimings, in which we do not check the conformance of the resulting outputs directly, but rather check conformance of each of the outputs against the transformation of the other output with the  retiming that is expected, based on the retiming of the corresponding input. Note that the expected retiming of the output is not always precisely the same as that of the input. Instead, we assume that the set of expected output retimings to a given input retiming is available through a synchronisation function.

As mentioned earlier, we can include in the set of conforming output trajectories the best expected response  $o_1(t-\tau)$ by allowing a sufficiently large $\tau\subo$, but this comes at the price of introducing imprecision in the conformance relation. By shifting the reference point of conformance comparison, our cleanness with synchronised retimings avoids this imprecision. What is more important, by performing the synchronisation independently of $\tau\subo$, we introduce the opportunity to constrain the set of conforming output traces to those traces that are as close as desired to the ideal expected output behaviour.

We proceed to formalise the enhanced notion of cleanness with synchronisation outlined above. We start with two auxiliary definitions.

Definition~\ref{def:ret-induced-conf} entails that whenever the conformance predicate \conf\  holds for certain pair of timed traces, it is witnessed by at least one relevant retiming $(r_1, r_2)$. The following operator ``extracts'' all such  witness retimings:
%To implement synchronisation into our theory we change the conformance definition \conf\ (Definition~\ref{def:ret-induced-conf}) by lifting the predicate checking for existence of a pair of retiming functions to the set of all  retiming functions witnessing the conformance:
\[
\begin{array}{ll}
%\conf\text{-Wit}^{\,\ret}_{\tau,\epsilon}(\mu_1,\mu_2) \,\eqdef\,
\wit \ \conf^{\,\ret}_{\tau,\epsilon}(\mu_1,\mu_2) \,\eqdef\,
& \{(r_1,r_2)\in \ret_{\tau}(\calt_1,\calt_2) \,\mid
\,\\
& \forall t \in \calt_1:\, \dy(\mu_1(t)-\mu_2 \circ r_1(t)) \leq \epsilon  \,\wedge\,\\
& \forall t \in \calt_2:\, \dy(\mu_1\circ r_2(t)-\mu_2(t)) \leq \epsilon \}.
\end{array}
\]

Similarly, we define the collection of all retimings that witness prefix conformance (\prefixconf\ predicate, Definition~\ref{def:prefixconf}):

%Similarly, we lift the predicate \prefixconf\ which checks conformance for sequence prefixes to the set of all  pairs of retiming functions proving conformance for these prefixes:
\[
\begin{array}{lll}
%\prefixconf\text{-Wit}^{\,\ret}_{\tau,\epsilon}(\mu_1,\mu_2,t) &\,\eqdef\,&
\prefixwit \ \conf(\mu_1,\mu_2,t) &\,\eqdef\,&
% \{(r_1,r_2)\in \conf\text{-Wit}^{\,\ret}_{\tau,\epsilon}(\mu_1[t_1^s\dots t_1^e], \mu_2[t_2^s\dots t_2^e])  \,\mid
 \{(r_1,r_2)\in \wit \ \conf(\mu_1[t_1^s\dots t_1^e], \mu_2[t_2^s\dots t_2^e])  \,\mid
\,\\
& & t_1^s \in [0,\tau] \cap \calt_1,
	 t_1^e \in
	[t-\tau,t+\tau] \cap \calt_1,\\
	& &  t_2^s  \in [0,\tau] \cap \calt_2, t_2^e \in
	[t-\tau,t+\tau] \cap \calt_2	\}.
\end{array}
\]

Note that the domains of the retimings in
$\prefixwit \ \conf(\mu_1,\mu_2,t)$
can be smaller than the domains of $\mu_1$ and $\mu_2$.

Synchronisation is realised through a function $\synch$ specifying all allowed pairs of output retimings for a given pair of input retimings. With this, we can extend the definition of $(\conf\subi,\conf\subo)$ cleanness as follows. Given two inputs that are conformant, i.e., $\prefixconf\subi(i_1,i_2)$, we may pick any pair $(r_1,r_2)$ from the set
$\prefixwit\ \conf\subi(i_1,i_2)$
of pairs of retiming functions for which the input conformance holds. This pair induces another pair $(r'_1,r'_2) \in \synch(r_1,r_2)$ of retiming functions for the output timeline. For those, the prefix conformance predicates $\conf_O(o_1 \circ r'_2,o_2)$ and  $\conf_O(o_1,o_2 \circ r'_1) $ must hold. This is formally expressed in the following definition.
\begin{defi}
	A deterministic system $\hybrid$ is \emph{$(\conf\subi,\conf\subo)$--clean with synchronised retiming through
	$\synch: \retime(\calt_1,\calt_2) \to \calp(\retime(\calt_1,\calt_2)) $} if the following holds:
	\[\begin{array}{ll}
	\forall i_1, i_2 \in \gtt(\caly):\,\,&
	\forall t\in\dom(i_1) \cup \dom(i_2) \\	& (\forall t' \leq t:\,
	\prefixconf\subi(i_1, i_2,t')\big)
	\,\implies\,\\&
	\exists (r_1,r_2) \in \prefixwit\ \conf\subi(i_1,i_2,t):\\
	&\exists (r'_1,r'_2) \in \synch(r_1,r_2):\\
	&
	\prefixconf\subo(\hybrid(i_1) \circ r'_2, \hybrid(i_2),t)
	\wedge\\&
	\prefixconf\subo(\hybrid(i_1), \hybrid(i_2)\circ r'_1,t)
	\end{array}
	\]

\end{defi}

Through the function $\synch$, which is a parameter to the above definition, we can specify the allowed retimings for the output, such as, for example a scaling of the input retiming when the timelines of the input and the output have different scales. It is the responsibility of the cleanness tester or verifier to accurately specify the expected behaviour, as an inappropriately chosen $\synch$ function can result in declaring doped systems to be clean.

One important aspect of this definition is that by selecting a suitable synchronisation function $\synch$ we can incorporate in the cleanness check any available knowledge regarding the expected output behaviours for conforming input trajectories. The following proposition states how the conformance-based notions of cleanness can be recovered by choosing appropriate retimings.

\begin{prop}
For a given class of retimings $\ret$, an arbitrary induced conformance relation on inputs  $\conf\subi=\conf^{\,\ret}_{\tau,\epsilon}$, and a conformance relation on outputs $\conf\subo$, there exists a  synchronised retiming
	$\synch$ such that \emph{$(\conf\subi,\conf\subo)$}-cleanness (as per Definition~\ref{def:confclean-det}) is a special instance of \emph{$(\conf\subi,\conf\subo)$}-cleanness through $\synch$.
\end{prop}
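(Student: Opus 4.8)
The plan is to exhibit a synchronisation function $\synch$ under which the two output retimings forced by the synchronised definition are forced to be the identity, so that the synchronised output obligation collapses to the plain one of Definition~\ref{def:confclean-det}. Concretely, I would define $\synch$ to map \emph{every} pair of input retimings to the singleton $\{\idret\}$ containing the identity retiming on the (shared) output time domain. Since $\hybrid$ is deterministic we have $\dom(\hybrid(i_j)) = \dom(i_j)$, so this choice is well typed whenever $\dom(i_1) = \dom(i_2)$, which is the case in which the reduction is meaningful. With this $\synch$, for any $i_1,i_2$ and $t$ the selected output retiming is $(r'_1,r'_2) = \idret$, whence $\hybrid(i_1) \circ r'_2 = \hybrid(i_1)$ and $\hybrid(i_2) \circ r'_1 = \hybrid(i_2)$. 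Both conjuncts $\prefixconf\subo(\hybrid(i_1)\circ r'_2, \hybrid(i_2), t)$ and $\prefixconf\subo(\hybrid(i_1), \hybrid(i_2)\circ r'_1, t)$ therefore reduce to the single predicate $\prefixconf\subo(\hybrid(i_1), \hybrid(i_2), t)$, which is exactly the consequent of Definition~\ref{def:confclean-det}.

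Because $\synch$ ignores its argument, the witness $(r_1,r_2)$ is immaterial to the output condition, and the consequent of the synchronised definition simplifies to $\prefixwit\ \conf\subi(i_1,i_2,t) \neq \emptyset \,\wedge\, \prefixconf\subo(\hybrid(i_1), \hybrid(i_2), t)$. It then remains to show that, for every $i_1,i_2,t$, the implication $\big(\forall t' \le t:\ \prefixconf\subi(i_1,i_2,t')\big) \Rightarrow \prefixconf\subo(\hybrid(i_1),\hybrid(i_2),t)$ is equivalent to the same implication with the extra conjunct $\prefixwit\ \conf\subi(i_1,i_2,t) \neq \emptyset$ adjoined to its consequent. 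One direction is immediate, since a conjunction entails each conjunct. For the converse I would instantiate the antecedent at $t' = t$ to obtain $\prefixconf\subi(i_1,i_2,t)$; by Definition~\ref{def:prefixconf} this means some pair of sub-traces is $\conf\subi$-conformant, and since $\conf\subi = \conf^{\,\ret}_{\tau,\epsilon}$ is retiming-induced, Definition~\ref{def:ret-induced-conf} guarantees that every holding instance is witnessed by at least one retiming in $\ret_\tau$. Hence $\wit\ \conf\subi$, and therefore $\prefixwit\ \conf\subi(i_1,i_2,t)$, is non-empty whenever the antecedent holds, so the extra conjunct is already implied by the antecedent and may be added or dropped without altering the implication.

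The main obstacle is precisely this non-emptiness step: the synchronised definition quantifies existentially over $\prefixwit\ \conf\subi(i_1,i_2,t)$, so the whole reduction fails unless that witness set is guaranteed non-empty under the hypothesis. This is why the argument must lean on the fact that prefix conformance for a retiming-induced relation always carries an explicit witness, and why the proposition is stated for an induced input relation $\conf^{\,\ret}_{\tau,\epsilon}$ rather than an arbitrary one. A secondary point requiring care is the typing of the identity output retiming: $\idret$ is a legitimate element of $\retime(\dom(i_1),\dom(i_2))$ only when $\dom(i_1)=\dom(i_2)$. I would therefore either restrict the statement to that (practically standard) case or, when the domains differ, note that the antecedent forces a retiming relating the two timelines that can serve as $(r'_1,r'_2)$ while keeping both output conjuncts equivalent to $\prefixconf\subo(\hybrid(i_1),\hybrid(i_2),t)$.
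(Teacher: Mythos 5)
Your proof is correct and takes essentially the same route as the paper, whose entire argument is the observation that setting $\synch(r_1,r_2)=\{\idret\}$ makes both output conjuncts collapse to $\prefixconf\subo(\hybrid(i_1),\hybrid(i_2),t)$, recovering Definition~\ref{def:confclean-det}. Your two supporting observations---that the existential over $\prefixwit\ \conf\subi(i_1,i_2,t)$ is harmless because the antecedent at $t'=t$ together with the retiming-induced nature of $\conf\subi$ guarantees a witness retiming exists (which is exactly why the proposition assumes $\conf\subi=\conf^{\,\ret}_{\tau,\epsilon}$), and that $\idret$ is only well-typed when the two time domains coincide---are details the paper leaves implicit, and you handle them correctly.
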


  By setting $\synch(r_1,r_2) = \{\idret\}$ we obtain the corresponding notion of $(\conf\subi,\conf\subo)$--cleanness, which, in particular, means that cleanness with synchronised retimings is also a conservative generalization of robust cleanness.

\begin{exa}\label{ex:synchronizedFormal}
Consider the behaviour introduced in Example~\ref{ex:synchronized}. As for the retiming witnessing conformance between $i_1$ and $i_2$, let us take the most obvious one i.e. $(r_1,r_2) = (t-\tau,t+\tau)$.
If $(r_1,r_2)$ covers the whole domain of the output trace, then we can use $\synch(r_1,r_2) = \{(r_1,r_2)\}$, according to which the output should be retimed in the same way as the input is.
By reusing the same retiming of input for output, $o_1(t - \tau - \delta)$ conforms to the retimed  output $o_1$ with respect to the margin $\delta$.
\end{exa}

We use this theory in our experimental setup in Section~\ref{sec:case_study} and show how it can lead to a more accurate analysis of emission data in practice.

\section{Expressing Cleanness in  Timed Hyper Logics}%
\label{sec:timed-hyper}
In this section we introduce a logic that is capable of characterizing the notions of  robust and hybrid cleanness.  Since robust cleanness can be characterized in the logic \hyperltl~\cite{ClarksonFKMRS14:post}, the logic we propose is a temporal logic for hyperproperties. Our semantic domain consists of generalized timed traces, and thus, our logic  extends Signal Temporal Logic (\stl)~\cite{DBLP:conf/formats/MalerN04} with quantifiers over traces.  In order to be able to express deviations in time, our logic uses freeze quantifiers as the mechanism for comparing values at different time points. More precisely,  the proposed logic is obtained by extending \stlstar~\cite{FreezeSTL} with trace quantifiers. In the remainder of the section we provide the formal definition of the logic and discuss its applicability in the context of specifying and monitoring cleanness of hybrid systems.

\subsection{Preliminaries}
\revised{For the presentation in this section it will be convenient to consider hybrid systems as sets of GTTs,  where  each GTT represents a pair of input and output GTTs.  The reason for this is that we will define a logic whose formulas refer to both the inputs and the outputs of a hybrid system over time,  and are therefore interpreted over sets of such combined traces that contain both the input to the system and the system's output.
}

\revised{
Formally,  we will represent a
 ${\caly'}$-valued hybrid system $\hybrid : \mathit{GTT}(\caly')  \to \calp(\mathit{GTT}(\caly'))$
as a subset of $\mathit{GTT}(\caly)$ where $\caly = \caly' \times \caly'$, defined as
$\{ \mu \in \mathit{GTT}(\caly' \times \caly') \mid
\exists \mu_I,  \mu_O \in \mathit{GTT}(\caly'),
\mu_O \in \hybrid(\mu_I),
\mu(t) = (\mu_I(t),\mu_O(t)) \text{ for all }t \in \dom(\mu_I)\}$.
The definition of the GTTs $\mu$ in this set is possible since according to Definition~\ref{def:hybridsystem} we have that  for all $\mu_I \in  \mathit{GTT}(\caly')$  and all $\mu_O \in \hybrid(\mu_I) $ it holds that
$\dom(\mu_I) =\dom( \mu_O)$.
}

\revised{
For the rest of this subsection, whenever we refer to a GTT $\mu \in \hybrid$ of a $\caly'$-valued hybrid system $\hybrid$,  we mean a function  $\mu: \calt \to \caly$ defined as above, with $\caly  = \caly'\times\caly'$.
Given $\mu \in \hybrid$ such that $\mu(t) = (\mu^I(t),\mu^O(t))$ for $t \in \dom(\mu)$, we denote with
$\mu^I : \calt \to \caly'$ the projection of $\mu$ on the input component and with $\mu^O : \calt \to \caly'$ its projection on the output component.
}

Let $\mu: \calt \to \caly$ be a GTT\@.
If $\calt$ is an interval of the form $[0,b] \subseteq \realsnn$ with $b \in \rats_{>0}$, or of the form $[0,\infty)$,  and
$\caly = \reals^n$ for some $n \in \nat$,
we say that $\mu$ is a \emph{real-valued signal},
and define $\mathit{length}(\mu) = b$,  respectively $\mathit{length}(\mu) = \infty$,
to be the \emph{time length} of $\mu$.
If $\calt$ is instead a strictly increasing sequence $t_0,t_1,t_2,\ldots$ of rational numbers such that $t_0=0$ we say that $\mu$ is a \emph{timed word} and similarly define its \emph{time length} as $\mathit{length}(\mu) = \max \calt$ if $\calt$ is finite and $\mathit{length}(\mu) = \infty$ otherwise.

Let $X$ be a finite set of \emph{real-valued variables}. We denote with $\reals^X$ the set of possible valuations of $X$.  In the rest of this section we assume that the range of all GTTs that we consider is $\reals^X$ for a given finite set of real-valued variables.  We will assume that the variables in $X$ are indexed, i.e.,  $X = \{x_1,x_2,\ldots x_n\}$ for some $n \in \nat$, and use $\reals^n$ instead of $\reals^X$ with the expected interpretation.
An \emph{atomic predicate} over $X$ is a function $\alpha : \reals^X \to \bools$.
\revised{Recall from Definition~\ref{def:gtt} that a GTT $\mu:\calt \to \caly$ is defined for a metric space $(\caly,d_\caly)$.   When $\caly  = \reals^X$ for some set of variables $X$,  we assume that the metric $d$ can be expressed as an arithmetic expression $d_\caly(X,X')$ over the variables $X \cup X'$, where $X' = \{x' \mid x \in X\}$.
More precisely,  we have that the expression $d_\caly(X,X')$ evaluates to $u \in \reals$ for valuations $v \in \reals^X$ and $v'\in \reals^{X'}$ of $X$ and $X'$,  respectively,  if and only if $d_\caly(v,v') = u$.
For a $\caly'$-valued hybrid system $H$ we denote with $d_I$ and $d_O$ the arithmetic expressions that define the metrics associated with the underlying metric spaces for the input and output values of $H$.}

\subsection{The logic \hyperstlstar}

We now define the logic \hyperstlstar, which extends the logic \stlstar~\cite{FreezeSTL}  with \emph{quantifiers over traces},  that are used to relate multiple GTTs in a hybrid system.  To this end,  let $V_{\mathit{trace}}$ be a countably infinite set of \emph{trace variables}.  For  a set $X$ of real-valued variables  and a given trace variable $\pi \in V_{\mathit{trace}}$,  let $X_\pi = \{x_\pi \mid x \in X\}$  be the set of variables indexed with $\pi$.

Let $\mathcal I = \{1,\ldots,m\}$ for some $m \in \nat$ be a finite index set. As in the logic \stlstar, the index set $\mathcal I$ consists of the indices of the positions in the \emph{frozen time vector}.  Intuitively,  at each position of the frozen time vector a time point can be stored.  For a trace variable $\pi \in V_{\mathit{trace}}$ and an index  $i \in \mathcal I$,  let $X_\pi^{*_i} = \{x_\pi^{*_i} \mid x \in X\}$  be the set of variables indexed with $\pi$ and $*_i$.

\subsubsection{Syntax}

Let $X$ be a finite set of real-valued variables, and \AP be a set of atomic predicates over the set of indexed variables $\bigcup_{\pi \in V_{\mathit{trace}}} X_{\pi} \cup \bigcup_{\pi \in V_{\mathit{trace}}, i \in \mathcal I} X_{\pi}^{*_i}$.

\hyperstlstar  formulas are defined according to the following grammar.
\[
\begin{array}{lll}
\Phi &::=& \exists \pi.\Phi \;|\; \forall \pi. \Phi \;|\;  \varphi,\\
\varphi &::= & \alpha \;|\; \top \;|\; \neg \varphi \;|\; \varphi \vee \varphi \;|\; \varphi \LTLuntil_{J}\varphi \;|\; \varphi \LTLsince_{J}\varphi \;|\; *_{i} \varphi,
\end{array}
\]
where
$\pi \in V_{\mathit{trace}}$ is a trace variable,
$\alpha$ is an atomic predicate from \AP,
$J \subseteq \realsnn$ is an interval with endpoints in $\ratsnn \cup \{\infty\}$, and
$i \in \mathcal I$ is an index.

The operators $\LTLuntil$ and $\LTLsince$ are the temporal operators \emph{Until} and \emph{Since}. The $*_{i}$ operator, for $i \in \mathcal I$ is the \emph{signal-value freeze} operator. Their semantics is formally defined below.
When the interval $J$ is of the form $[0,\infty)$ we often omit it for convenience.

The Boolean constant $\bot$ (false),  additional Boolean operators, as well as additional temporal operators are defined in the usual way. More concretely, we define
$\LTLfinally_{J} \varphi \defeq \top \LTLuntil_{J} \varphi$,
$\LTLglobally_{J} \varphi \defeq \neg \LTLfinally_{J} \neg \varphi$,
$\LTLpastfinally_{J} \varphi \defeq \top \LTLsince_{J} \varphi$,  and
$\LTLpastglobally_{J} \varphi \defeq \neg \LTLpastfinally_{J} \neg \varphi$.
%, and
%$\varphi\LTLweakuntil\psi \defeq \varphi\LTLuntil \psi \;\vee \LTLglobally\varphi$.

A \hyperstlstar formula is \emph{well-formed} if each occurrence of a trace quantifier introduces a unique variable name,  and it is \emph{closed} if every occurrence of a variable in $X_{\pi}$ is in the scope of a quantifier for $\pi$.  We will consider only well-formed \hyperstlstar formulas.

Note that in \hyperstlstar, unlike~\cite{FreezeSTL}, we also allow the past operator $\LTLsince$, as well as arbitrary intervals $J$ in the operators $\LTLuntil$ and $\LTLsince$.
We define $\hyperstlstar_{\mathit{fin}}$ to be the fragment of $\hyperstlstar$ such that every interval $J$ is of the form $[a,b]$, where $a, b \in \ratsnn$ and $a < b$.

\subsubsection{Semantics}

\hyperstlstar formulas are interpreted over trace assignments and register valuations. A \emph{trace assignment} is a partial function with finite domain from $V_{\mathit{trace}}$ to the set of GTTs in a given hybrid system.
\revised{Formally,  given a hybrid system $\hybrid$ represented as a set of input-output GTTs,  a trace assignment $\Pi$ is a partial function $\Pi :  V_{\mathit{trace}} \to \hybrid$.}
\emph{Register valuations}  are $|\mathcal I|$-dimensional vectors over $\realsnn$.

Let $\Pi$ be a trace assignment with domain $U_{\mathit{trace}} \subseteq V_{\mathit{trace}}$, and let $\alpha$ be an atomic predicate defined over the variables in
$\bigcup_{\pi \in U_{\mathit{trace}}} X_{\pi} \cup \bigcup_{\pi \in U_{\mathit{trace}}, i \in \mathcal I} X_{\pi}^{*_i}$.  Consider a time point $t \in \realsnn$ such that for each $\pi \in V_{\mathit{trace}}$  for which a variable from $X_\pi$ occurs in $\alpha$ it holds that $t \in \dom(\Pi(\pi))$,  and a register valuation $T$ such that for each pair $\pi \in V_{\mathit{trace}}$ and $i \in \mathcal I$ such that a variable from $X_\pi^{*_i}$ occurs  in  $\alpha$ it holds that $T(i) \in \dom(\Pi(\pi))$.  Then, the value of the atomic predicate $\alpha$ at the tuple $(\Pi,T,t)$ is defined as:
\[\alpha(\Pi,T,t) \defeq
\alpha((\Pi(\pi)(t))_{\pi \in U_{\mathit{trace}}},
(\Pi(\pi)(T(i)))_{\pi \in U_{\mathit{trace}},i\in \mathcal I}).\]
Intuitively,  the atomic predicate is evaluated using the signal values at time point $t$ and at the time points stored in the frozen time vector $T$.  If for some of the indexed variables that occur in $\alpha$ the corresponding time point is not in the time domain of the corresponding trace,  then the value of the atomic predicate is \emph{undefined}.

To define the semantics of \hyperstlstar, we define the function $\valuet$ that maps a formula $\Psi$,  a trace assignment $\Pi$,  a register assignment $T$ and a time point $t$ to a value in the set $\{\T,\F,  \U\}$, which indicates whether $\Psi$ is true ($\T$),  false ($\F$) or undefined ($\U$) at $(\Pi,T,t)$.
Formally,  for a hybrid system $H$, a trace assignment $\Pi$, a register valuation $T$, and $t \in \realsnn$, the value $\valuet(\Psi,H,\Pi,T,t)$ is defined by induction on the structure of \hyperstlstar formulas.

\begin{itemize}
\item If $\Psi = \alpha$,  then
\[\valuet(\Psi,H,\Pi,T,t) = \begin{cases}
\alpha(\Pi,T,t) & \text{if the value of } \alpha \text{ is defined at } (\Pi,T,t),\\
\U & \text{otherwise.}
\end{cases}\]
\item  If $\Psi = \top$,  then $\valuet(\Psi,H,\Pi,T,t)  = \T$.
\item  If $\Psi = \neg \varphi$, then
\[\valuet(\Psi,H,\Pi,T,t) = \begin{cases}
\T & \text{if } \valuet(\varphi,H,\Pi,T,t)  = \F,\\
\F & \text{if } \valuet(\varphi,H,\Pi,T,t)  = \T,\\
\U & \text{if }  \valuet(\varphi,H,\Pi,T,t)  = \U.
\end{cases}\]
\item If $\Psi = \varphi_1 \vee \varphi_2$,   then
\[\valuet(\Psi,H,\Pi,T,t) = \begin{cases}
\T & \text{if } \valuet(\varphi_1,H,\Pi,T,t)  = \T \text{ or } \valuet(\varphi_2,H,\Pi,T,t)  = \T,\\
\F & \text{if } \valuet(\varphi_1,H,\Pi,T,t)  = \F \text{ and } \valuet(\varphi_2,H,\Pi,T,t)  =\F,\\
\U & \text{otherwise}.
\end{cases}\]
\item If $\Psi  = \varphi_1 \LTLuntil_{J}\varphi_2$, then
\[\valuet(\Psi,H,\Pi,T,t) = \begin{cases}
\T & \text{if } \text{for some } t' \geq t \text{ such that } t'-t \in J: \
\valuet(\varphi_2,H,\Pi,T,t') = \T \text{ and }\\&
\phantom{if }\text{for all } t'' \in [t,t'):\ \valuet(\varphi_1,H,\Pi,T,t'') \in \{\T,\U\} \text{ or } \\
&\phantom{\text{ if for all } t'' \in [t,t'):\ } \revised{t''-t \in J \text{ and }}\valuet(\varphi_2,H,\Pi,T,t'') = \T ,\\
\F & \text{otherwise}.
\end{cases}\]
\item If $\Psi  = \varphi_1 \LTLsince_{J}\varphi_2$, then
\[\valuet(\Psi,H,\Pi,T,t) = \begin{cases}
\T & \text{if } \text{for some } t' \in [0,t] \text{ such that } t-t' \in J: \
\valuet(\varphi_2,H,\Pi,T,t') = \T \text{ and }\\&
\phantom{if }\text{for all } t'' \in (t',t]:\ \valuet(\varphi_1,H,\Pi,T,t'') \in \{\T,\U\} \text{ or } \\
&\phantom{\text{ if for all } t'' \in (t',t]:\ } \revised{t''-t \in J \text{ and }} \valuet(\varphi_2,H,\Pi,T,t'') = \T ,\\
\F & \text{otherwise}.
\end{cases}\]
\item If $\Psi  = *_i \varphi $, then
\[\valuet(\Psi,H,\Pi,T,t) =
\begin{cases}
\valuet(\varphi,H,\Pi,T[i\mapsto t],t) & \text{if } t \in \dom(\Pi(\pi)) \text{ for all } x_\pi^{*_i} \text{ that appear in } \varphi,\\
\U & \text{otherwise}.
\end{cases}
\]
where $T[i \mapsto t](j) \defeq t$ if $j=i$ and $T[i \mapsto t](j)  \defeq T(j)$ if $j \neq i$.
\item If $\Psi = \exists \pi.\Phi $, then
\[\valuet(\Psi,H,\Pi,T,t) = \begin{cases}
\T & \text{if } \text{for some } \mu \in H:\;
\valuet(\Phi,H,\Pi[\pi \mapsto \mu],T,t) = \T,\\
\F & \text{otherwise}.
\end{cases}\]
\item If $\Psi = \forall \pi.\Phi $, then
\[\valuet(\Psi,H,\Pi,T,t) = \begin{cases}
\T & \text{if for all } \mu \in H:\;
\valuet(\Phi,H,\Pi[\pi \mapsto \mu],T,t) \in \{\T,\U\},\\
\F & \text{otherwise}.
\end{cases}\]
\end{itemize}

\noindent
Note that if a formula is closed,  then its value is always either $\T$ or $\F$.

For a hybrid system $H$, a trace assignment $\Pi$, a register valuation $T$,  $t \in \realsnn$,  and a \hyperstlstar  formula $\Phi$ we can define the satisfaction relation $\models$ where
\[(H,\Pi,T,t)   \models \Phi \text{ if and only if } \valuet(\Phi,H,\Pi,T,t) = \T.\]

We say that a hybrid system \emph{$H$ satisfies a closed formula $\Phi$}, denoted $H \models \Phi$, if and only if it holds that $(H,\Pi_\emptyset,T_0,0) \models \Phi$, where $\Pi_\emptyset$ is the empty trace assignment and $T_0$ is the register valuation in which $0$ is stored at every index.

\begin{exa}
Let $H = \{\mu_1,\mu_2\}$  be a hybrid system that consists of two generalized timed traces,  $\mu_1$ with $\dom(\mu_1) = \{0,2,4\}$ and $\mu_2$ with $\dom(\mu_2) = \{0,1,3\}$, where  $\mu_1(t) = 0$ for all $t \in \{0,2,4\}$, and $\mu_2(0) = \mu_2(1) = 1$ and $\mu_2(3) = 0$.

Consider the \hyperstlstar formula $\Phi_1 = \forall \pi_1. \forall \pi_2. \;\LTLglobally_{[0,4]}(x_{\pi_1} = x_{\pi_2})$ that states that for every pair of timed traces and every time point in the interval $[0,4]$ the value of the two traces must be equal (i.e.,  they agree on the value of variable $x$).  We have that $H \not\models \Phi_1$ since the two traces differ at time point $t=0$. If, on the other hand we consider the formula $\Phi_2 = \forall \pi_1. \forall \pi_2. \;\LTLglobally_{[1,4]}(x_{\pi_1} = x_{\pi_2})$ obtained from $\Phi_1$ by replacing the interval $[0,4]$ by $[1,4]$, we have that $H \models\Phi_2$.  The justification behind this is that there is no time point in the interval $[1,4]$ where we witness a violation of the atomic predicate $x_{\pi_1} = x_{\pi_2}$. In particular, in the time interval $[1,4]$ there is no point at which both traces are defined.

Now, consider the \hyperstlstar formula $\Phi_3 = \forall \pi_1. \forall \pi_2. \;\LTLeventually_{[0,4]}(x_{\pi_1} = x_{\pi_2})$ that states that for every pair of traces, in the interval $[0,4]$ there \emph{exists} a time point where the values of the two traces are the same.  We have that $H \not\models \Phi_3$,  as expected, since there is no point in this interval where both traces are defined and have the same value.  Note that we also have $H \not \models \forall \pi_1. \forall \pi_2. \;\LTLeventually_{[1,4]}(x_{\pi_1} = x_{\pi_2})$ for the interval where the value  of $x_{\pi_1} = x_{\pi_2}$ is undefined.

Finally, let $\Phi_4 = \forall \pi_1. \forall \pi_2. \;\LTLeventually_{[0,3]}*_{1} \LTLeventually_{[0,1]} x_{\pi_1}^{*_1} = x_{\pi_2}$.  The formula $\Phi_4$ states that for every pair of traces there exists a time point $t_1$ in $[0,3]$ such that there is a time point $t_2$ at most $1$ time unit later, such that the value of the first trace at $t_1$ is equal to the value of the second trace at time $t_2$.  Here $t_1$ is the frozen time point per the semantics of the freeze operator $*$.  We have that $H \models \Phi_4$. To see this,  when $\pi_1$ is $\mu_1$ and $\pi_2$ is $\mu_2$ let $t_1= 2$ and $t_2 =3$, and when $\pi_1$ is $\mu_2$ and $\pi_2$ is $\mu_1$ let $t_1= 3$ and $t_2 =4$.
\end{exa}

\begin{rem}
Due to the generality of our semantic domain, which generalizes both continuous signals and timed words,  we have to address the issue of having to define the interpretation of \hyperstlstar formulas over all time points in $\realsnn$ while the considered traces might not be defined at all points.  Furthermore,  the semantics of the logic has to account for the fact that formulas,  even atomic propositions, refer to different traces which are possibly defined over different time domains.  To this end, we defined the function $\valuet$ that assigns values in the set $\{\T,\F,\U\}$.  For instance, if $\valuet(\alpha,H,\Pi,T,t) = \U$,  then $\valuet(\alpha \vee \neg\alpha,H,\Pi,T,t) = \U$.  For the temporal operators, our semantics is reminiscent of that in~\cite{Gazda2019},  in the sense that for evaluating $\varphi_1\LTLuntil\varphi_2$ the subformula $\varphi_1$ is evaluated only in time points where its value is defined,  and the time point where the obligation $\varphi_2$ must hold is one where its value is defined.  The treatment in $\LTLsince$ is analogous.
Our semantics interprets trace quantifiers over the traces for which the formula has a defined value.

Other temporal logics for timed hyperproperties face similar issues, which we discuss in Remark~\ref{rem:undefined}.
The logic \hyperstl~\cite{HyperSTL}, on the other hand is not affected by such difficulties, since its semantics is defined over continuous signals defined over a whole interval.
\end{rem}

\begin{rem}
In our definition of the semantics of $\varphi_1\LTLuntil_J \varphi_2$, similarly to~\cite{Deshmukh17} and~\cite{Gazda2019}, we account for the fact that in a dense time domain there might not exist a \emph{first} time point where $\varphi_2$ is satisfied.  Therefore we allow for $\varphi_1$ to be violated at intermediate time points as long as at those points the value of $\varphi_2$ is $\T$ \revised{and the constraint imposed by $J$ is satisfied.  More precisely, $\varphi_1\LTLuntil_J \varphi_2$ is $\T$ at time point $t$ if there exists a time point $t' \geq t$ such that $t' - t \in J$, $\varphi_2$ is $\T$ at $t'$,  and  for all intermediate points $t'' \in [t, t')$ it holds that if $\varphi_1$ is $\F$ at $t''$,  then $t''$ must be such that  $t'' - t \in J$ and $\varphi_2$ is $\T$ at $t''$.}
 The analogous holds for $\LTLsince$.
\end{rem}

\begin{rem}\label{rem:undefined}
Existing temporal logics for timed hyperproperties have also faced the challenge of dealing with timed traces that are defined over different sets of time points.

 In~\cite{HyperMTLother} this leads to the consideration of two different semantics of their logic \hypermtl: an asynchronous semantics that does not require the time stamps in two timed traces to match, and a synchronous semantics in which the range of quantifiers is restricted to the traces that synchronize with the current trace assignment.  The logic \hypermtl includes for each trace variable a Boolean constant $\top$ (true) indexed with that variable,  which allows for expressing syntactically in formulas the requirement that the current time point is in the domain of the corresponding trace.  In contrast,  in our logic  \hyperstlstar we account for undefined values on the semantic level in the definition of the value function,  and values at different points in time on different traces can be related via the freeze operator.

The authors of~\cite{HyperMTL} provide an alternative logic \hypermtl  by extending the logic \mtl with quantifiers over traces in the point-wise semantics.  The semantics of their logic has both a synchronous and an asynchronous layer.  At the synchronous layer, traces are compared at the same points in time,  and if a trace is undefined at a given point,  the value at the closest previous event is used.  At the asynchronous layer,  an asynchronous version of the $\LTLuntil$ operator allows for a bounded difference in the time points when the obligation of the \emph{Until} formula is fulfilled in  different traces.  Our logic, on the other hand,  allows for a general and flexible way of relating time points on different traces via the freeze operator.
\end{rem}

\subsection{Expressing robust and hybrid cleanness}\label{sec:cleanness-formulas}
Using the logic \hyperstlstar we can express trace and hybrid conformance and robust and hybrid cleanness.  We begin by first formalizing the conformance notions,  and then provide the characterization of cleanness.

Let $\pi_1$ and $\pi_2$ be two trace variables, and let $\tau$ and $\epsilon$ be non-negative rational constants.  We can express hybrid conformance with thresholds $\tau$ and $\epsilon$, i.e.,  $\hybc_{\te}$,  as follows:
\[
\begin{array}{lll}
\varphi^{\hybc}_{\tau,\epsilon} & = &
\Big(\LTLglobally *_1
\big(\LTLpastfinally_{[0,\tau]}
d_{\cal Y}(X_{\pi_2},X^{*_1}_{\pi_1}) \leq \epsilon\vee
\LTLeventually_{[0,\tau]}d_{\cal Y}(X_{\pi_2},X^{*_1}_{\pi_1}) \leq \epsilon\big)\Big) \wedge \\&&
\Big(\LTLglobally *_2
\big(\LTLpastfinally_{[0,\tau]}d_{\cal Y}(X_{\pi_1} ,X^{*_2}_{\pi_2}) \leq \epsilon \vee
\LTLeventually_{[0,\tau]}d_{\cal Y}(X_{\pi_1} ,X^{*_2}_{\pi_2}) \leq \epsilon \big)\Big)
\end{array}
\]
where $d_{\cal Y}(X,X')$ is an \revised{arithmetic expression characterizing the metric $d_{\cal Y}$}. Note that in the above formula the trace variables $\pi_1$ and $\pi_2$ are not quantified,  and hence it is not closed.

Intuitively, the formula states that for every time point on the trace described by $\pi_1$ it holds that within $\tau$ time units in the past or in the future, there exists a point on the trace described by $\pi_2$ where the value is $\epsilon$-close to the value of $\pi_1$ at the current time point, and symmetrically for the other direction with traces $\pi_1$ and $\pi_2$ swapped.

\begin{prop}\label{prop:correctness-hconf}
Let $H$  be a deterministic hybrid system
defined over a set of real-valued variables $X$
such that $0 \in \dom(\mu)$ for each $\mu \in H$.
Let $\tau,\epsilon \geq 0$ be rational constants,
and $\mu_1,\mu_2 \in  H$.
Let $\pi_1$ and $\pi_2$ be trace variables and
$\Pi=\{\pi_1 \mapsto \mu_1,\pi_2 \mapsto \mu_2 \}$.
Then
\begin{center}
$\hybc_{\te}(\mu_1,\mu_2)$ if and only if
$ (H,\Pi,T_0,0) \models \varphi^{\hybc}_{\tau,\epsilon}$.
\end{center}
\end{prop}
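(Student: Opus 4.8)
The plan is to unfold the three-valued semantics of $\varphi^{\hybc}_{\tau,\epsilon}$ at $(H,\Pi,T_0,0)$ and match each of its two conjuncts against one of the two defining clauses of $\hybc_{\te}(\mu_1,\mu_2)$. I concentrate on the first conjunct $\LTLglobally *_1 \psi_1$, where $\psi_1 = \LTLpastfinally_{[0,\tau]}\alpha_1 \vee \LTLeventually_{[0,\tau]}\alpha_1$ and $\alpha_1$ abbreviates the atomic predicate $\dy(X_{\pi_2},X^{*_1}_{\pi_1}) \leq \epsilon$; the second conjunct is then treated by the symmetric argument with the roles of $\pi_1,\pi_2$ exchanged and freeze index $2$ in place of $1$. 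Since the semantics of $\neg$ and $\vee$ distribute over conjunction in the expected way, establishing the two conjunct-to-clause correspondences yields the biconditional.

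First I would record, from the definition of $\LTLglobally$ as $\neg\LTLfinally\neg$ over $[0,\infty)$, that $\valuet(\LTLglobally *_1 \psi_1,H,\Pi,T_0,0)=\T$ exactly when $\valuet(*_1\psi_1,H,\Pi,T_0,t)\in\{\T,\U\}$ for every $t\geq 0$. By the semantics of the freeze operator, at a point $t\notin\calt_1=\dom(\mu_1)$ this value is $\U$ --- the only starred variable in $\psi_1$ is $X^{*_1}_{\pi_1}$, which refers to $\pi_1\mapsto\mu_1$ --- so such points impose no constraint. This is precisely what collapses the outer universal quantification over all $t\geq 0$ onto the quantification over $t_1\in\calt_1$ appearing in $\hybc$. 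At a point $t\in\calt_1$ the freeze stores $t$ at register index $1$ and evaluates $\psi_1$ under $T_0[1\mapsto t]$, reducing the task to analysing $\psi_1$ there.

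The observation that keeps this step clean is that $\LTLuntil$ and $\LTLsince$ are \emph{two-valued} in our semantics, hence so are $\LTLeventually_{[0,\tau]}\alpha_1$, $\LTLpastfinally_{[0,\tau]}\alpha_1$ and their disjunction $\psi_1$; thus requiring $*_1\psi_1\in\{\T,\U\}$ at a point $t\in\calt_1$ is the same as requiring $\psi_1=\T$ there. Unfolding $\LTLeventually_{[0,\tau]}\alpha_1$ at frozen time $t$ gives a witness $t_2$ with $t_2-t\in[0,\tau]$ and $\alpha_1(t_2)=\T$; as an atomic predicate is $\T$ only where it is defined, this forces $t_2\in\calt_2$ and $\dy(\mu_2(t_2),\mu_1(t))\leq\epsilon$, and dually $\LTLpastfinally_{[0,\tau]}\alpha_1$ supplies such a $t_2$ with $t-t_2\in[0,\tau]$. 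Their disjunction is therefore $\T$ iff there exists $t_2\in\calt_2$ with $|t_2-t|\leq\tau$ and $\dy(\mu_2(t_2),\mu_1(t))\leq\epsilon$ --- the body of the first $\hybc$ clause with $t_1=t$. Combining with the previous paragraph, $\valuet(\LTLglobally *_1\psi_1,H,\Pi,T_0,0)=\T$ iff the first clause of $\hybc_{\te}(\mu_1,\mu_2)$ holds, and symmetrically for the second conjunct and the second clause.

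The main obstacle --- and the only part that is more than bookkeeping --- is the disciplined handling of the undefined value $\U$ on two fronts: one must verify that the $\U$ tolerated by $\LTLglobally$ at points outside $\calt_1$ is exactly the vacuity that the $\hybc$ quantifier enjoys over $\calt_1$ (so that the formula is neither too strong nor too weak there), and that \emph{no} spurious $\U$ leaks upward through $\psi_1$, which is where the two-valuedness of $\LTLuntil$ and $\LTLsince$ is indispensable. The hypotheses that $0\in\dom(\mu)$ for all $\mu\in H$ and that $H$ is deterministic are available, but are not essential to this correspondence, which is driven entirely by the semantic unfolding above.
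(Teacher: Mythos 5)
Your proposal is correct and takes essentially the same route as the paper's proof: both unfold the three-valued semantics, use the freeze operator's value $\U$ at points $t \notin \dom(\mu_1)$ so that $\LTLglobally$ effectively quantifies only over $\calt_1$, and use the fact that $\LTLpastfinally_{[0,\tau]}$ and $\LTLeventually_{[0,\tau]}$ never evaluate to $\U$ (which the paper asserts ``by definition'' and you derive explicitly from the two-valuedness of $\LTLuntil$ and $\LTLsince$) to turn the $\{\T,\U\}$ requirement imposed by $\LTLglobally$ into $\T$ at points of $\calt_1$. Your added observations---that definedness of the atomic predicate forces the witness $t_2$ into $\calt_2$, and that determinism and $0 \in \dom(\mu)$ are not actually used in this particular correspondence---are accurate refinements of the same argument.
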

\begin{proof}
($\Longrightarrow$) First, suppose that $\hybc_{\te}(\mu_1,\mu_2)$.

By Definition~\ref{def:confrel}, we have that for all $t_1 \in \dom(\mu_1)$ there exists $t_2 \in \dom(\mu_2)$ such that
$|t_1-t_2|\leq \tau$ and $\dy(\mu_2(t_2),\mu_1(t_1))\leq\epsilon$.
Hence, when $t_1 \in \dom(\mu_1)$, we have that
$\valuet\big(
*_{1}\big(\LTLpastfinally_{[0,\tau]}
d_{\cal Y}(X_{\pi_2},X^{*_1}_{\pi_1}) \leq \epsilon\vee
\LTLeventually_{[0,\tau]}d_{\cal Y}(X_{\pi_2},X^{*_1}_{\pi_1}) \leq \epsilon\big),
H,\Pi,T_0,t_1
\big) = \T$.

If $t_1 \not \in \dom(\mu_1)$,
then, from the definition of the semantics of the operator $*_{1}$ we have that
$\valuet\big(
*_{1}\big(\LTLpastfinally_{[0,\tau]}
d_{\cal Y}(X_{\pi_2},X^{*_1}_{\pi_1}) \leq \epsilon\vee
\LTLeventually_{[0,\tau]}d_{\cal Y}(X_{\pi_2},X^{*_1}_{\pi_1}) \leq \epsilon\big),
H,\Pi,T_0,t_1
\big) = \U$.

Thus,
$\valuet\big(
\LTLglobally*_{1}\big(\LTLpastfinally_{[0,\tau]}
d_{\cal Y}(X_{\pi_2},X^{*_1}_{\pi_1}) \leq \epsilon\vee
\LTLeventually_{[0,\tau]}d_{\cal Y}(X_{\pi_2},X^{*_1}_{\pi_1}) \leq \epsilon\big),
H,\Pi,T_0,0
\big) = \T$.

$\valuet\big(
\LTLglobally *_2
\big(\LTLpastfinally_{[0,\tau]}d_{\cal Y}(X_{\pi_1} ,X^{*_2}_{\pi_2}) \leq \epsilon \vee
\LTLeventually_{[0,\tau]}d_{\cal Y}(X_{\pi_1} ,X^{*_2}_{\pi_2}) \leq \epsilon \big),
H,\Pi,T_0,0
\big) = \T$
can be shown by applying the same reasoning as above, this time for $\mu_2$.

From the two facts we showed, we can conclude that $ (H,\Pi,T_0,0) \models \varphi^{\hybc}_{\tau,\epsilon}$.

($\Longleftarrow$) For the other direction, assume that $ (H,\Pi,T_0,0) \models \varphi^{\hybc}_{\tau,\epsilon}$.

Let $t_1 \in \dom(\mu_1)$.
Since $ (H,\Pi,T_0,0) \models \varphi^{\hybc}_{\tau,\epsilon}$,
by the semantics of $\LTLglobally$ we have that
$\valuet\big(
*_{1}\big(\LTLpastfinally_{[0,\tau]}
d_{\cal Y}(X_{\pi_2},X^{*_1}_{\pi_1}) \leq \epsilon\vee
\LTLeventually_{[0,\tau]}d_{\cal Y}(X_{\pi_2},X^{*_1}_{\pi_1}) \leq \epsilon\big),
H,\Pi,T_0,t_1
\big) \in \{\T,\U\}$.
Taking into account that we are considering the case when $t_1 \in \dom(\mu_1)$,
and that by definition
\begin{itemize}
\item $\valuet\big(
\LTLpastfinally_{[0,\tau]}
d_{\cal Y}(X_{\pi_2},X^{*_1}_{\pi_1}) \leq \epsilon,
H,\Pi,T_0,t_1
\big) \neq U$
and
\item $\valuet\big(
\LTLeventually_{[0,\tau]}d_{\cal Y}(X_{\pi_2},X^{*_1}_{\pi_1}) \leq \epsilon,
H,\Pi,T_0,t_1
\big) \neq U$,
\end{itemize}
we conclude that
$\valuet\big(
*_{1}\big(\LTLpastfinally_{[0,\tau]}
d_{\cal Y}(X_{\pi_2},X^{*_1}_{\pi_1}) \leq \epsilon\vee
\LTLeventually_{[0,\tau]}d_{\cal Y}(X_{\pi_2},X^{*_1}_{\pi_1}) \leq \epsilon\big),
H,\Pi,T_0,t_1
\big) = \T$.
Therefore, there exists $t_2 \in \dom(\mu_2)$ such that
$|t_1-t_2|\leq \tau$ and $\dy(\mu_2(t_2),\mu_1(t_1))\leq\epsilon$.

The reasoning for the other conjunct when $t_2 \in \dom(\mu_2)$
is symmetric. We hence obtain $\hybc_{\te}(\mu_1,\mu_2)$, which concludes the proof.
\end{proof}

As a special case, we can express $\tracec_{\epsilon}$ as
\[
\begin{array}{lll}
\varphi^{\tracec}_{\epsilon} & = &
\Big(\LTLglobally *_1
\big(
\LTLeventually_{[0,0]}d_{\cal Y}(X_{\pi_2},X^{*_1}_{\pi_1}) \leq \epsilon\big)\Big) \wedge
\Big(\LTLglobally *_2
\big(
\LTLeventually_{[0,0]}d_{\cal Y}(X_{\pi_1} ,X^{*_2}_{\pi_2}) \leq \epsilon \big)\Big).
\end{array}
\]

Note that the formula
$
\varphi_{\epsilon}  =
\LTLglobally
d_{\cal Y}(X_{\pi_1},X_{\pi_2}) \leq \epsilon
$
\emph{does not}  characterize trace conformance as it does not assert the requirement that the time domains of the two traces must be the same.  The formula $\varphi^{\tracec}_{\epsilon}$, on the other hand,  requires that each time point where one of the traces is defined, must be matched by a value of the other trace at the same time point.

\begin{prop}\label{prop:correctness-tconf}
Let $H$  be a deterministic hybrid system defined over a set of real-valued variables $X$ such that $0 \in \dom(\mu)$ for each $\mu \in H$.
Let $\epsilon \geq 0$ be a rational constant, and $\mu_1,\mu_2 \in H$.
Let $\pi_1$ and $\pi_2$ be trace variables and
$\Pi=\{\pi_1 \mapsto \mu_1,\pi_2 \mapsto \mu_2 \}$. Then
\begin{center}
$\tracec_{\epsilon}(\mu_1,\mu_2)$ if and only if
$ (H,\Pi,T_0,0) \models \varphi^{\tracec}_{\epsilon}$.
\end{center}
\end{prop}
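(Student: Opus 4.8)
The plan is to follow the same two-direction strategy as in the proof of Proposition~\ref{prop:correctness-hconf}, but to exploit the fact that the interval $[0,0]$ collapses the temporal operators to the current time point, and to pay special attention to how the formula enforces the domain-equality requirement $\calt_1 = \calt_2$ that is built into trace conformance (Definition~\ref{def:confrel}).

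First I would analyse the inner starred subformula of the first conjunct, $\psi_1 = *_1\big(\LTLeventually_{[0,0]} \dy(X_{\pi_2},X^{*_1}_{\pi_1}) \leq \epsilon\big)$, pointwise at an arbitrary time $t$. By the semantics of the freeze operator, $\valuet(\psi_1,H,\Pi,T_0,t) = \U$ whenever $t \notin \dom(\mu_1)$ (the only starred variable refers to $\pi_1$), and otherwise it equals the value of $\LTLeventually_{[0,0]} \dy(X_{\pi_2},X^{*_1}_{\pi_1}) \leq \epsilon$ at $t$ under the register valuation $T_0[1 \mapsto t]$. Since $\LTLeventually_{[0,0]}\chi$ is $\T$ exactly when $\chi$ is $\T$ at the current point and is $\F$ otherwise (the only admissible witness is $t' = t$), I would then observe that, for $t \in \dom(\mu_1)$, the atomic predicate $\dy(X_{\pi_2},X^{*_1}_{\pi_1}) \leq \epsilon$ is \emph{defined} iff $t \in \dom(\mu_2)$, in which case its value is $\T$ iff $\dy(\mu_1(t),\mu_2(t)) \leq \epsilon$, and is $\U$ otherwise. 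Consequently $\valuet(\psi_1,H,\Pi,T_0,t)$ is $\T$ iff $t \in \dom(\mu_1)\cap\dom(\mu_2)$ and $\dy(\mu_1(t),\mu_2(t)) \leq \epsilon$; it is $\U$ iff $t \notin \dom(\mu_1)$; and it is $\F$ in every remaining case---crucially including the case $t \in \dom(\mu_1)\setminus\dom(\mu_2)$, where the undefined atomic predicate is turned into $\F$ by $\LTLeventually_{[0,0]}$.

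Next, using the semantics of $\LTLglobally$ at time $0$ (its value is $\T$ iff $\psi_1$ never takes value $\F$ at any $t \geq 0$), I would conclude that the first conjunct holds iff for every $t \in \dom(\mu_1)$ we have $t \in \dom(\mu_2)$ and $\dy(\mu_1(t),\mu_2(t)) \leq \epsilon$, i.e.\ iff $\dom(\mu_1) \subseteq \dom(\mu_2)$ together with $\epsilon$-closeness on $\dom(\mu_1)$. By the symmetric argument (swapping $\pi_1,\pi_2$ and $\mu_1,\mu_2$), the second conjunct holds iff $\dom(\mu_2) \subseteq \dom(\mu_1)$ with $\epsilon$-closeness on $\dom(\mu_2)$. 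Conjoining the two yields $\dom(\mu_1) = \dom(\mu_2)$ and $\dy(\mu_1(t),\mu_2(t)) \leq \epsilon$ for all $t$ in the common domain, which is precisely $\tracec_{\epsilon}(\mu_1,\mu_2)$ by Definition~\ref{def:confrel}. Both directions of the biconditional then follow by reading this equivalence forwards and backwards, exactly as in Proposition~\ref{prop:correctness-hconf}.

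The main obstacle, and the place deserving the most care, is the handling of the three-valued semantics around domain mismatch: I must justify rigorously why a point $t \in \dom(\mu_1)\setminus\dom(\mu_2)$ forces the first conjunct to $\F$ rather than being tolerated as $\U$. This hinges on the asymmetry between the freeze operator, which produces $\U$ when the \emph{own} trace $\mu_1$ is undefined (and $\U$ is accepted by $\LTLglobally$), and the operator $\LTLeventually_{[0,0]}$, which collapses an undefined atomic predicate to $\F$. This is exactly the subtlety that distinguishes $\varphi^{\tracec}_{\epsilon}$ from the naive formula $\varphi_{\epsilon} = \LTLglobally\, \dy(X_{\pi_1},X_{\pi_2}) \leq \epsilon$, which the paper notes fails to enforce domain equality; I would make this contrast explicit to confirm that the ``freeze-then-$\LTLeventually_{[0,0]}$'' pattern is what captures the requirement $\calt_1 = \calt_2$.
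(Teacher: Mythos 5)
Your proposal is correct and takes essentially the same route as the paper's proof: both reduce $\tracec_{\epsilon}(\mu_1,\mu_2)$ to the two directional statements (domain inclusion plus $\epsilon$-closeness in each direction) matching the two conjuncts of $\varphi^{\tracec}_{\epsilon}$, and both rest on the observation that the freeze operator yields $\U$ at points outside $\dom(\mu_1)$ (tolerated by $\LTLglobally$) while $\LTLeventually_{[0,0]}$ collapses an undefined atomic predicate to $\F$. Your explicit three-valued pointwise characterization of $*_1\big(\LTLeventually_{[0,0]}\,\dy(X_{\pi_2},X^{*_1}_{\pi_1}) \leq \epsilon\big)$ just makes fully explicit the step the paper asserts without comment in the backward direction, namely why the value at a point $t_1 \in \dom(\mu_1)$ must be $\T$ rather than $\U$.
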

\begin{proof}
($\Longrightarrow$) First, suppose that $\tracec_{\epsilon}(\mu_1,\mu_2)$.

By Definition~\ref{def:confrel}, we have that $\dom(\mu_1) = \dom(\mu_2)$ and for all $t_1 \in \dom(\mu_1)$ it holds that $\dy(\mu_2(t_1),\mu_1(t_1))\leq\epsilon$.
This is equivalent to the conjunction of the following statements:
\begin{itemize}
\item for all $t_1 \in \dom(\mu_1)$, it holds that $t_1 \in \dom(\mu_2)$ and $\dy(\mu_2(t_1),\mu_1(t_1))\leq\epsilon$, and
\item for all $t_2 \in \dom(\mu_2)$, it holds that $t_2 \in \dom(\mu_1)$ and $\dy(\mu_1(t_2),\mu_2(t_2))\leq\epsilon$.
\end{itemize}

\noindent
Therefore, if $t \in \dom(\mu_1)=\dom(\mu_2)$ it holds that
\[
\begin{array}{l}
\valuet\big(
*_1
\big(
\LTLeventually_{[0,0]}d_{\cal Y}(X_{\pi_2},X^{*_1}_{\pi_1}) \leq \epsilon\big),
H,\Pi,T_0,t
\big) = \T  \text{ and}\\
\valuet\big(
*_2
\big(
\LTLeventually_{[0,0]}d_{\cal Y}(X_{\pi_1} ,X^{*_2}_{\pi_2}) \leq \epsilon \big),
H,\Pi,T_0,t
\big) = \T.
\end{array}
\]

When $t \not\in \dom(\mu_1)=\dom(\mu_2)$ we have that that
\[
\begin{array}{l}
\valuet\big(
*_1
\big(
\LTLeventually_{[0,0]}d_{\cal Y}(X_{\pi_2},X^{*_1}_{\pi_1}) \leq \epsilon\big),
H,\Pi,T_0,t
\big) = \U  \text{ and}\\
\valuet\big(
*_2
\big(
\LTLeventually_{[0,0]}d_{\cal Y}(X_{\pi_1} ,X^{*_2}_{\pi_2}) \leq \epsilon \big),
H,\Pi,T_0,t
\big) = \U.
\end{array}
\]

Thus,
$
\valuet\big(
\varphi^{\tracec}_{\epsilon},
H,\Pi,T_0,0
\big) = \T
$,
which is what we had to prove.

($\Longleftarrow$) For the other direction, assume that $ (H,\Pi,T_0,0) \models \varphi^{\tracec}_{\epsilon}$.

Let $t_1 \in \dom(\mu_1)$. We have that
$\valuet\big(
*_1
\big(
\LTLeventually_{[0,0]}d_{\cal Y}(X_{\pi_2},X^{*_1}_{\pi_1}) \leq \epsilon\big),
H,\Pi,T_0,t_1
\big) = \T
$.
This implies that $t_1 \in \dom(\mu_2)$ and
$\valuet\big(
d_{\cal Y}(X_{\pi_2},X^{*_1}_{\pi_1}) \leq \epsilon,
H,\Pi,\{1 \mapsto t_1\},t_1
\big) = \T.
$

Thus, we can conclude that for all $t_1 \in \dom(\mu_1)$, it holds that $t_1 \in \dom(\mu_2)$ and $\dy(\mu_2(t_1),\mu_1(t_1))\leq\epsilon$. Analogously, we can show that for all $t_2 \in \dom(\mu_2)$, it holds that $t_2 \in \dom(\mu_1)$ and $\dy(\mu_1(t_2),\mu_2(t_2))\leq\epsilon$.
Hence, by Definition~\ref{def:confrel}, $\tracec_{\epsilon}(\mu_1,\mu_2)$.
\end{proof}

We use the idea of the above encoding to define a closed \hyperstlstar formula that characterizes hybrid cleanness,  $\hybclean(\tau_I,\epsilon_I,\tau_O,\epsilon_O)$, for deterministic hybrid systems.

For the rest of the section we consider hybrid systems $H$ such that for every $\mu \in H$ it holds that $0 \in \dom(\mu)$,  that is, we assume that all traces are defined at time point $0$.

Furthermore,  we assume that the set of variables $X$ defining the states of the hybrid system $H$ contains an \emph{explicit clock variable $c$} representing the current time, that is never reset. That is,  $c$ simply captures the time-stamps of the values of the GTTs in $H$.  \revised{Formally,  for every GTT $\mu \in H$, and every $t \in \realsnn$, it holds that $\mu(t)(c) = t$.} With that,  the atomic propositions in \AP can refer to the current time point, and the freeze operator captures the current time-stamp together with the current values of the other variables.
\revised{Let $\alpha \in \AP$ be an atomic proposition and $r, r_i \in \realsnn$ for $i \in \mathcal{I}$ be non-negative real constants.  We denote by $\alpha[r,r_1,\ldots,r_{|\mathcal I|}]$ the atomic predicate obtained from $\alpha$ by replacing each variable $c_\pi$ by $r$, and each variable $c_\pi^{*_i}$ by $r_i$, for all $\pi \in V_{\mathit{trace}}$ and $i \in \mathcal I$.
By the definition of the clock variable $c$,  for every trace assignment $\Pi$, register valuation $T$, and $t \in \realsnn$ we have that
\[
\valuet(\alpha,H,\Pi,T, t) =
\valuet(\alpha[t,T(1),\ldots,T(|\mathcal I|)],H,\pi,T,t),
\]
when for every $\pi \in V_{\mathit{trace}}$ for which $c_\pi$ occurs in $\alpha$ it holds that $t \in \dom(\Pi(\pi))$ and for every $\pi \in V_{\mathit{trace}}$ and $i \in \mathcal I$ for which $c_\pi^{*_i}$ occurs in $\alpha$ it holds that $T(i) \in \dom(\Pi(\pi))$.
}

Let $\tau_I$ and $\epsilon_I$ be non-negative rational constants defining the threshold values for the input conformance relation,  and $\tau_O$ and $\epsilon_O$ be the ones for the output conformance.

Let $\pi$ and $\pi'$ be trace variables and $i,s,e \in \mathcal{I}$.  First, we define the formulas
\[
\begin{array}{lll}
\varphi^{\textsf{matchI}}_{\tau_I,\epsilon_I}(\pi,i,\pi',s,e)  & = &
*_i \big(
\LTLpastfinally_{[0,\tau_I]}
(d_{I}(X_{\pi'},X^{*_i}_{\pi}) \leq \epsilon_I
\wedge c_{\pi'} \geq c_{\pi'}^{*_s} \wedge c_{\pi'} \leq c_{\pi'}^{*_e}) \vee\\&&
\phantom{
*_i \big(
}
\LTLeventually_{[0,\tau_I]}
(d_{I}(X_{\pi'},X^{*_i}_{\pi}) \leq \epsilon_I
\wedge  c_{\pi'} \geq c_{\pi'}^{*_s} \wedge c_{\pi'} \leq c_{\pi'}^{*_e} )
\big),\\
\varphi^{\textsf{matchO}}_{\tau_O,\epsilon_O}(\pi,i,\pi',s,e) & =&
*_i \big(
\LTLpastfinally_{[0,\tau_O]}
(d_{O}(X_{\pi'},X^{*_i}_{\pi}) \leq \epsilon_O
\wedge c_{\pi'} \geq c_{\pi'}^{*_s} \wedge c_{\pi'} \leq c_{\pi'}^{*_e}) \vee\\&&
\phantom{
*_i \big(
}
\LTLeventually_{[0,\tau_O]}
(d_{O}(X_{\pi'},X^{*_i}_{\pi}) \leq \epsilon_O
\wedge  c_{\pi'} \geq c_{\pi'}^{*_s} \wedge c_{\pi'} \leq c_{\pi'}^{*_e} )
\big),
\end{array}
\]
where  $d_{I}(X,X')$ and $d_{O}(X,X')$ are \revised{the arithmetic expressions characterizing the metrics on the sets of input and output values of the considered hybrid system.}

Intuitively,  the formula $\varphi^{\textsf{matchI}}_{\tau_I,\epsilon_I}(\pi,i,\pi',s,e)$ evaluated at time point $t$ and register valuation $T$ is true if and only if there exists a time point $t' \in [t-\tau_I,t+\tau_I] \cap [T(s),T(e)]$ such that the input value at time $t$ on the trace represented by $\pi$ and the input value at time $t'$ on the trace represented by $\pi'$ are $\epsilon_I$-close.  The formula $\varphi^{\textsf{matchO}}_{\tau_O,\epsilon_O}(\pi,i,\pi',s,e)$ states the same  for the output values.  The need to constrain the time $t'$ where the match of the values at $t$ must be found comes from the fact that in the definition of cleanness in Section~\ref{sec:definition}, prefixes are compared using the predicate $\prefixconf$. Recall that $\prefixconf$ compares two prefixes $\mu_1$ and $\mu_2$ by requiring that there exist segments  $\mu_1[t_1^s\dots t_1^e]$ and $ \mu_2[t_2^s\dots t_2^e]$  obtained from them, that are conformant.  In the above formulas, the frozen values of the clock variable $c$  represent the end points of the interval for the trace assigned to $\pi'$.

Using the formula
$\varphi^{\textsf{matchI}}_{\tau_I,\epsilon_I}(\pi,i,\pi',s,e)$ we define the formula
$\varphi^{\textsf{PrefConfI}}_{\tau_I,\epsilon_I}(\pi_1, \pi_2)$
which is true if and only if the current time point defines a pair of prefixes of the traces represented by $\pi_1$ and $\pi_2$ for which there exist hybrid conforming segments obtained  from the prefixes by possibly removing a prefix/suffix of length at most $\tau_I$ or adding  a suffix of length at most $\tau_I$.

The formula is defined as a disjunction over the possible ways in which the end-points of the two segments are ordered on the time line.
Let $P$ be a set of $4$-tuples of the indices $\{3,4,5,6\} \subseteq \cali$ such that $(s_1,e_1,s_2,e_2) \in P$ if and only if $s_1,e_1,s_2,e_2$ are pairwise different, and
$s_1,s_2 \in\{3,4\}$ and $e_1,e_2 \in\{5,6\}$.  For $\overline p = (s_1,e_1,s_2,e_2) \in P$,  abusing notation we define $\overline p(s_1) = \overline p(e_1) =\pi_1$ and  $\overline p(s_2) = \overline p(e_2) =\pi_2$.  That is, the function $\overline p$ maps each endpoint index to the trace variable with which it is associated.  We now define
\[
\begin{array}{rll}
\varphi^{\mathsf{PrefConfI}}_{\tau_I,\epsilon_I}(\pi_1,\pi_2) & = &
\bigvee\limits_{\overline p = (s_1,e_1,s_2,e_2) \in P}
\big(*_{7} \varphi_{\pi_1} \wedge  *_{7}  \varphi_{\pi_2}\big)\\
\varphi_{\pi_1} &=&
\LTLpastfinally
\big(c_{\overline p(3)} \leq \tau_I \wedge *_{3}
\LTLfinally
\big(c_{\overline p(4)} \leq \tau_I \wedge \\&&
\phantom{
\LTLpastfinally
\big(
}
*_{4}
\LTLfinally
\big(c_{\overline p(5)} \geq (c_{\pi_1}^{*_7} - \tau_I) \wedge
c_{\overline p(5)} \leq (c_{\pi_1}^{*_7} + \tau_I) \wedge\\&&
\phantom{
\LTLpastfinally
\big(
}
*_{5}
\LTLfinally
\big(c_{\overline p(6)} \geq (c_{\pi_1}^{*_7} - \tau_I) \wedge
c_{\overline p(6)} \leq (c_{\pi_1}^{*_7} + \tau_I) \wedge
*_6\; \varphi^{\mathsf{ConfI}}
\big)
\big)\big)\big)\\
\varphi_{\pi_2} &=&
\LTLpastfinally
\big(c_{\overline p(3)} \leq \tau_I \wedge *_{3}
\LTLfinally
\big(c_{\overline p(4)} \leq \tau_I \wedge \\&&
\phantom{
\LTLpastfinally
\big(
}
*_{4}
\LTLfinally
\big(c_{\overline p(5)} \geq (c_{\pi_2}^{*_7} - \tau_I) \wedge
c_{\overline p(5)} \leq (c_{\pi_2}^{*_7} + \tau_I) \wedge\\&&
\phantom{
\LTLpastfinally
\big(
}
*_{5}
\LTLfinally
\big(c_{\overline p(6)} \geq (c_{\pi_2}^{*_7} - \tau_I) \wedge
c_{\overline p(6)} \leq (c_{\pi_2}^{*_7} + \tau_I) \wedge
*_6 \;  \varphi^{\mathsf{ConfI}}
\big)
\big)\big)\big)\\
\varphi^{\mathsf{ConfI}}& =&
\LTLpastglobally \Big(
\big(c_{\pi_1} \geq c_{\pi_1}^{*_{s_1}} \wedge c_{\pi_1} \leq c_{\pi_1}^{*_{e_1}})
\rightarrow  \varphi^{\textsf{matchI}}_{\tau_I,\epsilon_I}(\pi_1,1,\pi_2,s_2,e_2)
\Big)\wedge\\&&
\LTLpastglobally \Big(
\big(c_{\pi_2} \geq c_{\pi_2}^{*_{s_2}} \wedge c_{\pi_2} \leq c_{\pi_2}^{*_{e_2}})
\rightarrow   \varphi^{\textsf{matchI}}_{\tau_I,\epsilon_I}(\pi_2,2,\pi_1,s_1,e_1)
\Big).
\end{array}
\]
The conjunct $*_{7} \varphi_{\pi_1}$ handles the case when the current time point $t$ (i.e., the last time point for the considered prefixes) is in the domain of $\pi_1$. The second conjunct handles the case when $t$ is in the domain of $\pi_2$.  If neither is the case,  then the value of the conjunction is $\U$.  Since the formulas $*_{7} \varphi_{\pi_1}$ and $*_{7} \varphi_{\pi_2}$ differ only in the trace on which the time-point $t$ is frozen, if both of them have a defined value, than these values are necessarily the same.

Each of $ \varphi_{\pi_1}$ and $ \varphi_{\pi_2}$ asserts the existence of a sequence of time points defining the compared segments of the two traces  and their input conformance (formula $\varphi^{\mathsf{ConfI}}$).  \revised{The formula $\varphi^{\mathsf{ConfI}}$ captures the requirement that the two input prefixes ending at the current time point have segments (defined by the pairs of time points $c_{\pi_1}^{*_{s_1}}$ and $c_{\pi_1}^{*_{e_1}}$,  and $c_{\pi_2}^{*_{s_2}}$ and $c_{\pi_2}^{*_{e_2}}$, respectively) that are hybrid conformant, as in Definition~\ref{def:prefixconf}.}

The formula $\varphi^{\textsf{PrefConfO}}_{\tau_O,\epsilon_O}(\pi_1, \pi_2)$ is defined analogously using $\varphi^{\textsf{matchO}}_{\tau_O,\epsilon_O}(\pi,i,\pi',s,e)$.

\begin{prop}\label{prop:correctness-prefcon}
Let $H$  be a deterministic hybrid system defined over a set of real-valued variables $X$ that includes an explicit clock variable $c$,  and such that $0 \in \dom(\mu)$ for each $\mu \in H$.
Let $\tau_I,\tau_O,\epsilon_I,\epsilon_O \geq 0$ be rational constants,  \revised{and the predicates $\prefixconf_I$ and $\prefixconf_O$ be instantiated using
${\hybc}_{\tau_I,\epsilon_I}$ and ${\hybc}_{\tau_O,\epsilon_O}$ respectively.
That is,  let $\conf_I = {\hybc}_{\tau_I,\epsilon_I}$ and $\conf_O = {\hybc}_{\tau_O,\epsilon_O}$.}
Let $\mu_1,\mu_2 \in H$,
let $\pi_1$ and $\pi_2$ be trace variables and
$\Pi=\{\pi_1 \mapsto \mu_1,\pi_2 \mapsto \mu_2 \}$ a trace assignment.
\begin{enumerate}
\item If $t \in \dom(\mu_1) \cup \dom(\mu_2)$, then
\begin{center}
$\prefixconf_I(\mu_1\revised{^I},\mu_2\revised{^I},t)$ is true if and only if
$\valuet\big(\varphi^{\mathsf{PrefConfI}}_{\tau_I,\epsilon_I}(\pi_1,\pi_2),H,\Pi,T_0,t\big) \in \{\T,\U\}$.
\end{center}
\item If $t \in \dom(\mu_1) \cup \dom(\mu_2)$, then\\
%\begin{center}
$\prefixconf_O(\mu_1\revised{^O},\mu_2\revised{^O},t)$ is true if and only if
$\valuet\big(\varphi^{\mathsf{PrefConfO}}_{\tau_O,\epsilon_O}(\pi_1,\pi_2),H,\Pi,T_0,t\big) \in\{\T,\U\}$.
%\end{center}
\item If $t \not\in \dom(\mu_1) \cup \dom(\mu_2)$ then
$\valuet\big(\varphi^{\mathsf{PrefConfI}}_{\tau_I,\epsilon_I}(\pi_1,\pi_2),H,\Pi,T_0,t\big) = \U$.
\item If $t \not\in \dom(\mu_1) \cup \dom(\mu_2)$ then
$\valuet\big(\varphi^{\mathsf{PrefConfO}}_{\tau_O,\epsilon_O}(\pi_1,\pi_2),H,\Pi,T_0,t\big) = \U$.
\end{enumerate}
\end{prop}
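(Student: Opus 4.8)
The plan is to prove parts (1),(2) by setting up a direct dictionary between the witnesses of $\prefixconf_I$ (respectively $\prefixconf_O$) in Definition~\ref{def:prefixconf} and the frozen-register configurations that make the formula true, and to dispatch parts (3),(4) by a short argument about the outermost freeze. Since $\varphi^{\mathsf{PrefConfO}}$ differs from $\varphi^{\mathsf{PrefConfI}}$ only by replacing $\varphi^{\textsf{matchI}}$ with $\varphi^{\textsf{matchO}}$ and the metric $d_I$ with $d_O$, I would prove (2),(4) by the identical argument as (1),(3) and only write out the input case. For parts (3),(4): if $t \notin \dom(\mu_1) \cup \dom(\mu_2)$, then in each disjunct the subformula $*_7\varphi_{\pi_1}$ mentions the indexed clock variable $c^{*_7}_{\pi_1}$ and hence evaluates to $\U$ whenever $t \notin \dom(\mu_1)$, and symmetrically $*_7\varphi_{\pi_2}$ is $\U$ whenever $t \notin \dom(\mu_2)$; as neither trace is defined at $t$, every conjunct of every disjunct is $\U$, so the whole disjunction is $\U$.

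The heart of the proof is the correspondence for part (1). I would read the nested $\LTLpastfinally/\LTLfinally$ blocks in $\varphi_{\pi_1}$ (and $\varphi_{\pi_2}$) as an existential choice of four time points frozen into registers $3,4,5,6$, visited in increasing time order, playing the roles of the segment endpoints $t_1^s,t_2^s,t_1^e,t_2^e$: the guards $c_{\overline p(3)} \leq \tau_I$ and $c_{\overline p(4)} \leq \tau_I$ force the two start points into $[0,\tau_I]$, while $c_{\overline p(5)} \in [c^{*_7}_{\pi_1}-\tau_I, c^{*_7}_{\pi_1}+\tau_I]$ and the analogous guard on $c_{\overline p(6)}$ force the two end points into $[t-\tau_I, t+\tau_I]$, using the clock convention $\mu(t)(c)=t$ together with the fact that $*_7$ freezes the current time $t$. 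The disjunction over $\overline p \in P$ ranges exactly over the possible relative orders of which trace contributes the earlier start and the earlier end, so some disjunct matches the arrangement of any given witness. Finally, $\varphi^{\mathsf{ConfI}}$ encodes the two directions of hybrid conformance of the chosen segments: each $\LTLpastglobally$ conjunct, guarded by membership in one segment via $c_{\pi'}\geq c^{*_s}_{\pi'} \wedge c_{\pi'}\leq c^{*_e}_{\pi'}$, asserts through $\varphi^{\textsf{matchI}}_{\tau_I,\epsilon_I}$ the existence of an $\epsilon_I$-close point within $\tau_I$ on the other segment --- precisely the matching condition whose correctness was already established in the proof of Proposition~\ref{prop:correctness-hconf}, now relativised to the interval $[T(s),T(e)]$.

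With this dictionary, the two implications follow. For the forward direction I would take witness endpoints from $\prefixconf_I(\mu_1^I,\mu_2^I,t)$, select the disjunct $\overline p$ matching their order, and use the $\LTLfinally/\LTLpastfinally$ semantics (which skip time points at which the subformula is undefined) to realise the frozen registers at these endpoints, concluding that the value lies in $\{\T,\U\}$. For the backward direction I would, from a value in $\{\T,\U\}$, extract a disjunct whose $\varphi_{\pi_1}$- or $\varphi_{\pi_2}$-conjunct is $\T$, read off the four frozen time points, and recover both directions of hybrid conformance of the induced segments, yielding $\prefixconf_I$.

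I expect the main obstacle to be the interaction between the three-valued semantics and the case where $t$ lies in the domain of only one of the two traces. There, one of $*_7\varphi_{\pi_1}, *_7\varphi_{\pi_2}$ is forced to $\U$, so no disjunct can ever evaluate to $\T$; this is exactly why the statement reads ``$\prefixconf_I$ holds iff the value is in $\{\T,\U\}$'' rather than ``iff the value is $\T$'', and I would need to argue carefully that a $\U$ value still forces the surviving conjunct to be $\T$ (hence produces genuine conformance witnesses), while a non-conformant pair drives every conjunct to $\F$. The remaining, more routine, bookkeeping obstacles are verifying that the visiting order imposed by the nested $\LTLpastfinally/\LTLfinally$ operators is compatible with every arrangement encoded by $P$, and that the guards translating the interval constraints $[0,\tau_I]$ and $[t-\tau_I,t+\tau_I]$ coincide with Definition~\ref{def:prefixconf}.
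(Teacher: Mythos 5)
Your proposal is correct and follows essentially the same route as the paper's proof: the forward direction sorts the witness endpoints $t_1^s,t_2^s,t_1^e,t_2^e$ of $\prefixconf$ into the registers $3$--$6$, selects the disjunct $\overline p\in P$ matching their order, and freezes $t$ via $*_7$; the backward direction extracts the four frozen time points from a disjunct whose surviving conjunct is $\T$; and (3),(4) follow from the semantics of $*$ and $\wedge$, exactly as in the paper. In particular, the subtlety you flag --- that when $t$ lies in the domain of only one trace the other conjunct is forced to $\U$ while the surviving conjunct must be $\T$, which is why the statement reads ``value in $\{\T,\U\}$'' --- is precisely the without-loss-of-generality case analysis the paper carries out.
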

\begin{proof}
We show (1), the proof for (2) is analogous.

($\Longrightarrow$)
Suppose that $\prefixconf_I(\mu_1\revised{^I},\mu_2,\revised{^I}t)$ is true.
By Definition~\ref{def:prefixconf}, there exist
$t_1^s, t_2^s \in [0,\tau_I]$ and
$t_1^e, t_2^e \in [t-\tau_I,t+\tau_I]$ such that
$\conf_I(\mu_1\revised{^I}[t_1^s\dots t_1^e], \mu_2\revised{^I}[t_2^s\dots t_2^e])$ is true.

Let $t_3,t_4,t_5,t_6$ be a permutation of
$t_1^s, t_2^s, t_1^e, t_2^e$ such that
$t_3 \leq t_4 \leq t_5 \leq t_6$.
Then, $t_3,t_4 \in [0,\tau_I]$ and $t_5,t_6 \in [t-\tau_I,t+\tau_I]$.
Let $\overline p = (s_1,e_1,s_2,e_2) \in P$ be defined according to the permutation $t_3,t_4,t_5,t_6$, that is,
$s_1 = i$ where $t_1^s$ is $t_i$, and so on. Let $t_7 =t$.
We define the register valuation
$T = \{i\mapsto t_i \mid i \in \{3,4,5,6,7\}\}$.

Since
$\conf_I(\mu_1\revised{^I}[t_1^s\dots t_1^e], \mu_2\revised{^I}[t_2^s\dots t_2^e])$,
we have that
$\valuet\big(\varphi^{\mathsf{ConfI}},H,\Pi,T,t_6\big) = \T$.

Since $t \in \dom(\mu_1) \cup \dom(\mu_2)$,
assume, without loss of generality that $t \in \dom(\mu_1)$.
By the choice of $t_3,t_4,t_5,t_6$ we have that
$t_3 \leq t_4 \leq t_5 \leq t_6$, and
$t_3,t_4 \in [0,\tau]$ and
$t_5,t_6 \in [t-\tau_I,t+\tau_I]$ which,
together with the definition of $\varphi_{\pi_1}$ implies that
$\valuet\big(\varphi_{\pi_1},H,\Pi,\{7 \mapsto t_7\},t_7\big)=\T$.

If $t \in \dom(\mu_2)$ we can show as above that the value of $*_{7}\varphi_{\pi_2}$ is $\T$.  Otherwise, the value of $*_{7}\varphi_{\pi_2}$ is $\U$.
Hence, we conclude that
$\valuet\big(\varphi^{\mathsf{PrefConfI}}_{\tau_I,\epsilon_I}(\pi_1,\pi_2),H,\Pi,T_0,t\big)\in \{\T,\U\}$.

($\Longleftarrow$)
Now, suppose that
$\valuet\big(\varphi^{\mathsf{PrefConfI}}_{\tau_I,\epsilon_I}(\pi_1,\pi_2),H,\Pi,T_0,t\big) \in \{\T,\U\}$.
Hence, there exists
$\overline p = (s_1,e_1,s_2,e_2) \in P$ for which we have
$\valuet\big(\big(*_{7} \varphi_{\pi_1} \wedge  *_{7} \varphi_{\pi_2}\big),
H,\Pi,T_0,t) \in \{\T,\U\}$.  Since $t \in \dom(\mu_1) \cup \dom(\mu_2)$,  assume, without loss of generality, that $t  \in \dom(\mu_1)$.  Then we must have that $\valuet\big(*_{7}\, \varphi_{\pi_1},H,\Pi,T_0,t\big) =\T $. Let $t_7 = t$. Then, $\valuet\big(\varphi_{\pi_1},H,\Pi,\{7 \mapsto t_7\},t\big) = \T$.
Therefore, according to the definition of $\varphi_{\pi_1}$ we have that there exist
$t_3 \leq t_4 \leq t_5 \leq t_6$ such that
$t_3,t_4 \in [0,\tau]$ and
$t_5,t_6 \in [t-\tau_I,t+\tau_I]$, and, furthermore,
$\valuet\big(\varphi^{\mathsf{ConfI}},H,\Pi,T,t_6\big) = \T$ for the valuation $T = \{i\mapsto t_i \mid i \in \{3,4,5,6,7\}\}$.

We define $t_1^s, t_2^s, t_1^e, t_2^e$ to be the permutation of $t_3,t_4,t_5,t_6$ determined by $\overline p$,
that is $t_1^s = t_{s_1}$, and so on.
Then, since
$\valuet\big(\varphi_{\pi_1},H,\Pi,\{7 \mapsto t_7\},t\big)=\T$, we have that $t_1^s \leq t_1^e$, $t_2^s \leq t_2^e$, and
$t_1^s, t_2^s \in [0,\tau_I]$ and
$t_1^e, t_2^e \in [t-\tau_I,t+\tau_I]$. Hence,
$\valuet\big(\varphi^{\mathsf{ConfI}},H,\Pi,T,t_6\big)=\T$ implies that
$\conf_I(\mu_1\revised{^I}[t_1^s\dots t_1^e], \mu_2\revised{^I}[t_2^s\dots t_2^e])$, which allows us to conclude that $\prefixconf_I(\mu_1\revised{^I},\mu_2\revised{^I},t)$.

\bigskip

(3) and (4) follow directly from the semantics of the operators $\wedge$ and $*$.
\end{proof}

We now define the formula characterizing hybrid cleanness as
\[\Phi^{\hybclean}_{\tau_I,\epsilon_I,\tau_O,\epsilon_O} =
\forall \pi_1. \forall\pi_2. \;
\LTLglobally \Big(
\big(\LTLpastglobally
\varphi^{\textsf{PrefConfI}}_{\tau_I,\epsilon_I}(\pi_1, \pi_2)
\big)
\rightarrow
\varphi^{\textsf{PrefConfO}}_{\tau_O,\epsilon_O}(\pi_1, \pi_2)
\Big)
.\]

\begin{prop}
Let $H$  be a deterministic hybrid system defined over a set of real-valued variables $X$ that includes an explicit clock variable $c$,  and such that $0 \in \dom(\mu)$ for each $\mu \in H$.  Let  $\tau_I,\tau_O,\epsilon_I,\epsilon_O \geq 0$ be rational constants.  It holds that
\begin{center}
$H$ is $\hybclean(\tau_I,\epsilon_I,\tau_O,\epsilon_O)$ if and only if
$ H \models \Phi^{\hybclean}_{\tau_I,\epsilon_I,\tau_O,\epsilon_O}$.
\end{center}
\end{prop}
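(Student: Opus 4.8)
The plan is to peel off the logical structure of $\Phi^{\hybclean}_{\tau_I,\epsilon_I,\tau_O,\epsilon_O}$ one layer at a time, reducing $H \models \Phi^{\hybclean}_{\tau_I,\epsilon_I,\tau_O,\epsilon_O}$ to the quantified implication of Definition~\ref{def:confclean-det}, and then invoke Proposition~\ref{prop:correctness-prefcon} to translate the two inner subformulas into the predicates $\prefixconf_I$ and $\prefixconf_O$. Write $\psi$ for the body $(\LTLpastglobally\,\varphi^{\mathsf{PrefConfI}}_{\tau_I,\epsilon_I}(\pi_1,\pi_2)) \rightarrow \varphi^{\mathsf{PrefConfO}}_{\tau_O,\epsilon_O}(\pi_1,\pi_2)$. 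First I would unfold the two outermost universal quantifiers. Since $H \models \Phi^{\hybclean}_{\tau_I,\epsilon_I,\tau_O,\epsilon_O}$ means $\valuet(\Phi^{\hybclean}_{\tau_I,\epsilon_I,\tau_O,\epsilon_O},H,\Pi_\emptyset,T_0,0)=\T$, and since the subformula $\LTLglobally\,\psi$ is two-valued here (established below), the semantics of $\forall\pi$ shows this is equivalent to: for every pair $\mu_1,\mu_2 \in H$, setting $\Pi = \{\pi_1 \mapsto \mu_1, \pi_2 \mapsto \mu_2\}$, we have $\valuet(\LTLglobally\,\psi,H,\Pi,T_0,0) = \T$.

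Next I would expand the outer $\LTLglobally$ and the inner $\LTLpastglobally$ via $\LTLglobally\,\chi = \neg\LTLfinally\,\neg\chi$ and $\LTLpastglobally\,\chi = \neg\LTLpastfinally\,\neg\chi$. Because the guard $\top$ in $\LTLfinally\,\chi = \top\LTLuntil\,\chi$ (respectively $\LTLpastfinally$) is always $\T$, the ``for all intermediate points'' clause in the $\LTLuntil$/$\LTLsince$ semantics is vacuously satisfied, so both derived operators are effectively two-valued. Concretely, $\valuet(\LTLglobally\,\psi,H,\Pi,T_0,0) = \T$ iff $\valuet(\psi,H,\Pi,T_0,t) \neq \F$ for every $t \geq 0$, and $\valuet(\LTLpastglobally\,\varphi^{\mathsf{PrefConfI}}_{\tau_I,\epsilon_I},H,\Pi,T_0,t) = \T$ iff $\valuet(\varphi^{\mathsf{PrefConfI}}_{\tau_I,\epsilon_I},H,\Pi,T_0,t') \in \{\T,\U\}$ for all $t' \in [0,t]$. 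By the three-valued semantics of implication $A \rightarrow B := \neg A \vee B$, the value $\valuet(\psi,\dots,t)$ differs from $\F$ precisely when, if the antecedent evaluates to $\T$, then the consequent evaluates to $\T$ or $\U$.

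Now I would substitute Proposition~\ref{prop:correctness-prefcon}. By items~(1) and~(3), for fixed $t$ the antecedent $\valuet(\LTLpastglobally\,\varphi^{\mathsf{PrefConfI}}_{\tau_I,\epsilon_I},\dots,t)=\T$ is equivalent to $\prefixconf_I(\mu_1^I,\mu_2^I,t')$ holding for every $t' \leq t$ with $t' \in \dom(\mu_1)\cup\dom(\mu_2)$ --- time points outside the domain contribute the value $\U$ and are thus automatically admissible --- which is exactly the premise $\forall t' \leq t:\prefixconf_I(i_1,i_2,t')$ of the cleanness implication. By items~(2) and~(4), the consequent $\valuet(\varphi^{\mathsf{PrefConfO}}_{\tau_O,\epsilon_O},\dots,t) \in \{\T,\U\}$ is equivalent to $\prefixconf_O(\mu_1^O,\mu_2^O,t)$ whenever $t \in \dom(\mu_1)\cup\dom(\mu_2)$, and holds vacuously (value $\U$) otherwise. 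Hence the universal ``for all $t \geq 0$'' collapses to ``for all $t \in \dom(\mu_1)\cup\dom(\mu_2)$'', and the condition on $(\mu_1,\mu_2)$ becomes exactly the instance of Definition~\ref{def:confclean-det} for $(\conf_I,\conf_O) = (\hybc_{\tau_I,\epsilon_I},\hybc_{\tau_O,\epsilon_O})$.

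It remains to match the quantification ranges: since $H$ is deterministic, each $\mu \in H$ encodes a single input--output pair $(\mu^I,\hybrid(\mu^I))$, so quantifying over $\mu_1,\mu_2 \in H$ coincides with the quantification over $i_1,i_2$ in Definition~\ref{def:confclean-det} under $\mu_k^I = i_k$ and $\mu_k^O = \hybrid(i_k)$; in particular $\prefixconf_O(\mu_1^O,\mu_2^O,t) = \prefixconf_O(\hybrid(i_1),\hybrid(i_2),t)$. Combining the three reductions yields $H \models \Phi^{\hybclean}_{\tau_I,\epsilon_I,\tau_O,\epsilon_O}$ iff $H$ is $(\hybc_{\tau_I,\epsilon_I},\hybc_{\tau_O,\epsilon_O})$-clean, i.e.\ iff $H$ is $\hybclean(\tau_I,\epsilon_I,\tau_O,\epsilon_O)$. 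I expect the main obstacle to be the careful bookkeeping of the three-valued semantics: at each step one must check that undefined ($\U$) values never falsify the guarded implication and that $\LTLglobally$ and $\LTLpastglobally$ genuinely behave two-valuedly, so that the ``$\in\{\T,\U\}$'' thresholds align exactly with the domain-restricted universal quantifiers of the cleanness definition, without introducing spurious matches or omissions at boundary time points.
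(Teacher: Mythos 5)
Your proof is correct and takes essentially the same route as the paper's: both reduce $H \models \Phi^{\hybclean}_{\tau_I,\epsilon_I,\tau_O,\epsilon_O}$ to Definition~\ref{def:confclean-det} by unfolding the three-valued semantics of the trace quantifiers and of $\LTLglobally$/$\LTLpastglobally$ and then invoking Proposition~\ref{prop:correctness-prefcon} (items (1)--(4)), with time points outside $\dom(\mu_1)\cup\dom(\mu_2)$ contributing $\U$ and hence never falsifying the guarded implication. The only difference is presentational: you argue via a single chain of equivalences (after noting that $\LTLglobally$ and $\LTLpastglobally$ are two-valued), whereas the paper proves the two directions separately with a case analysis on the value of $\varphi^{\textsf{PrefConfO}}_{\tau_O,\epsilon_O}(\pi_1,\pi_2)$.
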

\begin{proof}
($\Longrightarrow$)
First, suppose that
$H$ is $\hybclean(\tau_I,\epsilon_I,\tau_O,\epsilon_O)$.
Let $\mu_1,\mu_2 \in H$ be two arbitrarily chosen traces.
Let $\pi_1$ and $\pi_2$ be trace variables, and let
$\Pi=\{\pi_1 \mapsto \mu_1,\pi_2 \mapsto \mu_2 \}$.

Let $t \geq 0$ be an arbitrary time point.
If $\valuet\big(
\varphi^{\textsf{PrefConfO}}_{\tau_O,\epsilon_O}(\pi_1, \pi_2),
H,\Pi,T_0,t) \in \{\T,\U\}$,
then it holds that $\valuet\big(
\big(\LTLpastglobally \varphi^{\textsf{PrefConfI}}_{\tau_I,\epsilon_I}(\pi_1, \pi_2)
\big)
\rightarrow
\varphi^{\textsf{PrefConfO}}_{\tau_O,\epsilon_O}(\pi_1, \pi_2),
H,\Pi,T_0,t\big) \in \{\T,\U\}$.
If, on the other hand we have that
$\valuet\big(
\varphi^{\textsf{PrefConfO}}_{\tau_O,\epsilon_O}(\pi_1, \pi_2),
H,\Pi,T_0,t) = \F$,
then it holds that
$t \in \dom(\mu_1) \cup \dom(\mu_2)$ and
by Proposition~\ref{prop:correctness-prefcon} we have that
$\prefixconf_O(\mu_1\revised{^O},\mu_2\revised{^O},t)$ is false.
According to Definition~\ref{def:confclean-det}, this means that there exists $t' \leq t$ such that $\prefixconf_I(\mu_1\revised{^I},\mu_2\revised{^I},t')$ is false.
Applying Proposition~\ref{prop:correctness-prefcon} we obtain for the time point $t'$ that
$\valuet\big(
\varphi^{\textsf{PrefConfI}}_{\tau_I,\epsilon_I}(\pi_1, \pi_2),H,\Pi,T_0,t')  = \F$, and hence
$\valuet\big(
\LTLpastglobally\varphi^{\textsf{PrefConfI}}_{\tau_I,\epsilon_I}(\pi_1, \pi_2),H,\Pi,T_0,t)  = \F$.
This implies that
$\valuet\big(
\big(\LTLpastglobally
\varphi^{\textsf{PrefConfI}}_{\tau_I,\epsilon_I}(\pi_1, \pi_2)
\big)
\rightarrow
\varphi^{\textsf{PrefConfO}}_{\tau_O,\epsilon_O}(\pi_1, \pi_2)
,H,\Pi,T_0,t\big) = \T$.

Therefore, since  $t \geq 0$ was chosen arbitrarily,
$(H,\Pi,T_0,0) \models
\LTLglobally\big(
\big(\LTLpastglobally
\varphi^{\textsf{PrefConfI}}_{\tau_I,\epsilon_I}(\pi_1, \pi_2)
\big)
\rightarrow
\varphi^{\textsf{PrefConfO}}_{\tau_O,\epsilon_O}(\pi_1, \pi_2)
\big)$.
Since $\mu_1$ and $\mu_2$ were arbitrary, we conclude
$(H,\Pi_\emptyset,T_0,0) \models
\Phi^{\hybclean}_{\tau_I,\epsilon_I,\tau_O,\epsilon_O}$.

($\Longleftarrow$)
Now, suppose that $(H,\Pi_\emptyset,T_0,0) \models
\Phi^{\hybclean}_{\tau_I,\epsilon_I,\tau_O,\epsilon_O}$.
Let $\mu_1,\mu_2 \in H$ be two arbitrarily chosen traces.
Let $\pi_1$ and $\pi_2$ be trace variables, and define
$\Pi=\{\pi_1 \mapsto \mu_1,\pi_2 \mapsto \mu_2 \}$.

Let $t \in \dom(\mu_1) \cup \dom(\mu_2)$ be such that
for every $t' \leq t$  with $t' \in \dom(\mu_1) \cup \dom(\mu_2)$ it holds that $\prefixconf_I(\mu_1\revised{^I},\mu_2\revised{^I},t')$ is true.
By Proposition~\ref{prop:correctness-prefcon},
for every such $t'$ we have
$\valuet\big(\varphi^{\textsf{PrefConfI}}_{\tau_I,\epsilon_I}(\pi_1, \pi_2),H,\Pi,T_0,t'\big) \in \{\T,\U\}$.
If $t' \not\in \dom(\mu_1) \cup \dom(\mu_2)$,
we have $\valuet\big(\varphi^{\textsf{PrefConfI}}_{\tau_I,\epsilon_I}(\pi_1, \pi_2),H,\Pi,T_0,t'\big) = \U$.
Thus, $\valuet(\LTLpastglobally\varphi^{\textsf{PrefConfI}}_{\tau_I,\epsilon_I}(\pi_1, \pi_2),\Pi,T_0,t)  = \T$.

By assumption,
$\valuet\big(\big(\LTLpastglobally
\varphi^{\textsf{PrefConfI}}_{\tau_I,\epsilon_I}(\pi_1, \pi_2)
\big)
\rightarrow
\varphi^{\textsf{PrefConfO}}_{\tau_O,\epsilon_O}(\pi_1, \pi_2)
,H,\Pi,T_0,t\big) \in \{\T,\U\}$.
Therefore,
$\valuet\big(
\varphi^{\textsf{PrefConfO}}_{\tau_O,\epsilon_O}(\pi_1, \pi_2)
,H,\Pi,T_0,t\big) \in \{\T,\U\}$.
Since $t \in \dom(\mu_1) \cup \dom(\mu_2)$,
by Proposition~\ref{prop:correctness-prefcon}, this implies that
$\prefixconf_O(\mu_1\revised{^O},\mu_2\revised{^O},t)$ is true.
Since $t \in \dom(\mu_1) \cup \dom(\mu_2)$ was chosen arbitrarily,
we conclude that $\hybclean(\tau_I,\epsilon_I,\tau_O,\epsilon_O)$.
\end{proof}

In the special case when $\tau_I = \tau_O = 0$,  i.e., when we consider robust cleanness, the characterization of cleanness is simpler, since \revised{in the definition of $\prefixconf$ we only need to consider the actual compared prefixes (and not their truncated or extended versions)  because the timing deviation is $0$.
As per the definition of robust cleanness, we consider $\prefixconf$ instantiated with $\tracec$}.

Thus,  we can characterize robust cleanness as
\[\Phi^{\robclean}_{\epsilon_I,\epsilon_O} = \forall \pi_1. \forall\pi_2. \;
\LTLglobally \Big(
\big(\LTLpastglobally
\varphi^{I}_{\epsilon_I}
\big)
\rightarrow
\varphi^{O}_{\epsilon_O}
\Big),  \text{ where }\]
\[
\begin{array}{lll}
\varphi^{I}_{\epsilon_I} & = &
\big(*_1
\LTLeventually_{[0,0]}
d_{I}(X_{\pi_2},X^{*_1}_{\pi_1}) \leq \epsilon_I\big)
\wedge
\big(*_2
\LTLeventually_{[0,0]}
d_{I}(X_{\pi_1} ,X^{*_2}_{\pi_2}) \leq \epsilon_I\big) \text{ and}\\
\varphi^{O}_{\epsilon_O} & = &
\big(*_3
\LTLeventually_{[0,0]}
d_{O}(X_{\pi_2},X^{*_3}_{\pi_1}) \leq \epsilon_O\big)
\wedge
\big(*_4
\LTLeventually_{[0,0]}
d_{O}(X_{\pi_1} ,X^{*_4}_{\pi_2}) \leq \epsilon_O\big).
\end{array}
\]

\begin{prop}
Let $H$  be a deterministic hybrid system defined over a set of real-valued variables $X$ such that $0 \in \dom(\mu)$ for each $\mu \in H$.  Let  $\epsilon_I,\epsilon_O \geq 0$ be  rational constants.  It holds that
\begin{center}
$ H$ is $\robclean(\epsilon_I,\epsilon_O)$ if and only if
$ H \models \Phi^{\robclean}_{\epsilon_I,\epsilon_O}$.
\end{center}
\end{prop}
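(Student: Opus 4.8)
The plan is to mirror the proof of the preceding proposition characterizing $\hybclean$, specialised to the case $\tau_I = \tau_O = 0$, where the argument simplifies considerably because the predicate $\prefixconf$ collapses to ordinary trace conformance $\tracec$. Recall that $\robclean(\epsilon_I,\epsilon_O)$ is by definition $(\tracec_{\epsilon_I},\tracec_{\epsilon_O})$-cleanness, i.e.\ Definition~\ref{def:confclean-det} instantiated with $\conf_I = \tracec_{\epsilon_I}$ and $\conf_O = \tracec_{\epsilon_O}$, and that for $\tau = 0$ the quantifiers in Definition~\ref{def:prefixconf} force $t_1^s=t_2^s=0$ and $t_1^e=t_2^e=t$, so $\prefixconf_I(\mu_1^I,\mu_2^I,t)$ is exactly $\tracec_{\epsilon_I}(\mu_1^I[\dots t],\mu_2^I[\dots t])$, and likewise on the output side. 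The subformulas $\varphi^{I}_{\epsilon_I}$ and $\varphi^{O}_{\epsilon_O}$ are the pointwise bodies of the trace-conformance formula $\varphi^{\tracec}_{\epsilon}$ analysed in Proposition~\ref{prop:correctness-tconf}, so the bulk of the work is a single-time-point re-reading of that analysis, after which the global argument follows the same template as the $\hybclean$ case.

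First I would establish the building-block facts. Using the semantics of $*_i$ and of $\LTLeventually_{[0,0]}$ exactly as in Proposition~\ref{prop:correctness-tconf}, one checks that for any $t \ge 0$ the value $\valuet(\varphi^{I}_{\epsilon_I},H,\Pi,T_0,t)$ is $\U$ when $t \notin \dom(\mu_1)\cup\dom(\mu_2)$, is $\T$ when $t \in \dom(\mu_1)\cap\dom(\mu_2)$ and $d_I(\mu_1^I(t),\mu_2^I(t)) \le \epsilon_I$, and is $\F$ otherwise (i.e.\ when $t$ lies in exactly one domain, or in both with distance exceeding $\epsilon_I$); the identical statement holds for $\varphi^{O}_{\epsilon_O}$ with $d_O$ and $\epsilon_O$. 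From the semantics of $\LTLpastglobally$ it then follows that $\valuet(\LTLpastglobally\varphi^{I}_{\epsilon_I},H,\Pi,T_0,t) = \T$ if and only if $\varphi^{I}_{\epsilon_I} \in \{\T,\U\}$ at every $t' \le t$, which is precisely the condition that $\dom(\mu_1^I)\cap[0,t] = \dom(\mu_2^I)\cap[0,t]$ together with $d_I(\mu_1^I(t'),\mu_2^I(t')) \le \epsilon_I$ at every common $t' \le t$, i.e.\ $\tracec_{\epsilon_I}(\mu_1^I[\dots t],\mu_2^I[\dots t])$, equivalently $\prefixconf_I(\mu_1^I,\mu_2^I,t')$ for all $t' \le t$. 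By contrast, $\varphi^{O}_{\epsilon_O}$ evaluated at the single point $t$ captures only the pointwise output condition at $t$; the prefix-level statement $\prefixconf_O(\mu_1^O,\mu_2^O,t) = \tracec_{\epsilon_O}(\mu_1^O[\dots t],\mu_2^O[\dots t])$ is the conjunction of these pointwise conditions over all $t'' \le t$.

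With these facts in hand, both directions follow the template of the $\hybclean$ proof. For the forward direction, assuming $H$ is $\robclean(\epsilon_I,\epsilon_O)$ I would fix $\mu_1,\mu_2 \in H$, set $\Pi = \{\pi_1 \mapsto \mu_1, \pi_2 \mapsto \mu_2\}$, and show the body $(\LTLpastglobally\varphi^{I}_{\epsilon_I}) \rightarrow \varphi^{O}_{\epsilon_O}$ evaluates into $\{\T,\U\}$ at every $t \ge 0$: if $\LTLpastglobally\varphi^{I}_{\epsilon_I}$ is not $\T$ at $t$ the implication is immediate, and if it is $\T$ then the inputs are trace conformant up to $t$, so when $t \in \dom(\mu_1)\cup\dom(\mu_2)$ Definition~\ref{def:robust_clean} yields output conformance up to $t$ and in particular $\varphi^{O}_{\epsilon_O} = \T$ at $t$, while when $t \notin \dom(\mu_1)\cup\dom(\mu_2)$ we have $\varphi^{O}_{\epsilon_O} = \U$ and the implication evaluates to $\U$. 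Universally closing over $t$ (via $\LTLglobally$) and over $\mu_1,\mu_2$ (via the trace quantifiers) gives $H \models \Phi^{\robclean}_{\epsilon_I,\epsilon_O}$. For the converse, assuming $H \models \Phi^{\robclean}_{\epsilon_I,\epsilon_O}$ and a time point $t \in \dom(\mu_1)\cup\dom(\mu_2)$ at which the inputs are trace conformant up to $t$, I would take an arbitrary $t'' \le t$ lying in a trace domain; by monotonicity of trace conformance under restriction the inputs are conformant up to $t''$, so $\LTLpastglobally\varphi^{I}_{\epsilon_I} = \T$ at $t''$, whence the formula forces $\varphi^{O}_{\epsilon_O} \in \{\T,\U\}$ at $t''$, and since $t'' \in \dom(\mu_1)\cup\dom(\mu_2)$ this must be $\T$. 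Collecting these pointwise conclusions over all $t'' \le t$ reconstitutes $\tracec_{\epsilon_O}(\mu_1^O[\dots t],\mu_2^O[\dots t])$, i.e.\ $\prefixconf_O(\mu_1^O,\mu_2^O,t)$.

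I expect the only genuinely delicate point to be the mismatch between the pointwise shape of the consequent $\varphi^{O}_{\epsilon_O}$ (a statement at the single time $t$) and the prefix-level output condition of Definition~\ref{def:robust_clean} (a statement about the entire output prefix up to $t$). The reconciliation rests on two observations I would make explicit: trace conformance is monotone under prefix restriction, so conformance ``up to $t$'' is equivalent to the conjunction of the pointwise conditions over all $t'' \le t$ together with domain agreement; and the outer $\LTLglobally$ in $\Phi^{\robclean}_{\epsilon_I,\epsilon_O}$, combined with the $\LTLpastglobally$ in the antecedent, lets a pointwise implication asserted at every instant recover exactly this prefix-level output statement. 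Some care is also needed in the bookkeeping of the undefined value $\U$ at time points outside $\dom(\mu_1)\cup\dom(\mu_2)$, so that such points neither spuriously falsify the antecedent nor the consequent; this is precisely where the three-valued semantics and the $\{\T,\U\}$ convention in the $\LTLglobally$ and quantifier clauses do the work.
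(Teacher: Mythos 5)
Your proof is correct and follows essentially the same route as the paper's: a case analysis on the value of the implication body $(\LTLpastglobally\varphi^{I}_{\epsilon_I})\rightarrow\varphi^{O}_{\epsilon_O}$ at each time point, using the three-valued semantics to handle points outside $\dom(\mu_1)\cup\dom(\mu_2)$ and the correspondence between the pointwise formulas and trace conformance of prefixes. Your one genuine addition is welcome: you make explicit the reconciliation between the pointwise consequent $\varphi^{O}_{\epsilon_O}$ at $t$ and the prefix-level output condition of $\robclean$ (via monotonicity of trace conformance under restriction and applying the formula at every $t''\leq t$), a step the paper's converse direction leaves implicit when it concludes robust cleanness from $\valuet(\varphi^{O}_{\epsilon_O},H,\Pi,T_0,t)=\T$ at the single point $t$.
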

\begin{proof}
($\Longrightarrow$)
First, suppose that
$H$ is $\robclean(\epsilon_I,\epsilon_O)$.
Let $\mu_1,\mu_2 \in H$ be two arbitrarily chosen traces.
Let $\pi_1$ and $\pi_2$ be trace variables, and let
$\Pi=\{\pi_1 \mapsto \mu_1,\pi_2 \mapsto \mu_2 \}$.

Let $t \geq 0$ be an arbitrary time point.
If $\valuet\big(\varphi^{O}_{\epsilon_O},H,\Pi,T_0,t) \in \{\T,\U\}$,
then it holds that
$\valuet\big(
\LTLglobally \big(
\big(\LTLpastglobally
\varphi^{I}_{\epsilon_I}
\big)
\rightarrow
\varphi^{O}_{\epsilon_O}
\big),
H,\Pi,T_0,t\big) \in \{\T,\U\}$.
If, on the other hand we have that
$\valuet\big(\varphi^{O}_{\epsilon_O},H,\Pi,T_0,t) = \F$,
then it holds that at least one of the conjuncts of $\varphi^{O}_{\epsilon_O}$ is false. Suppose that, without loss of generality,
$\valuet\big(*_3
\LTLeventually_{[0,0]}
d_{O}(X_{\pi_2},X^{*_3}_{\pi_1}) \leq \epsilon_O,H,\Pi,T_0,t) = \F$.
Hence, $t \in \dom(\mu_1) \cup \dom(\mu_2)$ and
according to Definition~\ref{def:confclean-det},
this means that there exists $t' \leq t$ such that
$\valuet\big(
\varphi^{I}_{\epsilon_I},H,\Pi,T_0,t')  = \F$.
Therefore,
$\valuet\big(
\LTLpastglobally\varphi^{I}_{\epsilon_I},H,\Pi,T_0,t)  = \F$.
Thus, in this case we obtain that
$\valuet\big(
\big(\LTLpastglobally
\varphi^{I}_{\epsilon_I}
\big)
\rightarrow
\varphi^{O}_{\epsilon_O}
,H,\Pi,T_0,t\big) = \T$.

Therefore, since  $t \geq 0$ was chosen arbitrarily,
we have
$(H,\Pi,T_0,0) \models
\LTLglobally\big(
\big(\LTLpastglobally
\varphi^{I}_{\epsilon_I}
\big)
\rightarrow
\varphi^{O}_{\epsilon_O}
\big)$.
Since $\mu_1$ and $\mu_2$ were arbitrary, we can conclude
$(H,\Pi_\emptyset,T_0,0) \models
\Phi^{\robclean}_{\tau_I,\epsilon_I,\tau_O,\epsilon_O}$.

($\Longleftarrow$)
Now, suppose that $(H,\Pi_\emptyset,T_0,0) \models
\Phi^{\robclean}_{\tau_I,\epsilon_I,\tau_O,\epsilon_O}$.
Let $\mu_1,\mu_2 \in H$ be two arbitrarily chosen traces.
Let $\pi_1$ and $\pi_2$ be trace variables, and define
$\Pi=\{\pi_1 \mapsto \mu_1,\pi_2 \mapsto \mu_2 \}$.

Let $t \in \dom(\mu_1) \cup \dom(\mu_2)$ be such that
for every $t' \leq t$  with $t' \in \dom(\mu_1) \cup \dom(\mu_2)$ it holds that
$\valuet\big(
\varphi^{I}_{\epsilon_I},H,\Pi,T_0,t')  = \T$.
If $t' \not\in \dom(\mu_1) \cup \dom(\mu_2)$, we have
$\valuet\big(
\varphi^{I}_{\epsilon_I},H,\Pi,T_0,t')  = \U$.
Therefore, we have that $\valuet(\LTLpastglobally\varphi^{I}_{\epsilon_I},\Pi,T_0,t)  = \T$.

By assumption,
$\valuet\big(
\big(\LTLpastglobally
\varphi^{I}_{\epsilon_I}
\big)
\rightarrow
\varphi^{O}_{\epsilon_O}
,H,\Pi,T_0,t\big) \in \{\T,\U\}$.
Therefore, we obtain that
$\valuet\big(
\varphi^{O}_{\epsilon_O}
,H,\Pi,T_0,t\big) =\T$,
since $t \in \dom(\mu_1) \cup \dom(\mu_2)$.
Since $t$ was chosen arbitrarily,
we can conclude that $\robclean(\epsilon_I,\epsilon_O)$.
\end{proof}

\subsection{Monitoring \texorpdfstring{$\hyperstlstar_{\mathit{fin}}$}{HyperSTL*-fin} over finite-length real-valued signals}
We now consider the fragment $\hyperstlstar_{\mathit{fin}}$ and describe a method for offline monitoring of $\hyperstlstar_{\mathit{fin}}$ properties on finite sets of finite-length signals.

If the given traces are finite timed words, we can obtain from them piecewise linear signals by linear interpolation, or piecewise constant signals by fixing the value for each half-closed interval between time points to be the value at the starting point of this interval.

If the signals in the set are of different time length, we take the minimum length across the set, and consider the traces up to that length. Thus,  we ensure that for some $B$, all traces are defined over the interval $[0,B]$.  Furthermore,  we only consider formulas in $\hyperstlstar_{\mathit{fin}}$  for which the bounds of the temporal operators are such that every subformula has a defined  value when the formula is evaluated over traces with time domain $[0,B]$.

Our method handles the trace quantifiers similarly to the algorithm for offline monitoring of \hyperltl formulas on finite traces given in~\cite{DBLP:conf/rv/FinkbeinerHST17,MonitoringHyperproperties}.
The method iterates over tuples of generalized timed traces. The arity of the tuples is determined by the quantifier prefix of the formula.  For instance, for monitoring a formula of the form $\Phi = \forall \pi_1 \ldots \forall \pi_n\exists \pi_1' \ldots \exists \pi_m'. \varphi$ we will evaluate $\varphi$ on tuples of GTTs of arity $n+m$, to either determine that $\Phi$ is satisfied over the given set of traces, or return an $n$-tuple witnessing  a violation.
In order to evaluate the trace-quantifier-free formula $\varphi$ on an $(n+m)$- tuple of traces we compute a satisfaction evidence by using the method proposed in~\cite{FreezeSTL}.   \revised{Note that  unlike~\cite{DBLP:conf/rv/FinkbeinerHST17,MonitoringHyperproperties} we consider finite traces and formulas with bounded temporal operators.  Since we assume that the length of the traces is sufficient for all subformulas of a given $\hyperstlstar_{\mathit{fin}}$ formula of interest to have a defined value,  the truth value of the quantifier-free part of the formula is defined.  As we consider a fixed set of recorded traces,  we can check offline the satisfaction of $\hyperstlstar_{\mathit{fin}}$ formulas with arbitrary quantifier alternations similarly to~\cite{DBLP:conf/rv/FinkbeinerHST17,MonitoringHyperproperties},  as outlined above.}

Let $\varphi$ be a trace-quantifier-free formula. Let $H$ be a finite set of $\reals^l$-valued finite-length signals, and let $K \in H^k$ be a tuple of traces of arity $k$. Then, we can interpret $K$ as a real-valued signal $\kappa$ of order $k \times l$.
The \emph{satisfaction set} of the formula $\varphi$ over the signal $\kappa$ is defined analogously to~\cite{FreezeSTL}.  Similarly to~\cite{FreezeSTL}, we assume that the traces in $H$ are piecewise linear and that the atomic predicates are linear, in order to make the computation  of the satisfaction set tractable. Note that the atomic predicates used in the characterization of hybrid cleanness in Section~\ref{sec:cleanness-formulas} are linear when the expressions $d_{I}(X,X')$ and $d_{O}(X,X')$ are linear.  The explicit clock variable defines a linear signal.  Clearly, if the traces in  $H$ are piecewise linear, then so is the signal $\kappa$. Thus, the satisfaction set for $\varphi$ can be calculated effectively, represented as convex polytopes.

The satisfaction set for the formula $\varphi$ given the signal $\kappa$ is a subset of $\realsnn \times \realsnn^{\mathcal I}$, defined inductively with respect to structure of $\varphi$. Here, a signal $\kappa$ of order $k \times l$ is interpreted as a trace assignment for $k$ trace variables, in which each trace variable is assigned a GTT in the form of a real-valued signal of order $l$.
\[
\begin{array}{lll}
\mathsf{Sat}(\alpha,\kappa) & = &
\{(t,T) \in \realsnn \times \realsnn^{\mathcal I} \mid \valuet(\alpha,H,\kappa,T,t)=\T\};\\
\mathsf{Sat}(\top,\kappa) & = &
\realsnn \times \realsnn^{\mathcal I};\\
\mathsf{Sat}(\neg \varphi',\kappa) & = &
(\realsnn \times \realsnn^{\mathcal I}) \setminus \mathsf{Sat}(\varphi',\kappa);\\
\mathsf{Sat}(\varphi_1 \vee \varphi_2,\kappa) & = &
\mathsf{Sat}(\varphi_1,\kappa) \cup \mathsf{Sat}(\varphi_2,\kappa);\\
\mathsf{Sat}(\varphi_1\LTLuntil_{[a,b]}\varphi_2,\kappa) & = &
\{(t,T) \in \realsnn \times \realsnn^{\mathcal I} \mid
\exists t' \in [t+a,t+b]: (t',T) \in \mathsf{Sat}(\varphi_2,\kappa) \wedge \\&&
\phantom{\{}
\ \forall t'' \in [t,t'): (t'',T) \in \mathsf{Sat}(\varphi_1,\kappa) \cup \mathsf{Sat}(\varphi_2,\kappa) \revised{\cap [t+a,t+b]}\}; \\
\mathsf{Sat}(\varphi_1\LTLsince_{[a,b]}\varphi_2,\kappa) & = &
\{(t,T) \in \realsnn \times \realsnn^{\mathcal I} \mid
\exists t' \in [t-b,t-a]: (t',T) \in \mathsf{Sat}(\varphi_2,\kappa) \wedge \\&&
\phantom{\{(}
\ \forall t'' \in (t',t]: (t'',T) \in \mathsf{Sat}(\varphi_1,\kappa)\cup \mathsf{Sat}(\varphi_2,\kappa) \revised{\cap [t-b,t-a ]}\}; \\
\mathsf{Sat}(*_i \varphi',\kappa) & = &
\{(t,T) \in \realsnn \times \realsnn^{\mathcal I} \mid (t,T[i \mapsto t]) \in \mathsf{Sat}(\varphi',\kappa)\}.
\end{array}
\]
Once the satisfaction sets for the atomic predicates appearing in the given formula have been computed,  the satisfaction sets for the composite formulas can be constructed by following the inductive definition above.  Under the assumptions we made above about the signals and the atomic predicates, the satisfaction sets for the atomic predicates can be computed directly using the method described in~\cite{FreezeSTL}.  For further details,  we refer to~\cite{FreezeSTL}.

In order to perform monitoring for the formula $\Phi^{\hybclean}_{\tau_I,\epsilon_I,\tau_O,\epsilon_O}$ defined in Section~\ref{sec:cleanness-formulas},  we bound the temporal operators based on the signal length $B$,  and consider the case when $\tau_I>0$ and $\tau_O>0$ which ensures that the intervals in the operators are non-singular.

\section{Case Study}%
\label{sec:case_study}
In this section we evaluate the proposed notion of conformance-based cleanness in a real application context, known as the Diesel Emissions Scandal~\cite{DBLP:conf/qest/BiewerDH19,DBLP:conf/lpar/HermannsBDK18,DBLP:conf/rv/KohlHB18,DBLP:conf/cpsweek/BiewerDH18,DBLP:conf/esop/DArgenioBBFH17,DBLP:conf/sp/ContagLPDLHS17,BartheDFH16:isola}: Starting in fall 2015, millions of diesel cars were found  being equipped with defeat devices reducing the effectiveness of emission cleaning systems during real-world usage --- in contrast to the regulator-defined driving scenarios on a chassis dynamometer, where the amount of emitted pollutants stay well below the applicable limits.
%\looseness=-1

It was soon suspected, that the singularities of the testing procedure were straightforward to identify and hence made cheating easy; in particular, there was only a single test cycle for testing  emission cleaning systems, to be executed under very particular conditions. In the European Union, this was the \emph{New European Driving Cycle} (\NEDC)~\cite{nedc}, the speed profile of which is shown in Fig.~\ref{fig:NEDC}.
The \NEDC\ consists of four repetitions of an elementary \emph{urban driving cycle} (\UDC) followed by one \emph{extra urban driving cycle} (\EUDC).
%We split \UDC\ further in three subcycles, one for each of the ``hills''
Each test run is preceded by a preconditioning phase (\PreCon), in which three \EUDC s are driven consecutively.
Between \PreCon\ and the test, the vehicle has to cool down for 6 to 36 hours at an ambient temperature between 20 and 30 degrees Celsius.
%L 375/322, Section 5.3 (page 100)

Robust cleanness gives us a way of deriving additional test cycles that are reasonable w.r.t.\ the official \NEDC\@. For a concrete context, ``reasonable'' is defined by an accompanying formally defined \emph{contract}. The contributions of this paper enable us to go beyond previous work~\cite{DBLP:conf/qest/BiewerDH19} where a contract allowed inputs and outputs to deviate only in the value domains, but not in the time domain.

\subsection{Experimental Setup}
Our empirical studies apply the theory developed in the previous sections in a very specific setting. The system under test is a Nissan NV200 Evalia equipped with a Renault 1.5 dci (110hp) diesel engine and approved w.r.t.\ regulation \emph{Euro 6b}. All tests were conducted in November 2020.
As shown in Fig.~\ref{fig:teststand}, the car is fixed on a chassis dynamometer and attached to a portable emissions measurement system (PEMS) in preparation for the test.
\begin{figure*}[t]
%\centering
\hspace{1cm}
\scalebox{0.9}{
\begin{tikzpicture}[trim axis left, scale=1]
\begin{axis}[
    width=0.75\textwidth,
    height=0.25\textwidth,
    xlabel={Time [s]},
    ylabel={Speed [$\frac{\mathit{km}}{h}$]},
    xmin=-2, xmax=1180,
    ymin=-2, ymax=125,
    xtick={0,200, 400, 600, 800, 1000, 1180},
    ytick={0, 32,  70, 100, 120}, %15, 50
    legend pos=north west,
    ymajorgrids=true,
    grid style=dotted,draw=gray!10,
]
 
\addplot[
    color=blue,
    line width=.5pt,
    dash pattern=on 2pt off 1pt,
    ]
    coordinates {
    (0,0)(6,0)(11,0)(15,15)(23,15)(25,10)(28,0)(44,0)(49,0)(54,15)(56,15)(61,32)(85,32)(93,10)(96,0)(112,0)(117,0)(122,15)(124,15)(133,35)(135,35)(143,50)(155,50)(163,35)(176,35)(178,35)(185,10)(188,0)(195,0)(201,0)(206,0)(210,15)(218,15)(220,10)(223,0)(239,0)(244,0)(249,15)(251,15)(256,32)(280,32)(288,10)(291,0)(307,0)(312,0)(317,15)(319,15)(328,35)(330,35)(338,50)(350,50)(358,35)(371,35)(373,35)(380,10)(383,0)(390,0)(396,0)(401,0)(405,15)(413,15)(415,10)(418,0)(434,0)(439,0)(444,15)(446,15)(451,32)(475,32)(483,10)(486,0)(502,0)(507,0)(512,15)(514,15)(523,35)(525,35)(533,50)(545,50)(553,35)(566,35)(568,35)(575,10)(578,0)(585,0)(591,0)(596,0)(600,15)(608,15)(610,10)(613,0)(629,0)(634,0)(639,15)(641,15)(646,32)(670,32)(678,10)(681,0)(697,0)(702,0)(707,15)(709,15)(718,35)(720,35)(728,50)(740,50)(748,35)(761,35)(763,35)(770,10)(773,0)(780,0)(800,0)(805,15)(807,15)(816,35)(818,35)(826,50)(828,50)(841,70)(891,70)(895,60)(899,50)(968,50)(981,70)(1031,70)(1066,100)(1096,100)(1116,120)(1126,120)(1142,80)(1150,50)(1160,0)(1180,0)
    };
\end{axis}
\end{tikzpicture}
}
\hspace{0.3cm}
\scalebox{0.9}{
\includegraphics[width=0.24\textwidth]{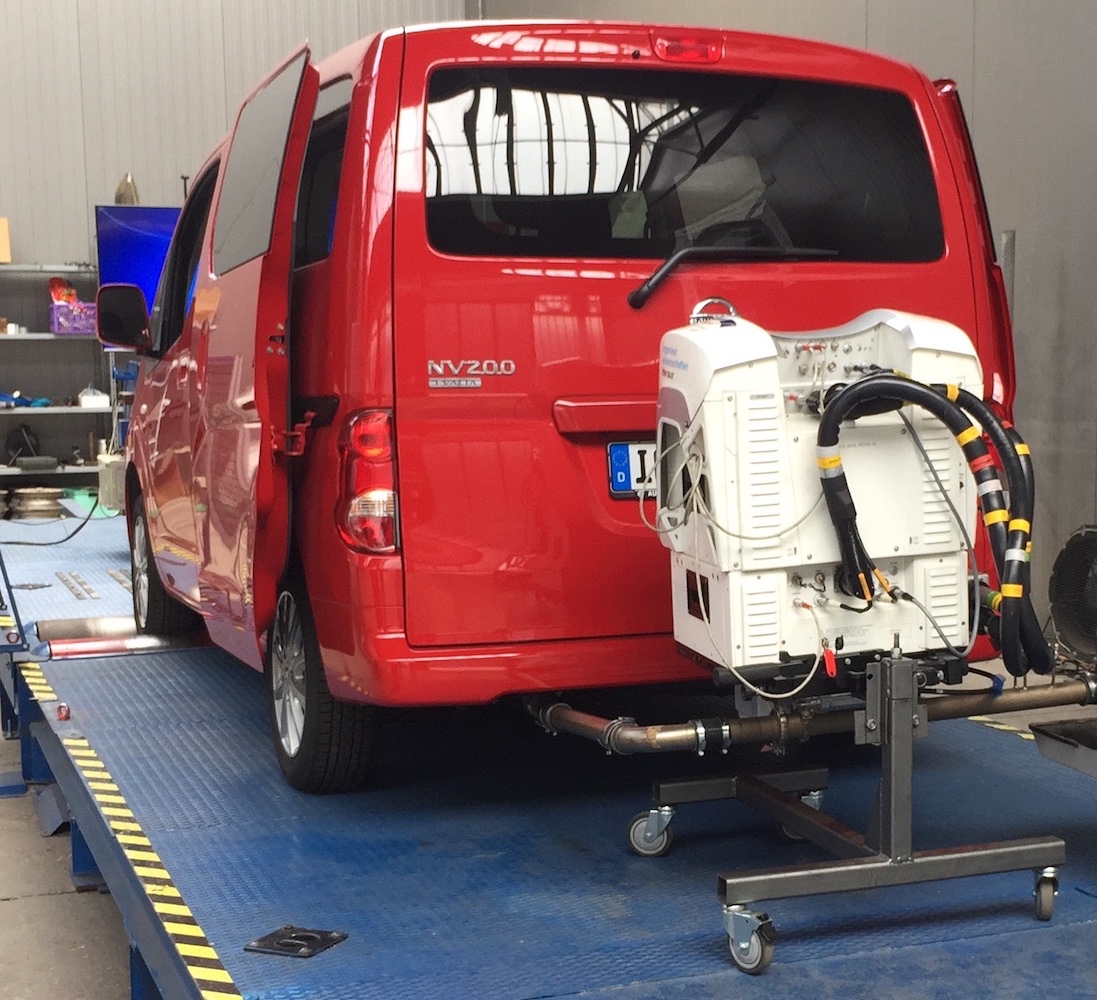}
}
%}
\vspace{-.3cm}
\caption{Left: New European Driving Cycle (NEDC); Right: Test Setup with Nissan NV200 Evalia on a chassis dynamometer attached to a PEMS\@.}%
\label{fig:NEDC}%
\label{fig:teststand}
\end{figure*}
The PEMS is connected to the onboard diagnostics (OBD) interface of the vehicle.
During a test, the PEMS measures the amount of several gases at the end of the car's exhaust pipe and logs the data received from OBD\@.
The PEMS is able to internally synchronise the times of gas measurements and OBD data.
We will not consider the internal PEMS retiming and instead analyse the final data set.
As input, we consider the OBD speed data, as output, the sum of emitted \NOgas\ and \NOtwo\ (abbreviated as \NOx).
The input and output is sampled by $1$ Hz. The amount of \NOx\ emitted along different runs is comparable only to a limited extend. This is because the emission cleaning system used can have internal regeneration phases, which --- from an external observer perspective --- are triggered nondeterministically. Hence, for the formal analysis we consider the accumulated output over $1180$ seconds (the length of a complete \NEDC). This is also the value decisive for type approval according to the official regulations.

\revised{Formally, a contract specifying \emph{clean} behaviour for a system is given as a tuple.
Previous work~\cite{DBLP:conf/qest/BiewerDH19} proposes the concrete contract $\mathcal{C}_t = \langle d_I, d_O, \epsilon_I, \epsilon_O \rangle$ for diesel doping tests, where $d_I(i_1, i_2) \eqdef \abs{i_1 - i_2}$ and $d_O(o_1, o_2) \eqdef \abs{o_1 - o_2}$, and $\epsilon_I = 15$ km/h and $\epsilon_O = 180$ mg/km.
%In this contract,
%the distance functions in $\mathcal{C}$ bound the absolute difference of speed, respectively \NOx, values, i.e.,
%$d_I(i_1, i_2) = \abs{i_1 - i_2}$ and $d_O(o_1, o_2) = \abs{o_1 - o_2}$.
%The value thresholds are $\epsilon_I = 15$ km/h and $\epsilon_O = 180$ mg/km.
$\mathcal{C}_t$ is based on robust cleanness, i.e., $\tracec_{\epsilon_I}$ and $\tracec_{\epsilon_O}$ conformance for inputs and outputs.
In the following, we will add the designated conformance notions to the tuple, i.e., instead of $\mathcal{C}_t$ we write $\mathcal{C} = \langle d_I, d_O, \tracec_{\epsilon_I}, \tracec_{\epsilon_O} \rangle$ (implicitly encoding $\epsilon_I$ and $\epsilon_O$).
In $\mathcal{C}$} the degree of deviation is constrained by a threshold on the pointwise difference of the new and the reference test input.
Comparisons of data at different time points are not possible.
The theory of conformance-based doping tests developed in this paper improves upon the previous testing methodology by enabling variation in the time domain. This gives us the possibility of  reordering \NEDC\ segments, lengthening a test beyond the time limits of the original \NEDC, and of increased tolerance regarding human-caused input distortions during test execution.
%To this end, the contract $\mathcal{C}$ needs to be adapted with tailored conformance notions. In the sequel we describe  how~$\mathcal{C}$ is adapted for each scenario.
\revised{With the parameters $d_I, d_O, \epsilon_I, \epsilon_O$ and $\tracec_{\epsilon_O}$ in $\mathcal{C}$ fixed, we show how adaptations of $\mathcal{C}$ with different input conformance relations are of benefit for software doping analysis.}

\begin{description}
\item[NEDC Permutations]
Based on the conformance notions in this work, we propose a new test cycle \PermNEDC\ in which \NEDC\ segments are permuted on the time axis.
Fig.~\ref{fig:PermNEDC} shows the test cycle. In each of the four \UDC\ segments the three non-zero speed-phases are permuted.
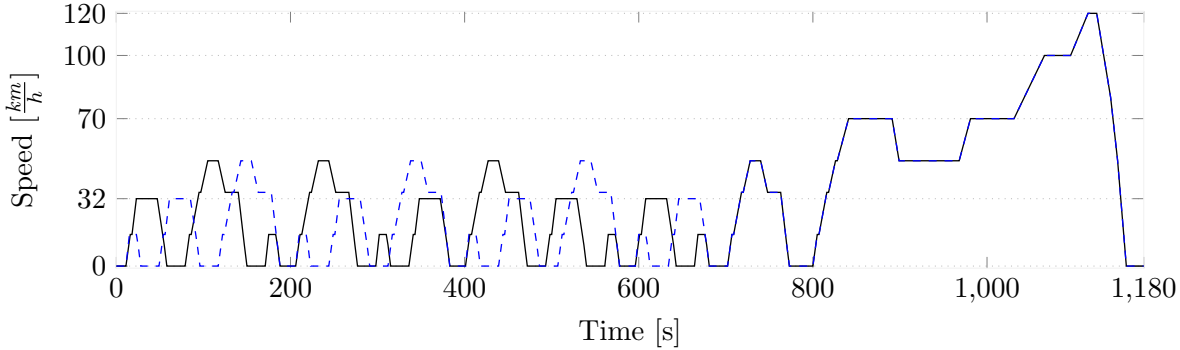
\begin{figure}[t]
\showplot{\begin{tikzpicture}[trim axis left, scale=1]
  \begin{axis}[
    width=\textwidth,
    height=5cm,
    xlabel={Time [s]},
    ylabel={Speed [$\frac{\mathit{km}}{h}$]},
    xmin=0, xmax=1180,
    xtick={0,200,400,600,800,1000,1180,2360},
    ytick={0, 32,70,100,120},
    legend pos=north west,
    ymajorgrids=true,
    grid style=dotted,draw=gray!10,
    ymin=-1,
    ymax=121  ]    \addplot [color=black, style=solid,line width=.5pt] coordinates {(0,0)(6,0)(11,0)(16,15)(18,15)(23,32)(47,32)(55,10)(58,0)(74,0)(79,0)(84,15)(86,15)(95,35)(97,35)(105,50)(117,50)(125,35)(138,35)(140,35)(147,10)(150,0)(166,0)(171,0)(175,15)(183,15)(185,10)(188,0)(195,0)(201,0)(206,0)(211,15)(213,15)(222,35)(224,35)(232,50)(244,50)(252,35)(265,35)(267,35)(274,10)(277,0)(293,0)(298,0)(302,15)(310,15)(312,10)(315,0)(331,0)(336,0)(341,15)(343,15)(348,32)(372,32)(380,10)(383,0)(390,0)(396,0)(401,0)(406,15)(408,15)(417,35)(419,35)(427,50)(439,50)(447,35)(460,35)(462,35)(469,10)(472,0)(488,0)(493,0)(498,15)(500,15)(505,32)(529,32)(537,10)(540,0)(556,0)(561,0)(565,15)(573,15)(575,10)(578,0)(585,0)(591,0)(596,0)(601,15)(603,15)(608,32)(632,32)(640,10)(643,0)(659,0)(664,0)(668,15)(676,15)(678,10)(681,0)(697,0)(702,0)(707,15)(709,15)(718,35)(720,35)(728,50)(740,50)(748,35)(761,35)(763,35)(770,10)(773,0)(780,0)(800,0)(805,15)(807,15)(816,35)(818,35)(826,50)(828,50)(841,70)(891,70)(895,60)(899,50)(968,50)(981,70)(1031,70)(1066,100)(1096,100)(1116,120)(1126,120)(1142,80)(1150,50)(1160,0)(1180,0)};

\addplot [color=blue, style=dashed,line width=.5pt] coordinates {(0,0)(6,0)(11,0)(15,15)(23,15)(25,10)(28,0)(44,0)(49,0)(54,15)(56,15)(61,32)(85,32)(93,10)(96,0)(112,0)(117,0)(122,15)(124,15)(133,35)(135,35)(143,50)(155,50)(163,35)(176,35)(178,35)(185,10)(188,0)(195,0)(201,0)(206,0)(210,15)(218,15)(220,10)(223,0)(239,0)(244,0)(249,15)(251,15)(256,32)(280,32)(288,10)(291,0)(307,0)(312,0)(317,15)(319,15)(328,35)(330,35)(338,50)(350,50)(358,35)(371,35)(373,35)(380,10)(383,0)(390,0)(396,0)(401,0)(405,15)(413,15)(415,10)(418,0)(434,0)(439,0)(444,15)(446,15)(451,32)(475,32)(483,10)(486,0)(502,0)(507,0)(512,15)(514,15)(523,35)(525,35)(533,50)(545,50)(553,35)(566,35)(568,35)(575,10)(578,0)(585,0)(591,0)(596,0)(600,15)(608,15)(610,10)(613,0)(629,0)(634,0)(639,15)(641,15)(646,32)(670,32)(678,10)(681,0)(697,0)(702,0)(707,15)(709,15)(718,35)(720,35)(728,50)(740,50)(748,35)(761,35)(763,35)(770,10)(773,0)(780,0)(800,0)(805,15)(807,15)(816,35)(818,35)(826,50)(828,50)(841,70)(891,70)(895,60)(899,50)(968,50)(981,70)(1031,70)(1066,100)(1096,100)(1116,120)(1126,120)(1142,80)(1150,50)(1160,0)(1180,0)};
  \end{axis}
\end{tikzpicture}}
\caption{\PermNEDC\ (solid, black line) compared to \NEDC\ (dashed, blue line)}%
\label{fig:PermNEDC}
\end{figure}
The transformation from \NEDC\ to \PermNEDC\ can be described by a retiming function $r_p$.
An explicit definition of $r_p$ is space consuming, hence we omit it.
Along with the new cycle, we propose
two suitable variants of contract~$\mathcal{C}$ \revised{with different input conformances.}
%Both keep the trace conformance for outputs, but  adapt the input conformances to the new setting.
Neither input conformance is constrained by a time threshold; in other words, $\tau = \infty$, so we omit $\tau$ in the index.
\begin{itemize}
\item Contract $\mathcal{C}_a$ \revised{is as $\mathcal{C}$, but} entails input conformance $\conf^{\,\ret_a}_{\epsilon_I}$, where
$\ret_a = \{  (r, r^{-1}) \ \mid \ r \in \mathcal{T} \rightarrow \mathcal{T}  \text{ and } r \text{ is total and bijective}   \}$
is the family of retimings that allows any reordering of the \NEDC\ inputs.
Notably, no inputs can be added or removed.
\item Contract $\mathcal{C}_p$ \revised{adjusts $\mathcal{C}$ by enforcing input conformance $\conf^{\,\ret_p}_{\epsilon_I}$, where $\ret_p = \{ (r_p, r^{-1}_p) \}$ is the family of retimings that only allows the particular retiming $r_p$ used to design the test cycle as discussed above.
This input conformance is stricter than $\conf^{\,\ret_a}_{\epsilon_I}$} above; it enforces that \PermNEDC\ is not permuted any further by the driver.
\end{itemize}

\item[NEDC Lengthening]
Conformance-based doping tests can run longer than the \NEDC;\@ this is not possible with robust cleanness.
We propose the test cycle \DoubleNEDC, which consists of two consecutive \NEDC s.
\begin{figure}[t]
\showplot{\begin{tikzpicture}[trim axis left, scale=1]
  \begin{axis}[
    width=\textwidth,
    height=3cm,
    xlabel={Time [s]},
    ylabel={Speed [$\frac{\mathit{km}}{h}$]},
    xmin=0, xmax=2360,
    xtick={0,200,400,600,800,1000,1180,2360},
    ytick={0, 32,70,100,120},
    legend pos=north west,
    ymajorgrids=true,
    grid style=dotted,draw=gray!10,
    ymin=-1,
    ymax=121  ]    \addplot [color=black, style=solid,line width=.5pt] coordinates {(0,0)(6,0)(11,0)(15,15)(23,15)(25,10)(28,0)(44,0)(49,0)(54,15)(56,15)(61,32)(85,32)(93,10)(96,0)(112,0)(117,0)(122,15)(124,15)(133,35)(135,35)(143,50)(155,50)(163,35)(176,35)(178,35)(185,10)(188,0)(195,0)(201,0)(206,0)(210,15)(218,15)(220,10)(223,0)(239,0)(244,0)(249,15)(251,15)(256,32)(280,32)(288,10)(291,0)(307,0)(312,0)(317,15)(319,15)(328,35)(330,35)(338,50)(350,50)(358,35)(371,35)(373,35)(380,10)(383,0)(390,0)(396,0)(401,0)(405,15)(413,15)(415,10)(418,0)(434,0)(439,0)(444,15)(446,15)(451,32)(475,32)(483,10)(486,0)(502,0)(507,0)(512,15)(514,15)(523,35)(525,35)(533,50)(545,50)(553,35)(566,35)(568,35)(575,10)(578,0)(585,0)(591,0)(596,0)(600,15)(608,15)(610,10)(613,0)(629,0)(634,0)(639,15)(641,15)(646,32)(670,32)(678,10)(681,0)(697,0)(702,0)(707,15)(709,15)(718,35)(720,35)(728,50)(740,50)(748,35)(761,35)(763,35)(770,10)(773,0)(780,0)(800,0)(805,15)(807,15)(816,35)(818,35)(826,50)(828,50)(841,70)(891,70)(895,60)(899,50)(968,50)(981,70)(1031,70)(1066,100)(1096,100)(1116,120)(1126,120)(1142,80)(1150,50)(1160,0)(1180,0)(1186,0)(1191,0)(1195,15)(1203,15)(1205,10)(1208,0)(1224,0)(1229,0)(1234,15)(1236,15)(1241,32)(1265,32)(1273,10)(1276,0)(1292,0)(1297,0)(1302,15)(1304,15)(1313,35)(1315,35)(1323,50)(1335,50)(1343,35)(1356,35)(1358,35)(1365,10)(1368,0)(1375,0)(1381,0)(1386,0)(1390,15)(1398,15)(1400,10)(1403,0)(1419,0)(1424,0)(1429,15)(1431,15)(1436,32)(1460,32)(1468,10)(1471,0)(1487,0)(1492,0)(1497,15)(1499,15)(1508,35)(1510,35)(1518,50)(1530,50)(1538,35)(1551,35)(1553,35)(1560,10)(1563,0)(1570,0)(1576,0)(1581,0)(1585,15)(1593,15)(1595,10)(1598,0)(1614,0)(1619,0)(1624,15)(1626,15)(1631,32)(1655,32)(1663,10)(1666,0)(1682,0)(1687,0)(1692,15)(1694,15)(1703,35)(1705,35)(1713,50)(1725,50)(1733,35)(1746,35)(1748,35)(1755,10)(1758,0)(1765,0)(1771,0)(1776,0)(1780,15)(1788,15)(1790,10)(1793,0)(1809,0)(1814,0)(1819,15)(1821,15)(1826,32)(1850,32)(1858,10)(1861,0)(1877,0)(1882,0)(1887,15)(1889,15)(1898,35)(1900,35)(1908,50)(1920,50)(1928,35)(1941,35)(1943,35)(1950,10)(1953,0)(1960,0)(1980,0)(1985,15)(1987,15)(1996,35)(1998,35)(2006,50)(2008,50)(2021,70)(2071,70)(2075,60)(2079,50)(2148,50)(2161,70)(2211,70)(2246,100)(2276,100)(2296,120)(2306,120)(2322,80)(2330,50)(2340,0)(2360,0)};

\addplot [color=blue, style=dashed,line width=.5pt] coordinates {(0,0)(6,0)(11,0)(15,15)(23,15)(25,10)(28,0)(44,0)(49,0)(54,15)(56,15)(61,32)(85,32)(93,10)(96,0)(112,0)(117,0)(122,15)(124,15)(133,35)(135,35)(143,50)(155,50)(163,35)(176,35)(178,35)(185,10)(188,0)(195,0)(201,0)(206,0)(210,15)(218,15)(220,10)(223,0)(239,0)(244,0)(249,15)(251,15)(256,32)(280,32)(288,10)(291,0)(307,0)(312,0)(317,15)(319,15)(328,35)(330,35)(338,50)(350,50)(358,35)(371,35)(373,35)(380,10)(383,0)(390,0)(396,0)(401,0)(405,15)(413,15)(415,10)(418,0)(434,0)(439,0)(444,15)(446,15)(451,32)(475,32)(483,10)(486,0)(502,0)(507,0)(512,15)(514,15)(523,35)(525,35)(533,50)(545,50)(553,35)(566,35)(568,35)(575,10)(578,0)(585,0)(591,0)(596,0)(600,15)(608,15)(610,10)(613,0)(629,0)(634,0)(639,15)(641,15)(646,32)(670,32)(678,10)(681,0)(697,0)(702,0)(707,15)(709,15)(718,35)(720,35)(728,50)(740,50)(748,35)(761,35)(763,35)(770,10)(773,0)(780,0)(800,0)(805,15)(807,15)(816,35)(818,35)(826,50)(828,50)(841,70)(891,70)(895,60)(899,50)(968,50)(981,70)(1031,70)(1066,100)(1096,100)(1116,120)(1126,120)(1142,80)(1150,50)(1160,0)(1180,0)};
  \end{axis}
\end{tikzpicture}}
\caption{\DoubleNEDC\ (solid, black line) compared to \NEDC\ (dashed, blue line)}%
\label{fig:DoubleNEDC}
\end{figure}
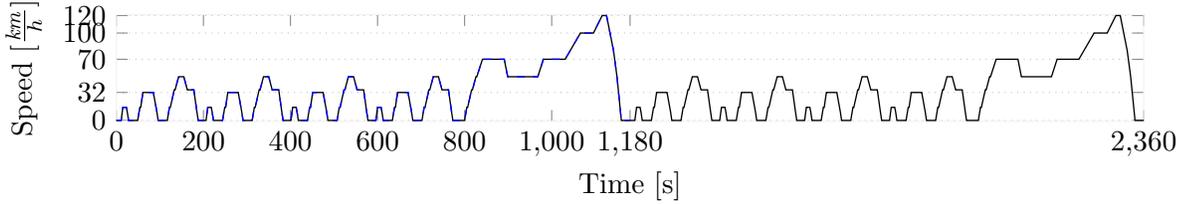
In contrast to all other test cycles in this paper, \DoubleNEDC\ produces two outputs: the first after $1180$ seconds, and the second after $2360$ seconds.
The first half of this cycle is a classical ``cold'' \NEDC\ (i.e., the engine cooled down before the test execution).
The second half is a ``hot'' \NEDC, since the cool-down phase was implicitly skipped.
Also, the \PreCon\ phase is skipped implicitly; there is only a single \EUDC\ (instead of three)  prior to the second \NEDC\@.
The inputs of both \NEDC s can be compared by using the retiming functions $\mathsf{id}$ and $r_d = \lambda\;t.\;t \textit{ mod }1180$. % chktex 35
$r_d$ maps time points of the global ``test case clock'' to the local time point in the \NEDC\ time domain.
\DoubleNEDC\ requires to adapt contract $\mathcal{C}$ to $\mathcal{C}_d$ by replacing robust cleanness by cleanness with synchronised retiming:
\revised{it entails input conformance $\conf^{\,\ret_d}_{\epsilon_I}$, which allows only $\mathsf{id}$ and $r_d$ as retiming functions, i.e.,
$\ret_d = \{ (\mathsf{id}, r_d) \}$.
Similarly, $\mathcal{C}_d$ must include the synchronisation retiming function $\synch_d(r_1, r_2) = (\mathsf{id}, r_d)$, which} enforces that both \DoubleNEDC\ outputs are compared to the single \NEDC\ output (independent of the input retimings $r_1$ and $r_2$).
%At this point, we could also define $\synch$ to be the identity function, as the only possible input retiming is $(\mathsf{id}, r_d)$.
%However, with our definition we will be able to change the input retiming family later on without disturbing the output comparison.

\item[Human \revised{Time} Imprecision Tolerance]
Diesel doping tests are executed by humans driving a car.
Humans tend to make mistakes when driving.
Mistakes can be the over- or undershooting of the targeted speed (the error is on the value axis), or accelerations or decelerations happening too early or too late (the error is a shift on the time axis), or superpositions thereof.
\revised{To compensate for both value and time errors, we use hybrid conformance.
As a formal contract, this would be expressed by a variant of $\mathcal{C}$, in which the input conformance is replaced by $\hybc_{\tau_I,\epsilon_I}$ for some $\tau_I > 0$.
For the purpose of demonstration, we will later analyse several such variants of $\mathcal{C}$, each variant with a unique value for $\tau_I$ and $\epsilon_I$, i.e., we consider the contract $\mathcal{C}(\tau_I, \epsilon_I)$ parametrised in $\tau_I$ and $\epsilon_I$.
%, $\mathcal{C}_p$ and $\mathcal{C}_d$, which are parametrised by $\tau_I$ and $\epsilon_I$.
%Derived from $\mathcal{C}$, the parametrised contract $\mathcal{C}(\tau_I, \epsilon_I)$ substitutes the conformance relation for inputs with $\hybc_{\tau_I,\epsilon_I}$.
Concrete values for $\tau_I$ and $\epsilon_I$ must be specified when using the contract.}

A test cycle that \revised{reflects drivings rich of} acceleration and deceleration phases\revised{---and is hence particularly prone to human driving errors---}is \SineNEDC~\cite{DBLP:conf/qest/BiewerDH19}.
\begin{figure}[t]
\showplot{\input{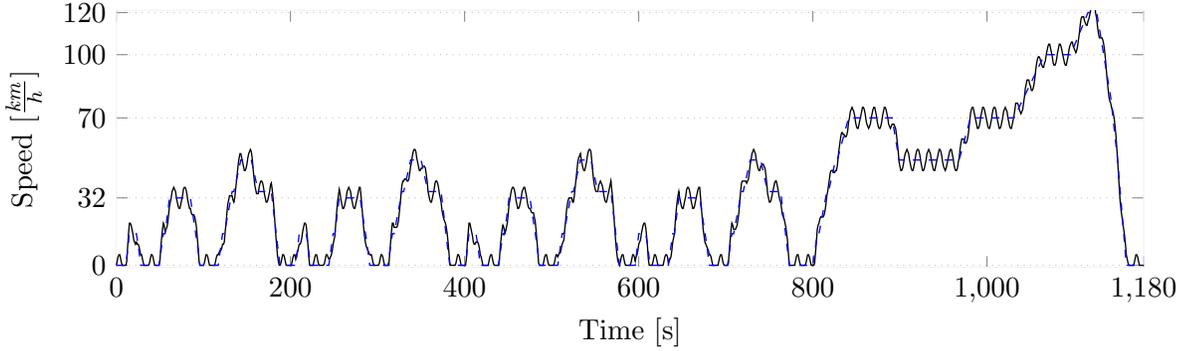}}
\caption{\SineNEDC\ (solid, black line) compared to \NEDC\ (dashed, blue line)}%
\label{fig:SineNEDC}
\end{figure}
\SineNEDC\ is defined as the \NEDC\ superimposed by a sine curve, formally
\[
    \text{\SineNEDC}(t) = \max\{0, \text{\NEDC}(t) + 5\sin(0.5t) \},
\]
with a maximum input deviation from \NEDC\  of $ 5\text{km/h}$, compare Fig.~\ref{fig:SineNEDC}.
\revised{We will evaluate \SineNEDC{} under several variants of $\mathcal{C}(\tau_I, \epsilon_I)$.}

Human \revised{time} imprecision is as yet not considered in test cycles \PermNEDC\ and \DoubleNEDC,
both cycles require a cycle-specific conformance predicate.
However, tolerance for human imprecision can be added to these predicates by means of conformance and retiming composition.
Let $\ret^{(1)}$ and $\ret^{(2)}$ be two families of retimings.
Then $\ret^{(2)} \circ \ret^{(1)} \eqdef \{ (r^{(2)}_1 \circ r^{(1)}_1  ,  r^{(2)}_2 \circ r^{(1)}_2 \ \mid \ (r^{(2)}_1, r^{(2)}_2) \in \ret^{(2)} \text{ and } (r^{(1)}_1 , r^{(1)}_2) \in \ret^{(1)}  \}$ is the component-wise function composition.
The definition for conformance composition is $\conf^{\,\ret^{(2)}}_{\tau_2,\epsilon} \circ \conf^{\,\ret^{(1)}}_{\tau_1,\epsilon} \eqdef \conf^{\,\ret^{(3)}}_{\infty,\epsilon}$, where $\ret^{(3)} = \ret^{(2)}_{\tau_2} \circ \ret^{(1)}_{\tau_1}$ composes the individual retimings.
The $\tau_1$- and $\tau_2$-constraints on $\ret^{(1)}$ and $\ret^{(2)}$ are applied before the composition.
It is not necessary to apply further timing constraints to the resulting retiming, hence we allow infinite $\tau$.
To overcome the human imprecisions for \PermNEDC\ and \DoubleNEDC,  we use the \revised{parametrised contracts $\mathcal{C}_p(\tau_I, \epsilon_I)$ and $\mathcal{C}_d(\tau_I, \epsilon_I)$, adaptations of $\mathcal{C}_p$ and $\mathcal{C}_d$}, with input conformances $\hybpc_{\tau_I,\epsilon_I} = \hybc_{\tau_I,\epsilon_I} \circ \conf^{\,\ret_p}_{\epsilon_I}$ and $\hybdc_{\tau_I,\epsilon_I} = \hybc_{\tau_I,\epsilon_I} \circ \conf^{\,\ret_d}_{\epsilon_I}$, respectively.
\revised{As for hybrid conformance in $\mathcal{C}(\tau_I, \epsilon_I)$, we will specify concrete $\tau_I$ and $\epsilon_I$ upon usage of the contracts.}
Notably, for \DoubleNEDC, this does not have effects on the output conformance, because \revised{$\synch_d$} does not consider the input retiming.
This is important, because outputs are available only at time points $1180$ and $2360$ and must not be moved to  time points different than that.
We do not compose $\conf^{\,\ret_a}_{\epsilon_I}$ and hybrid conformance, because $\ret_a$ allows any possible \NEDC\ permutation, which naturally reduces the effect of timing imprecisions.
\end{description}
\revised{Table~\ref{fig:contracts} summarises the contracts presented above. }

\begin{table}[t]
\revised{
\begin{alignat*}{7}
& \mathcal{C}\ & = \langle d_I,\ & d_O,\ && \tracec_{\epsilon_I},\ && \tracec_{\epsilon_O} & & \rangle \\
& \mathcal{C}_a\ & = \langle d_I,\ & d_O,\ && \conf^{\,\ret_a}_{\epsilon_I},\ && \tracec_{\epsilon_O} & & \rangle \\
& \mathcal{C}_p\ & = \langle d_I,\ & d_O,\ && \conf^{\,\ret_p}_{\epsilon_I},\ && \tracec_{\epsilon_O} & & \rangle \\
& \mathcal{C}_d\ & = \langle d_I,\ & d_O,\ && \conf^{\,\ret_d}_{\epsilon_I},\ && \tracec_{\epsilon_O},\ & \synch_d & \rangle \\
& \mathcal{C}(\tau_I, \epsilon_I)\ & = \langle d_I,\ & d_O,\ && \hybc_{\tau_I,\epsilon_I},\ && \tracec_{\epsilon_O} & & \rangle \\
& \mathcal{C}_p(\tau_I, \epsilon_I)\ & = \langle d_I,\ & d_O,\ && \hybc_{\tau_I,\epsilon_I} \circ \conf^{\,\ret_p}_{\epsilon_I},\ && \tracec_{\epsilon_O} & & \rangle \\
& \mathcal{C}_d(\tau_I, \epsilon_I)\ & = \langle d_I,\ & d_O,\ && \hybc_{\tau_I,\epsilon_I} \circ \conf^{\,\ret_d}_{\epsilon_I}, \ && \tracec_{\epsilon_O},\ & \synch_d & \rangle \\
\end{alignat*}
\vspace{-10mm}
\caption{Overview of the evaluated contracts. For $\mathcal{C}$, $\mathcal{C}_a$, $\mathcal{C}_p$ and $\mathcal{C}_d$, $\epsilon_I = 15$ km/h. For all contracts, $\epsilon_O = 180$ mg/km, $d_I(i_1, i_2) = \abs{i_1 - i_2}$ and $d_O(o_1, o_2) = \abs{o_1 - o_2}$. $\ret_a$, $\ret_p$ and $\ret_d$ are defined as explained in the text above.}%
\label{fig:contracts}
}
\end{table}

\subsection{Computing Parameters of Hybrid Conformance}
In some experiments we compute, for a fixed time threshold $\tau$ and two test cycles, the minimal value error $\epsilon$ such that hybrid conformance holds for the input.
The implementation of this computation is inspired by the the $\hyperstlstar$ formula $\varphi^{\hybc}_{\tau,\epsilon}$, i.e., it computes $\min_\epsilon\ \varphi^{\hybc}_{\tau,\epsilon}$ for two test executions $\pi_1$ and $\pi_2$.
The algorithm is sketched below.
%\begin{align*}
%\mathsf{localMin}(t_1, x_1, x_2, \tau) &= \min\ \{  \abs{x_1[t_1] - x_2[t_2]} \ \mid \ t_2 \in [t_1-\tau; t_1+\tau] \cap \mathcal{T}_2  \} \\
%\mathsf{globalMin}(x_1,x_2,\tau) &= \max \ \{  \mathsf{localMin}(t_1, x_1, x_2, \tau) \ \mid \ t_1 \in \mathcal{T}_1  \}\\
%\epsilon_\text{min}(x_1,x_2,\tau) &= \max \ \{ \mathsf{globalMin}(x_1,x_2,\tau), \mathsf{globalMin}(x_2,x_1,\tau) \}
%\end{align*}
\begin{align*}
 \mathsf{localMin}(t_1, \mu_1, \mu_2, \tau) &= \min\ \{  d_{\cal Y}(\mu_2[t_2], \mu_1[t_1]) \ \mid \ t_2 \in [t_1-\tau; t_1+\tau] \cap \dom(\mu_2)  \} \\
\mathsf{globalMin}(\mu_1,\mu_2,\tau) &= \max \ \{  \mathsf{localMin}(t_1, \mu_1, \mu_2, \tau) \ \mid \ t_1 \in \dom(\mu_1)  \}\\
\epsilon_\text{min}(\mu_1,\mu_2,\tau) &= \max \ \{ \mathsf{globalMin}(\mu_1,\mu_2,\tau), \mathsf{globalMin}(\mu_2,\mu_1,\tau) \}
\end{align*}
Here, $\mathsf{localMin}(t_1, x_1, x_2, \tau)$ computes the minimal $\epsilon$ for subformula
$\LTLpastfinally_{[0,\tau]}
d_{\cal Y}(X_{\pi_2},X^{*_1}_{\pi_1}) \leq \epsilon\vee
\LTLeventually_{[0,\tau]}d_{\cal Y}(X_{\pi_2},X^{*_1}_{\pi_1}) \leq \epsilon$,
where the value of $X^{*_1}_{\pi_1}$ is frozen at time $t_1$.
$\mathsf{globalMin}(x_1,x_2,\tau)$ reflects the \emph{Globally} and \emph{Freeze} operator: it finds the maximum by quantifying over all $t_1 \in \dom(x_1)$ and by calling $\mathsf{localMin}$ with the frozen time value $t_1$.
To reflect the complete formula, $\epsilon_\text{min}(x_1,x_2,\tau)$ returns the maximum of the conjuncts, which are the results of $\mathsf{globalMin}$ for both combinations of $x_1$ and $x_2$.

The computations for $\hybpc$ and $\hybdc$ proceed in two steps.
Both $\ret_p$ and $\ret_d$ are singleton sets; it is known which retiming must be applied first.
For two traces $\mu_1$ and $\mu_2$ and retiming $(r_1, r_2)$, there are shifted traces $\mu'_1 = \mu_1 \circ r_2$ and $\mu'_2 = \mu_2 \circ r_1$.
The minimal $\epsilon$ for hybrid conformance is given by
$\epsilon^{(r_1,r_2)}_\text{min}(\mu_1,\mu_2,\tau) = \max \ \{ \mathsf{globalMin}(\mu_1,\mu'_2,\tau), \mathsf{globalMin}(\mu_2,\mu'_1,\tau) \}$.

% FORMULA:
%\[
%\begin{array}{lll}
%\varphi^{\hybc}_{\tau,\epsilon} & = &
%\Big(\LTLglobally *_1
%\big(\LTLpastfinally_{[0,\tau]}
%d_{\cal Y}(X_{\pi_2},X^{*_1}_{\pi_1}) \leq \epsilon\vee
%\LTLeventually_{[0,\tau]}d_{\cal Y}(X_{\pi_2},X^{*_1}_{\pi_1}) \leq \epsilon\big)\Big) \wedge \\&&
%\Big(\LTLglobally *_2
%\big(\LTLpastfinally_{[0,\tau]}d_{\cal Y}(X_{\pi_1} ,X^{*_2}_{\pi_2}) \leq \epsilon \vee
%\LTLeventually_{[0,\tau]}d_{\cal Y}(X_{\pi_1} ,X^{*_2}_{\pi_2}) \leq \epsilon \big)\Big)
%\end{array}
%\]

\subsection{Test Results \& Verdicts}
We executed each of \NEDC, \PermNEDC, \DoubleNEDC\ and \SineNEDC\ two times.
We identify a concrete test execution by a suffix -1 or -2 to test cycle identifier (e.g., \NEDC-1 is the first and \NEDC-2 the second execution of \NEDC).
Raw data and the implementation of the analysis is available online~\cite{supplementary}.
For \NEDC, we combined the result of both executions to an average value of $182$ mg/km of \NOx.
Notably, the Euro 6b regulation (to which our car is supposed to conform to) allows at most $80$ mg/km, and the car under test is certified with $60.8$ mg/km according to its documentation. The car is 3 years old.

For doping detection, a test verdict is only meaningful if its input trace is conformant to that of the average \NEDC\ execution; otherwise, the test is trivially passed.
\revised{We will first evaluate \PermNEDC{} w.r.t. $\mathcal{C}_a$ and $\mathcal{C}_p$, \DoubleNEDC{} w.r.t. $\mathcal{C}_d$, and \SineNEDC{} w.r.t. $\mathcal{C}$.
To demonstrate the effects of hybrid conformance, we then analyse the experiments w.r.t.\ the parametrised variants of the contracts $\mathcal{C}$, $\mathcal{C}_p$ and $\mathcal{C}_d$, respectively.
By definition of the test cycles, the nominal value difference for \PermNEDC{} and \DoubleNEDC{} after retiming is zero, and for \SineNEDC{} it is $5$ km/h.
Though, due to human imprecisions, the actual differences are significantly higher.}
\begin{itemize}
\item The executions of \PermNEDC\ are shown in Fig.~\ref{fig:Perm1NEDCResults} and~\ref{fig:Perm2NEDCResults}. The amount of emitted \NOx\ were $392$ mg/km for \PermNEDC-1 and $316$ mg/km for \PermNEDC-2.
%Here, $\conf^{\,\ret_a}_{\epsilon_I}$ does hold for $\epsilon_I \geq 3$ km/h for both executions; one of them is an instance of detected doping according to contract $\mathcal{C}_a$, which defines $\epsilon_I = 15$ km/h, as we will explain.
$\conf^{\,\ret_a}_{\epsilon_I}$ does hold for $\epsilon_I \geq 3$ km/h for both executions; with contract $\mathcal{C}_a$, which defines $\epsilon_I = 15$ km/h, drastic deviations of \NOx\ can be detected as doping.
It is detected for \PermNEDC-1, i.e., the cleanness test fails, as the difference of \NOx\  (compared to \NEDC) is $210$ mg/km and hence greater than $\epsilon_O = 180$ mg/km defined by $\mathcal{C}_a$.
%Cleanness test \PermNEDC-1 fails (i.e., doping is detected), as the difference of \NOx\  (compared to \NEDC) is $210$ mg/km and hence greater than $\epsilon_O = 180$ mg/km defined by $\mathcal{C}_a$.
Test \PermNEDC-2 passes with an \NOx\ difference of $134$ mg/km which is within the contract.
\begin{figure}[t]
\showplot{\input{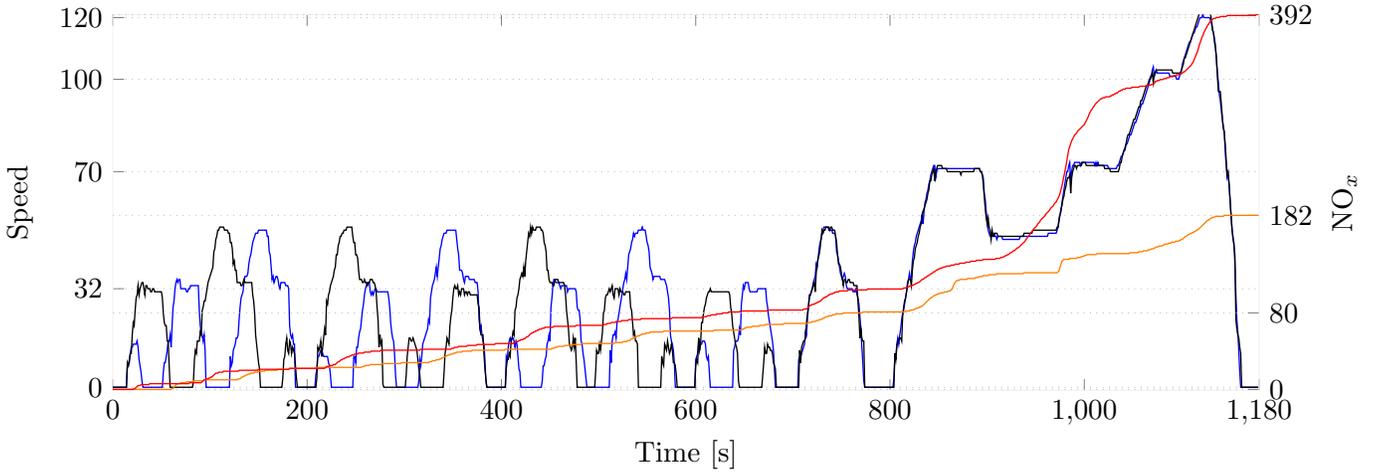}}
\caption{\PermNEDC-1 speed (black) and \NEDC\ speed (blue) in km/h, and accumulated \NOx\ for \PermNEDC-1 (red) and \NEDC\ (orange) in mg/km.}%
\label{fig:Perm1NEDCResults}
\end{figure}
\begin{figure}[t]
\showplot{\input{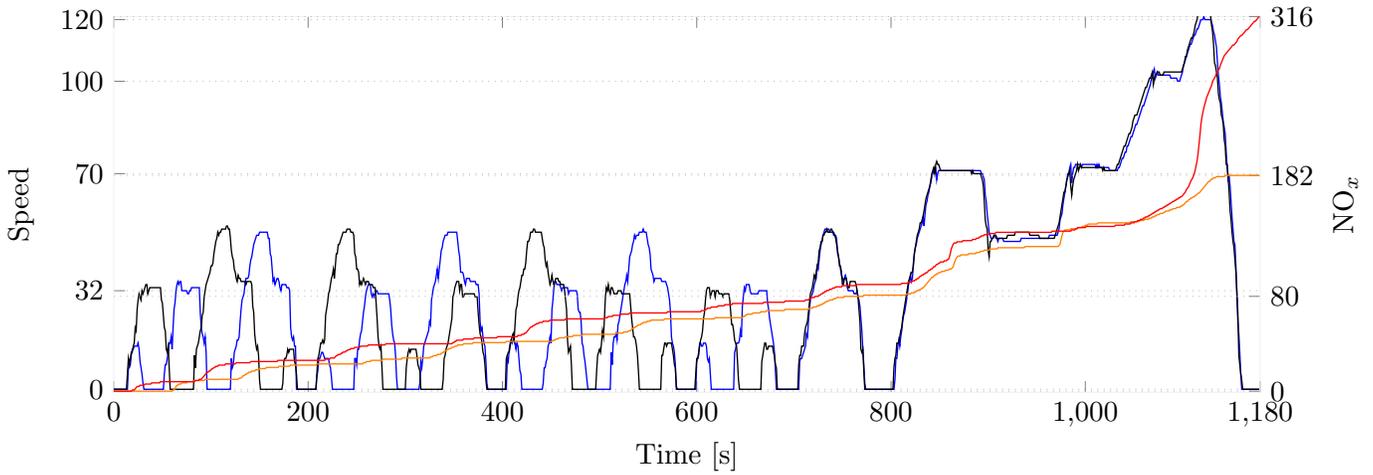}}
\caption{\PermNEDC-2 speed (black) and \NEDC\ speed (blue) in km/h, and accumulated \NOx\ for \PermNEDC-2 (red) and \NEDC\ (orange) in mg/km.}%
\label{fig:Perm2NEDCResults}
\end{figure}

With contract $\mathcal{C}_p$ and input conformance $\conf^{\,\ret_p}_{\epsilon_I}$, the test verdict for \PermNEDC-1 is different. $\conf^{\,\ret_p}_{\epsilon_I}$ would only hold for $\epsilon_I \geq 16$ km/h, which is above the contract defined threshold of $15$ km/h.
Hence, \PermNEDC-1 is not adduced and the test trivially passed.

\item \DoubleNEDC-1 and 2, shown in Fig.~\ref{fig:Double1NEDCResults} and~\ref{fig:Double2NEDCResults}, lead to an  average emission of $305$ mg/km, respectively $308$ mg/km of \NOx.
Executions of \DoubleNEDC\ are twice as long as regular \NEDC\ tests and produce two outputs.
The measurements for \DoubleNEDC-1 report $(229, 382)$ mg/km, for \DoubleNEDC-2 $(207, 408)$ mg/km.
\begin{figure}[t]
\showplot{\input{plots/double1nedcAvg}}
\caption{\DoubleNEDC-1 speed (black) and \NEDC\ speed (blue) in km/h, and accumulated \NOx\ for \DoubleNEDC-1 (red) and \NEDC\ (orange) in mg/km.}%
\label{fig:Double1NEDCResults}
\end{figure}
\begin{figure}[t]
\showplot{\input{plots/double2nedcAvg}}
\caption{\DoubleNEDC-2 speed (black) and \NEDC\ speed (blue) in km/h, and accumulated \NOx\ for \DoubleNEDC-2 (red) and \NEDC\ (orange) in mg/km.}%
\label{fig:Double2NEDCResults}
\end{figure}
To determine the verdicts for contract $\mathcal{C}_d$, we first check if $\conf^{\,\ret_d}_{\epsilon_I}$ holds.
This turns out not to hold for \DoubleNEDC-2, because we observed value deviations of up to $25$ km/h. This test is therefore trivially passed.
For \DoubleNEDC-1 all value deviations remain below the $15$ km/h threshold; this test run is thus to be considered relevant for output comparison.
According to the retiming synchronisation in $\mathcal{C}_d$, each of the outputs $229$ and $382$ must be compared to the \NEDC\ output $182$.
The output conformance is violated for the second output, with a difference of $200$ mg/km, exceeding the  allowed $\epsilon_O = 180$ mg/km threshold.
Hence, \DoubleNEDC-1 fails --- doping is detected.

\item During the test executions of \SineNEDC, we measured $483$ mg/km and $632$ mg/km.
The test progression is shown in Fig.~\ref{fig:Sine1NEDCResults} and~\ref{fig:Sine2NEDCResults}.
\begin{figure}[t]
\showplot{\input{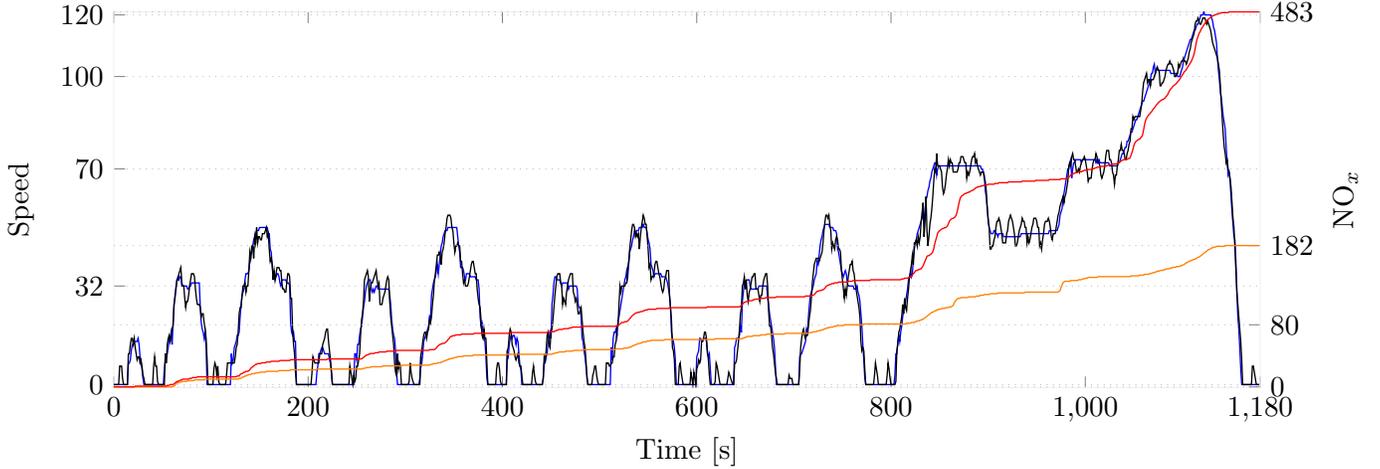}}
\caption{\SineNEDC-1 speed (black) and \NEDC\ speed (blue) in km/h, and accumulated \NOx\ for \SineNEDC-1 (red) and \NEDC\ (orange) in mg/km.}%
\label{fig:Sine1NEDCResults}
\end{figure}
\begin{figure}[t]
\showplot{
\input{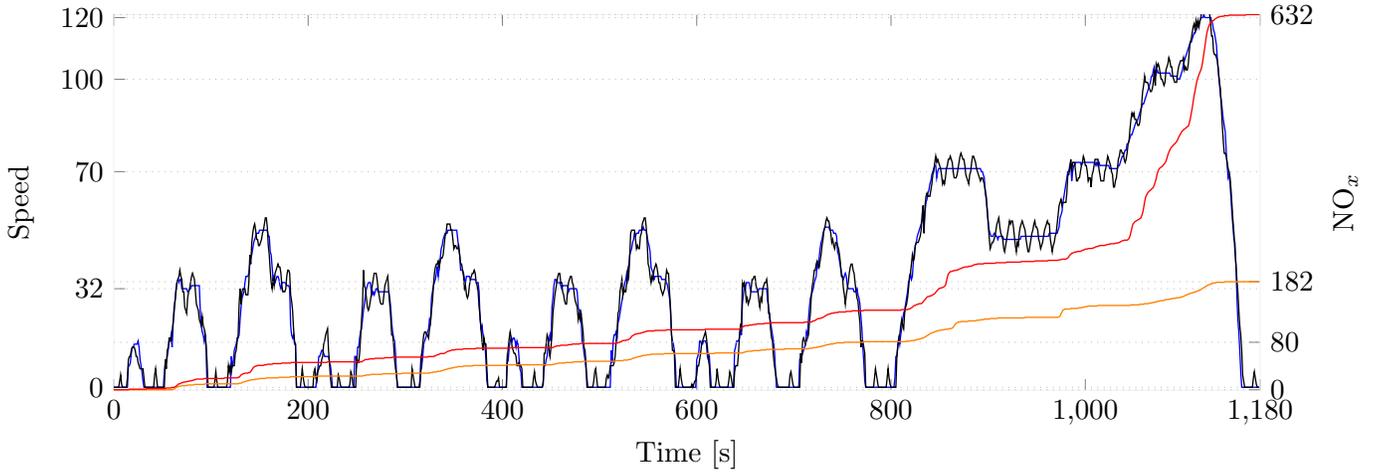} }
\caption{\SineNEDC-2 speed (black) and \NEDC\ speed (blue) in km/h, and accumulated \NOx\ for \SineNEDC-2 (red) and \NEDC\ (orange) in mg/km.}%
\label{fig:Sine2NEDCResults}
\end{figure}
In \SineNEDC-1, speed values deviate by up to $18$ km/h, which exceeds the $\epsilon_I$ threshold \revised{in $\mathcal{C}$}, so this test run is trivially passed.
\SineNEDC-2 respects the $\epsilon_I$ threshold because inputs never deviate by more than $13$ km/h.
Consequently, \SineNEDC-2 convicts our test car of doping, as the output difference of $450$ mg/km is $2.5$ times the allowed threshold $\epsilon_O$.

\item As discussed, we use hybrid conformance to compensate for human driving imprecisions.
In this context, Table~\ref{tbl:teCombinedNew} details the effect of a choice of $\tau$ on the maximal value error.
%\begin{figure}
%\includegraphics[width=0.9\textwidth]{images/placeholder-table}
%\caption{Hybrid Conformance Placeholder Table}
%\label{tbl:teCombinedNew}
%\end{figure}
\begin{table*}[t]\scriptsize
  \begin{tabular}{@{}@{\,}r@{\,}@{\,}c@{\,}@{\,}l@{\,}@{\:}l@{\:}@{\:}l@{\:}@{\:}l@{\:}@{\:}l@{\:}@{\:}l@{\:}@{\:}l@{\:}@{\:}l@{\:}@{\:}l@{\:}@{}}
    \textit{Test Name
    } & \textit{\revised{Contract}} & \textit{\revised{Input Conformance}} & $\tau\subi = 0$ & $\tau\subi = 1$ & $\tau\subi = 2$ & $\tau\subi = 3$ & $\tau\subi = 5$ & $\tau\subi = 10$ & $\tau\subi = 15$ & $\tau\subi = 20$  \\ \toprule
    \PermNEDC-1 & $\mathcal{C}_p(\tau_I, \epsilon_I)$ & \hybpc & $\epsilon\subi = 16$ & $\epsilon\subi = 16$ & $\epsilon\subi = 16$ & $\epsilon\subi = 11$ & $\epsilon\subi = 8$ & $\epsilon\subi = 8$ & $\epsilon\subi = 8$ & $\epsilon\subi = 8$ \\[1mm]
    \PermNEDC-2 & $\mathcal{C}_p(\tau_I, \epsilon_I)$ & \hybpc   & $\epsilon\subi = 11$ & $\epsilon\subi = 10$ & $\epsilon\subi = 7$ & $\epsilon\subi = 7$ & $\epsilon\subi = 7$ & $\epsilon\subi = 7$ & $\epsilon\subi = 7$ & $\epsilon\subi = 7$ \\[1mm]
    \DoubleNEDC-1 & $\mathcal{C}_d(\tau_I, \epsilon_I)$ & \hybdc   & $\epsilon\subi = 15$ & $\epsilon\subi = 12$ & $\epsilon\subi = 11$ & $\epsilon\subi = 9$ & $\epsilon\subi = 6$ & $\epsilon\subi = 6$ & $\epsilon\subi = 6$ & $\epsilon\subi = 6$ \\[1mm]
    \DoubleNEDC-2 & $\mathcal{C}_d(\tau_I, \epsilon_I)$ & \hybdc   & $\epsilon\subi = 25$ & $\epsilon\subi = 18$ & $\epsilon\subi = 10$ & $\epsilon\subi = 8$ & $\epsilon\subi = 8$ & $\epsilon\subi = 8$ & $\epsilon\subi = 8$ & $\epsilon\subi = 8$ \\[1mm]
    \SineNEDC-1 & $\mathcal{C}(\tau_I, \epsilon_I)$ & \hybc   & $\epsilon\subi = 18$ & $\epsilon\subi = 16$ & $\epsilon\subi = 15$ & $\epsilon\subi = 12$ & $\epsilon\subi = 9$ & $\epsilon\subi = 7$ & $\epsilon\subi = 6$ & $\epsilon\subi = 6$ \\[1mm]
    \SineNEDC-2 & $\mathcal{C}(\tau_I, \epsilon_I)$ & \hybc   & $\epsilon\subi = 13$ & $\epsilon\subi = 11$ & $\epsilon\subi = 9$ & $\epsilon\subi = 9$ & $\epsilon\subi = 7$ & $\epsilon\subi = 7$ & $\epsilon\subi = 7$ & $\epsilon\subi = 7$ \\
    \end{tabular}
%    \caption{Examples for $\hybc_{\te}(\NEDCdr,i)$ for $i\in\{\PowerNEDCdr,\SineNEDCdr\}$ and fixed $\tau$.}

%TORESOLVE

\caption{Comparison of minimal value thresholds $\epsilon\subi$ for fixed $\tau\subi$. Values are given as km/h and time in seconds.}%
\label{tbl:teCombinedNew}
\end{table*}
We fix a maximum value that we allow for the time offset $\tau\subi$.
For this $\tau\subi$ we analyse our dataset to find the minimal $\epsilon\subi$ such that for the combination of $\tau\subi$ and $\epsilon\subi$ the input traces under consideration satisfy the cycle-specific hybrid conformance.
For $\tau\subi = 0$ we get exactly the $\epsilon\subi$ for which the two traces satisfy $\conf^{\,\ret_p}_{\epsilon_I}$ (for \PermNEDC), $\conf^{\,\ret_d}_{\epsilon_I}$ (for \DoubleNEDC),  and $\tracec_{\epsilon_O}$ (for \SineNEDC).
Table~\ref{tbl:teCombinedNew} shows the computed $\epsilon\subi$ values for $\tau\subi = 0, 1, 2, 5,10,15$ and $20$ seconds.
As expected,  an increasing $\tau\subi$ induces the minimal $\epsilon\subi$ to decrease.
At $\tau\subi = 5$ the decrease in the value error reduces notably. This happens because the error is only partially caused by the incorrect timing of the driver.
From the values reported in Table~\ref{tbl:teCombinedNew} we see that if we allow deviation for the input $\tau\subi=2$, and keep $\epsilon\subi=15$, then we have that $\hybdc_{\tau_I,\epsilon_I}(\text{\NEDC}, \text{\DoubleNEDC-2})$ and $\hybc_{\tei}(\text{\NEDC}, \text{\SineNEDC-1})$ hold.
For time threshold $\tau\subi=3$ seconds $\hybpc_{\tau_I,\epsilon_I}(\text{\NEDC}, \text{\PermNEDC-1})$ also holds.
Thus, under hybrid conformance these pairs of traces will be considered in the cleanness test \revised{for contracts $\mathcal{C}_d(2, 15)$,  $\mathcal{C}(2, 15)$ and $\mathcal{C}_p(3, 15)$, respectively, while under their original contract and input} conformance they are to be dismissed.

\end{itemize}

\subsection{Evaluation and Discussion}
The amounts of emitted \NOx\ observed during our experiments provide clear indications of software doping regarding the car's emission cleaning system. The conformance-based contracts provide the formal basis for this verdict, as discussed above. We here complement this fact with a more intuitive explanation of the behaviour observed.

\begin{itemize}
\item
\PermNEDC\ slightly reorders \NEDC\ segments in the \UDC\ part of the test cycle.
During this part, the measured \NOx\ does not significantly differ from the \NEDC\ reference.
However, during the (unmodified) \EUDC\ part, the amount of emissions grows significantly.
It is very unlikely to find a physical explanation for the \NOx\ increase; and very likely, that the cleaning system is optimised specifically for the \NEDC\@.

\item
The \DoubleNEDC\ executions appear to reveal that the emission cleaning system optimisation can also rely on engine temperature or execution time instead of speed data.
Physically, many of the common emission cleaning techniques require a hot engine to work properly (and none of them requires a cold engine).
Therefore, a lower \NOx\ value can be expected if the \NEDC\ is run with a hot engine.
In our experiments, however, the \NOx\ emissions in the hot half are almost two times higher than in the initial cold part.
In other words, the emission cleaning performance is reduced after the first \NEDC\ execution. There is no physical explanation for this behaviour.
Inside the software, detecting the end of an \NEDC\ trip can be implemented very easily, for instance with a timer counting from $1180$ --- the length of \NEDC\ --- to zero.

\item
With \SineNEDC, we test the cleaning system during driving behaviour which is rich in accelerations and decelerations.
An increased amount of \NOx\ can possibly be explained by physical phenomena.
However, we measured an increase of factors $2.7$ and $3.5$; these numbers can be safely considered as too high for a trustworthy emission cleaning system.
\end{itemize}

\noindent
Software doping theory provides the basis for  detecting software behaviour violating a formal contract. In this, physical aspects of the emission cleaning system should be considered during the construction of test cases, and test cycles for which drastically higher emissions can be explained physically, should not be considered.
The test cycles we used for our experiments were picked with automotive expertise to avoid physically stressful cycles.
If test cases are generated automatically from a contract, the physical constraints could be captured by the contract.

The contracts we use for our experiments can be interpreted as very generous in favour of the manufacturers.
Input thresholds such as $15$ km/h and $2$ seconds appear as reasonable values, keeping all tests close enough to the original \NEDC\@.
For the output threshold, we use a very large deviation value of $180$ mg/km, which allows \NOx\ emissions to  almost double compared to the original \NEDC\ value.
Despite the generosity of the contracts, our experiments have been able to reveal doping for all experiments except \PermNEDC-2.

The analysis of the data shows that it is indeed necessary to not only consider a deviation of value, but to also allow for timing deviations.
Considering value and timing deviations offers a rich set of potential test cycles for doping tests and allows to realistically verify conformance of a test cycle and a reference cycle;
especially when the quality of the studied driving tests suffers from the human-caused input distortions.
In this regard, cleanness notions entailing hybrid conformance are more adequate than conformance notions demanding punctual test executions, such as robust cleanness.
Without hybrid conformance, more of the doping cases we have detected would slip through.

Finally, while hybrid conformance is central to the  case study considered here, our generic theory of conformance-based cleanness allows for using other conformance notions as appropriate for the CPS under test.

\section*{Acknowledgments}
The work of Sebastian Biewer and Holger Hermanns is supported  by the ERC Grant 695614 (POWVER) and by the Deutsche Forschungsgemeinschaft (DFG, German Research Foundation) grant 389792660 as part of TRR~248, see \url{https://perspicuous-computing.science}. The work of Sebastian Biewer is supported by the Saarbr\"ucken Graduate School of Computer Science. The work of Holger Hermanns is supported by  the Key-Area Research and Development Program Grant 2018B010107004 of Guangdong Province.
The work of Mohammad Reza Mousavi has been partially supported by the UKRI Trustworthy Autonomous Systems Node in Verifiability, with Grant Award Reference EP/V026801/1.

\bibliographystyle{alphaurl}
\bibliography{bibliography}

\end{document}